\keywords{History-determinism, finite automata, probabilistic automata}
\crefname{thm}{Theorem}{Theorems}   %
\Crefname{thm}{Theorem}{Theorems}  %
\crefname{lem}{Lemma}{Lemmas}   %
\Crefname{lem}{Lemma}{Lemmas}  %
\crefname{defi}{Definition}{Definitions}   %
\Crefname{defi}{Definition}{Definitions}  %
\crefname{cor}{Corollary}{Corollaries}   %
\Crefname{cor}{Corollary}{Corollaries}  %
\crefname{exa}{Example}{Examples}   %
\Crefname{exa}{Example}{Examples}  %
\crefname{obs}{Observation}{Observations}   %
\Crefname{obs}{Observation}{Observations}  %
\crefname{enumi}{}{}
\Crefname{enumi}{}{}
\newclass{\PTIME}{PTIME}
\newcommand{\alphabet}{\Sigma}
\newcommand{\trans}{\delta}
\newcommand{\Trans}{\Delta}
\newcommand{\emptyword}{\epsilon}
\newcommand{\infwords}{\alphabet^\omega}
\newcommand{\prio}{\rho}
\newcommand{\Ll}{\mathcal{L}}
\newcommand{\Lang}{\mathcal{L}}
\newcommand{\machlang}[1]{\Lang({#1})}
\newcommand{\buchi}{B\"uchi~}
\newcommand{\xra}[1]{\xrightarrow{#1}}
\newcommand{\lmb}{\lambda}
\newcommand{\incl}{\subseteq}
\newcommand{\nats}{\mathbb{N}\xspace}
\newcommand{\ints}{\mathbb{Z}\xspace}
\newcommand{\positives}{\mathbb{Z}_+\xspace}
\newcommand{\rats}{\mathbb{Q}\xspace}
\newcommand{\reals}{\mathbb{R}\xspace}
\newcommand{\positiverats}{\mathbb{Q}_+\xspace}
\newcommand{\nfa}{NFA\xspace}
\newcommand{\dfa}{DFA\xspace}
\newcommand{\pfa}{PFA\xspace}
\newcommand{\mach}[1]{\mathcal{#1}\xspace}
\renewcommand{\A}{\mach{A}}
\newcommand{\npa}{NPA\xspace}
\newcommand{\fnfa}{FNFA\xspace}
\newcommand{\unfa}{UFA\xspace}
\newcommand{\acc}[1]{\operatorname{ACC}(#1)\xspace}
\newcommand{\accsub}[2]{\operatorname{ACC}_{#1}(#2)\xspace}
\newcommand{\finamb}{finitely-ambiguous\xspace}
\newcommand{\kamb}[1]{$#1$-ambiguous\xspace}
\newcommand{\unamb}{unambiguous\xspace}
\newcommand{\mr}{positively resolvable\xspace}
\newcommand{\lmr}[1]{%
\ifthenelse{\isempty{#1}}%
{$\lambda$-resolvable\xspace}%
{$\frac{1}{#1}$-resolvable\xspace}%
}
\theoremstyle{plain}\newtheorem*{claim}{Claim} %
\begin{document}\sloppy

\title{Resolving Nondeterminism by Chance}

\author{Soumyajit Paul\lmcsorcid{0000-0002-7233-2018}}
\author{David Purser\lmcsorcid{0000-0003-0394-1634}}
\author{Sven Schewe\lmcsorcid{0000-0002-9093-9518}}
\author{Qiyi Tang\lmcsorcid{0000-0002-9265-3011}}
\author{Patrick Totzke\lmcsorcid{0000-0001-5274-8190}}
\author{Di-De Yen\lmcsorcid{0000-0003-0045-9594}}

\address{University of Liverpool, UK}	%
\email{\{soumyajit.paul, d.purser, s.schewe, qiyi.tang, totzke, d.d.yen\}@liverpool.ac.uk}  %

\begin{abstract}
  \noindent History-deterministic automata are those in which nondeterministic choices can be correctly resolved stepwise: there is a strategy to select a continuation of a run given the next input letter so that if the overall input word admits some accepting run, then the constructed run is also accepting.

Motivated by checking qualitative properties in probabilistic verification, we consider the setting where the resolver strategy can randomise and only needs to succeed with lower-bounded probability. We study the expressiveness of such stochastically-resolvable automata as well as consider the decision questions of whether a given automaton has this property.
In particular, we show that it is undecidable to check if a given NFA is $\lambda$-stochastically resolvable. This problem is decidable for finitely-ambiguous automata.
We also present complexity upper and lower bounds for several well-studied classes of automata for which this problem remains decidable.
\end{abstract}

\maketitle

\section{Introduction}\label{sec:intro}

Many successful verification techniques rely on automata representations for specifications that capture the languages of acceptable system traces.
These automata models are typically nondeterministic: any given trace may give rise to multiple runs of the automaton, and is accepted if at least one of these runs is successful.
This enables succinct representations but is also a major source of complexity due to costly intermediate determinisation steps.

It is therefore natural to put extra constraints on the extent to which an automaton allows nondeterministic choice to avoid determinisation.
One way to do this is to bound the level of ambiguity:
an automaton is $k$-ambiguous if on every accepted word it has at most $k$ many distinct accepting runs (see e.g.~\cite{schmidt1978succinctness,OKHOTIN201215,DAVIAUD202178}).
Unambiguous automata have proven to be useful for model-checking of Markov chain models \cite{BaierK23,LPST25}.
An orthogonal restriction on nondeterminism, which has been extensively studied in recent years
(cf.~\cite{HP06,Col09,KS15,BK18,BL19,RK19,Sch20,BL21,BL22,LZ22,LPST25}), is to demand that choices can be resolved ``on-the-fly'':
An automaton is \emph{history-deterministic} (HD) if there is a strategy to select a continuation of a run given a next letter, so that if the overall word admits some accepting run then the constructed run is also accepting.
This condition is strict in the sense that the resolver strategy must guarantee to produce an accepting run if one exists. In some scenarios less strict guarantees may suffice,
for example when model-checking Markov Decision Processes \cite{HahnPSSTW20,HahnLST15,Baier2018,baier-mdp1}.
This motivates the study of automata that can be resolved in a weaker sense, namely, where the resolver strategy can randomise and is only required to succeed with a lower-bounded confidence.

A \emph{stochastic resolver} for some nondeterministic automaton $\mach{A}$
is a function that, for any finite run and input letter, gives a distribution over the possible transitions.
This produces a probabilistic automaton $\mach{P}$ that assigns a probability of acceptance to every input word and, together with a threshold $\lambda>0$, defines the language $\Lang(\mach{P}_{\geq\lambda})$ consisting of all words whose probability of acceptance is at least $\lambda$ (see, e.g.,~\cite{paz2014introduction}).
Unless stated otherwise, we will consider \emph{memoryless} stochastic resolvers, which base their decisions solely on the last state of the given run and the input letter.
Fixing a memoryless resolver turns $\mach{A}$ into a probabilistic automaton %
$\mach{P}$ over the same set of states and where transitions that appear positively in $\mach{P}$ are also contained in $\mach{A}$.
Consequently then, for every threshold $\lambda > 0$, the language $\Lang(\mach{P}_{\geq\lambda})$ is included in that of $\mach{A}$. 
We now ask which automata admit positional resolvers that also guarantee the inclusion in the other direction.

An automaton $\mach{A}$ 
called \emph{$\lambda$-memoryless stochastically resolvable} (or simply $\lambda$-resolvable)
if there exists a memoryless stochastic resolver with $\Lang(\mach{P}_{\geq\lambda}) = \Lang(\mach{A})$.
An automaton is \emph{positively memoryless stochastically resolvable} (or simply \mr) if it is $\lambda$-resolvable for some $\lambda>0$.
By varying the threshold $\lambda$, stochastic resolvability defines a spectrum where $0$-resolvable corresponds to unrestricted nondeterminism and, on finite words, $1$-resolvable coincides with history-determinism. See \cref{fig:intro} for distinguishing examples. 

\begin{figure}
\begin{subfigure}[t]{0.5\textwidth}
\centering
\scalebox{0.9}{
\begin{tikzpicture}
\centering
[node distance=160pt]
\tikzstyle{state}=[draw,shape=circle,minimum size=20pt]
\tikzset{invisible/.style={minimum width=0mm,inner sep=0mm,outer sep=0mm}}

\node[state,initial, initial text=] (p) at (0,0) {$q_0$};
\node[state] (q) at (1.5,0.5) {$p$};
\node[state] (r) at (1.5,-0.5) {$q$};
\node[state,accepting] (qf) at (3,0) {$q_f$};

\path
(p)
    edge [->] node[above] {$a$} (q)
(p)
    edge [->] node[above] {$a$} (r)
(q)
    edge [->] node[above] {$b$} (qf)
(r)
    edge [->] node[above] {$c$} (qf)    

;

\end{tikzpicture}
} \caption{\nfa $\mach{A}$ is $1/2$-resolvable.}%
\label{fig:eg_nfa_a}
\end{subfigure}\hfill
\begin{subfigure}[t]{0.5\textwidth}
\centering
\scalebox{0.9}{
\begin{tikzpicture}
\centering
[node distance=160pt]
\tikzstyle{state}=[draw,shape=circle,minimum size=20pt]

\tikzset{invisible/.style={minimum width=0mm,inner sep=0mm,outer sep=0mm}}

\node[state,initial, initial text=] (q0) at (0,0) {$q_0$};
\node[state,accepting] (qf) at (2,0) {$q_f$};
\node[invisible] (r) at (1.5,-0.7) {};

\path
(q0)
    edge [->] node[above] {$b$} (qf)    
(q0)
    edge [->, loop above] node[above] {$a,b$} (q0)   

;

\end{tikzpicture}
}
\caption{\nfa $\mach{B}$ is not positively resolvable.} 
\label{fig:eg_nfa_b}
\end{subfigure}
\caption{Two unambiguous \nfa (all missing transitions implicitly go to a non-accepting sink).
The one on the left is $\lambda$-resolvable for all $\lambda\le 1/2$. The one on the right is not positively resolvable, because no matter the choice of transition probability, the probability of $b^n$ tends to zero as $n$ grows.}
\label{fig:intro}
\end{figure}

\paragraph*{Our Contributions}
We focus on automata over finite words
and consider the decision problems whether a given automaton is resolvable.
We distinguish the two variants of this problem, asking whether the automaton is positively resolvable, or whether it is $\lambda$-resolvable for a given value of $\lambda$. Our results are as follows, see also Table~\ref{tab:lambda-SR} for a summary.
\begin{itemize}
	\item
		We introduce the quantitative notion of $\lambda$-memoryless stochastic resolvability, %
        and show $\lambda$-resolvability induces a strict hierarchy of automata with varying parameter $\lambda\in(0,1)$.

	\item We show that checking $\lambda$-resolvability is undecidable
		already for \nfa, on finite words, and therefore also for automata on infinite words regardless of the accepting condition.

	\item We complement this by showing that both positive resolvability and $\lambda$-resolvability remain decidable for 
		finitely-ambiguous automata.

	\item We present complexity upper and lower bounds for several well-studied variations of finitely-ambiguous automata (summarised in \cref{tab:lambda-SR}). 
		In particular, checking positive resolvability is \PSPACE-complete for finitely-ambiguous automata, \NL-complete for unambiguous ($1$-ambiguous) automata, and in the polynomial hierarchy for (unrestricted) unary automata. 
        Checking $\lambda$-resolvability is \PSPACE-hard even for $k$-ambiguous automata (and \coNP-hard over a unary alphabet). 
        \item We show that our decidability results for finite-ambiguous automata, and undecidability results in the general case carry over to $\omega$-regular automata.
\end{itemize}

\begin{table}[t]
	\centering
	\renewcommand{\arraystretch}{1.5} %
	\resizebox{\linewidth}{!}{%
		\begin{tabular}{|c|l|c|c|c|}
			\hline
			\multicolumn{2}{|c|}{}& \textbf{unambiguous}  & \textbf{finitely-ambiguous} & \textbf{general} \\
			\hline
			{\multirow{2}{*}{\makecell{\textbf{PR}}}} & unary     &  \NL~(Thm.~\ref{thm:1-ambiguous-non-fixed})  &
			\multicolumn{2}{c|}{\coNP-hard (Thm.~\ref{thm:unaryhardness}) $\Sigma_2^{P}$ (Thm.~\ref{thm:unarysigma2})}
			\\
			\cline{2-5}
			& non-unary & \NL-complete (Thms.~\ref{thm:ufa-NL-hard}~and~\ref{thm:1-ambiguous-non-fixed}) &  \multicolumn{1}{c|}{$\PSPACE$-complete (Thms.~\ref{thm:SR-fixed-k}~and~\ref{thm:SR-fnfa-pspapce})} & open \\
			\hline
			{\multirow{2}{*}{\makecell{\textbf{$\lambda$R}}}} & unary     &  \PTIME~(Thm.~\ref{thm:ufa-lambda-resolvable}) &
			$\coNP$-hard (Thm.~\ref{thm:unaryhardness}) decidable  (Thm.~\ref{thm:constant-fnfa-lambda-resolvable}) & open \\
			\cline{2-2}\cline{3-5}
			& non-unary & \NL-hard (Thm.~\ref{thm:ufa-NL-hard}) \PTIME~(Thm.~\ref{thm:ufa-lambda-resolvable}) & $\PSPACE$-hard (Thm.~\ref{thm:SR-fixed-k}) decidable (Thm.~\ref{thm:constant-fnfa-lambda-resolvable}) & undecidable (Thm.~\ref{thm:undecidable}) \\
			\hline
		\end{tabular}
	}~\\
	\caption{Deciding positive resolvability (PR) and $\lambda$-resolvability ($\lambda$R) of NFAs where $\lambda \in (0,1)$.}
	\label{tab:lambda-SR}
\end{table}

\paragraph*{Related Work}
The notion of history-determinism was introduced independently, with slightly
different definitions, by Henzinger and Piterman~\cite{HP06} for solving games without determinisation, by Colcombet~\cite{Col09} for cost-functions, and by Kupferman, Safra, and Vardi~\cite{KSV06} for recognising derived tree languages of word automata.
For $\omega$-regular automata,
these variations all coincide~\cite{BL19}.
For coB\"uchi-recognisable languages, history-deterministic automata can be
exponentially more succinct than any equivalent deterministic ones~\cite{KS15}.
Checking whether a given automaton is HD is decidable in polynomial time
for B\"uchi and coB\"uchi automata \cite{BK18,KS15} and more generally also for parity automata of any fixed parity index~\cite{LP2025}.
History-determinism has been studied for richer automata models, such as
pushdown automata~\cite{LZ22,GJLZ21} timed automata \cite{BHLST2023,BHLST2022,HLT2022} and quantitative automata~\cite{BL21,BL22}.

Recently, and independently of us, Henzinger, Prakash and Thejaswini \cite{henzingerEtAl2025} introduced classes of \emph{stochastically resolvable} automata. %
The main difference of their work compared to ours is that they study the qualitative setting, where resolvers are required to succeed \emph{almost-surely}, with probability one.
In the terminology introduced above these are $1$-resolvable automata.
They study such automata on infinite words
and compare their expressiveness, relative succinctness, against history-deterministic and semantically deterministic automata. Over finite words, these notions coincide: an NFA is $1$-stochastically resolvable iff it is $1$-memoryless stochastically resolvable iff it is semantically deterministic iff it is history-deterministic.
They also consider the complexity of determining whether an automaton is $1$-resolvable and establish that this is in polynomial time for safety automata, and \PSPACE-complete for reachability and weak automata, and remains open for more general classes.

For a fixed probabilistic automaton, deciding whether a given $\lambda$ is a lower bound corresponds to the undecidable emptiness problem~\cite{paz2014introduction}, while asking if such a $\lambda$ exists relates to the undecidable zero-isolation problem~\cite{Gimbert2010}. These problems become more tractable under restrictions such as bounded ambiguity~\cite{CzerwinskiLMPW22,DAVIAUD202178,FijalkowGKO15,FIJALKOW2022}. In the unary case, checking if $\lambda$ is a lower bound is equivalent to the long-standing open positivity problem for linear recurrence sequences~\cite{OuaknineW14} and is also hard for Markov chains~\cite{Vahanwala24}. In contrast, our setting allows the probabilistic automaton to vary, asking whether some choice of probabilities satisfies the desired properties.

The unary case can be seen as a synthesis problem on parametric Markov chains. Consider the distribution transformer perspective~\cite{AghamovBKNOPV25,Baier0JKLOPW022}, where a Markov chain induces a sequence of distributions over states. Then the $\lambda$-resolvability problem asks, for universal NFA, whether there exists a parameter assignment such that the sequence of distributions is ``globally'' in the semi-algebraic set $S =\{x \in [0,1]^Q | \sum_{q\in Q} x_q = 1 \text{ and } \sum_{q \in F} x_q \ge \lambda\}$ (distributions with probability mass at least $\lambda$ in accepting states). When the NFA is non-universal the ``globally'' condition must be adjusted to the eventually periodic pattern of the NFA. Existing work characterises, up to a set of measure zero, parameters that satisfy prefix-independent properties or ``eventually'' properties such as Finally Globally in $S$~\cite{Baier0JKLOPW022}. For general parametric questions, there is always the special case where the parameters are redundant, which encodes the aforementioned positivity problem~\cite{Vahanwala24}. This special case does not necessarily arise in our setting, as the problem has \emph{all} transitions parameterised independently (only the structure is fixed), which leaves hope for $\lambda$-resolvability.

\section{Preliminaries}
In this paper, we use $\nats$, $\ints$, $\positives$, $\rats$, $\positiverats$, and $\reals$ to denote the sets of non-negative integers, integers, positive integers, rational numbers, positive rational numbers, and real numbers, respectively.
For $S \in \{\ints, \rats, \reals\}$ and two numbers $i, j \in S$ with $i < j$, the notation $[i,j]_{S}$ (resp., $(i,j)_{S}$) denotes the set $\{d \in S \mid i \leq d \leq j\}$ (resp., $\{d\in S\mid i< d <j\}$). The half-open intervals $(i, j]_S$ and $[i, j)_S$ are defined analogously. We also write $[j]_{S}$ to denote $[1,j]_{S}$.  The subscript $S$ is omitted if it is clear from the context. 

\subparagraph{Nondeterministic Finite Automata}
A {\em nondeterministic finite automaton} (\nfa)  $\mach{A}$ consists of
a finite alphabet $\Sigma$,
a finite set of states $Q$,
an initial state $q_0$,
a set of accepting states $F\subseteq Q$, and
a set of transitions $\Delta\subseteq Q\times \Sigma \times Q$.
We also use $p \xrightarrow{\sigma} q$ to signify that $(p,\sigma, q)\in\Delta$.
$\mach{A}$ is \emph{unary} if $|\Sigma| = 1$; a \emph{deterministic finite automaton} (\dfa) if, for every $(p, \sigma) \in Q \times \Sigma$, there exists at most one state $q \in Q$ such that $p \xrightarrow{\sigma} q$; and \emph{complete} if, for every $(p, \sigma) \in Q \times \Sigma$, there exists some $q \in Q$ such that $p \xrightarrow{\sigma} q$.

Given a word $w = \sigma_1 \dots \sigma_n \in \Sigma^*$, a \emph{run} $\pi$ of $\mach{A}$ on $w$ is a sequence of transitions $\tau_1 \dots \tau_n$, where each transition $\tau_i = (p_i, \sigma_i, q_i) \in \Delta$ for $i \in [1, n]$, and $q_i = p_{i+1}$ for $i \in [1, n-1]$. The run $\pi$ is \emph{accepting} if $p_1 = q_0$ and $q_n \in F$. 
A word is \emph{accepted} by $\mach{A}$ if there exists an accepting run on it. We denote by $\machlang{\mach{A}}$ the set of words accepted by $\mach{A}$. Additionally, we use $\accsub{\mach{A}}{w}$ to denote the set of all accepting runs of $\mach{A}$ on $w$. We omit the subscript $\mach{A}$ when it is clear from the context.  
An \nfa $\mach{A}$ is {\em \kamb{k}} if, for every word $w \in \Sigma^*$, it holds that $|\acc{w}| \leq k$. It is \unamb when $k=1$. Moreover, $\mach{A}$ is {\em \finamb} if it is \kamb{k} for some $k \in \positives$. We use \unfa and \fnfa to refer to unambiguous and finitely-ambiguous \nfa, respectively.

For $q \in Q$, let $\A_q$ denote the automaton $\A$ with $q$ as its initial state, and for $S \subseteq \Trans$, let $\A_S$ denote the automaton obtained from $\A$ by restricting its transitions to $S$. For $\tau \in S$, we often say $\tau$ is nondeterministic in $S$ to mean $\tau$ is nondeterministic in $\A_S$, that is, for $\tau = (p,\sigma,q)$ there does not exist $(p,\sigma,q')\in S$ with $q\ne q'$.
$\Trans$ induces a transition function $\delta_{\A} : 2^Q \times \Sigma^* \mapsto 2^Q$, where $\delta_{\mach{A}}(Q',w)$ is the set of all states where runs on word $w$ can end up in when starting from any state $q \in Q'$. Formally, for all $Q' \incl Q$, $\delta_{\mach{A}} (Q',\emptyword) = Q'$, and for all $ w \in \Sigma^*$ and $ \sigma \in \Sigma$, $\delta_{\mach{A}} (Q',w\cdot\sigma) = \{q \mid (p,\sigma,q) \in \Trans \text{ and } p \in \delta_{\mach{A}}(Q',w)\}$. When $Q' = \{p\}$, we often abuse notation and write $\delta_{\mach{A}}(p, w)$ instead of $\delta_{\mach{A}}(\{p\}, w)$, and we omit the subscript $\mach{A}$ when it is clear from the context. 
An \nfa is \emph{trim} if every state $q \in Q$ is reachable from the initial state, and can reach an accepting state, i.e., $q \in \delta(q_0,w)$ and $\delta(q,w')\cap F\neq \emptyset$ for some words $w,w'$. 

\subparagraph{Probabilistic Finite Automata}
A {\em probabilistic finite automaton} ({\pfa}) $\mach{P}$ is an extension of nondeterministic finite automaton that assigns an acceptance probability to each word.
Specifically, $\mach{P}$ is an \nfa augmented with an assignment  $\Theta: \Delta \rightarrow [0,1]_{\reals}$, where for every $(p,\sigma) \in Q\times \Sigma$, 
$\sum_{\tau=(p,\sigma,q) \in \Delta}\Theta(\tau) = 1$.
Accordingly, $\mach{P}$ is a 6-tuple of the form $(\Sigma,Q,q_0,\Delta,F, \Theta)$.
Given a transition $\tau = (p, \sigma, q)$, the value $\Theta(\tau)$ represents the probability of transitioning from state $p$ to state $q$ upon reading the symbol $\sigma$.  
With a slight abuse of language, we also call $(p, \sigma, q, d)$ a transition of $\mach{P}$, where $d = \Theta(p, \sigma, q)$.
We say that $\mach{P}$ is {\em based on} the \nfa $\mach{A}=(\Sigma,Q,q_0,\Delta,F)$ and call this its \emph{underlying} \nfa.\footnote{The condition $\sum_{\tau = (p, \sigma, q) \in \Delta} \Theta(\tau) = 1$ may not hold for some pairs $(p, \sigma)$ if the underlying \nfa is not complete. Any \nfa can be made complete by introducing a non-accepting sink state. Thus, throughout this paper, we assume that the underlying \nfa of every \pfa is implicitly made complete.}

We use $\mach{P}(w)$ to denote the probability that $\mach{P}$ accepts word $w$ where: 
\[
\mach{P}(w) := \sum_{\pi = \tau_1 \dots \tau_n \in \acc{w}} \prod_{i=1}^{n} \Theta(\tau_i).
\]
For every threshold $\lambda \in [0,1]_{\reals}$, let $\Lang(\mach{P}_{\geq \lambda})$ denote the set of all words accepted with probability at least $\lambda$, that is:
$
\Lang(\mach{P}_{\geq \lambda}) := \{w \in \Sigma^* \mid \mach{P}(w) \geq \lambda\}.
$

We also use $\mach{P}_{\geq \lambda}$ to refer to the \pfa $\mach{P}$ with a given threshold $\lambda$. Similarly, we define $\mach{P}_{\leq \lambda}$, $\mach{P}_{> \lambda}$, and $\mach{P}_{< \lambda}$.
A \pfa $\mach{P}$ is called \emph{simple} if, for every transition $(p, a, q, d)$ of $\mach{P}$, the probability $d$ belongs to $\{0, \frac{1}{2}, 1\}$. Since transitions with $d = 0$ do not affect the language accepted by a \pfa, we assume $d \neq 0$ for all transitions in the remainder of this paper.

\subparagraph{Stochastic Resolvers}
Given an \nfa $\A$, a (memoryless) \emph{stochastic resolver} $\mach{R}$ for $\A$ resolves nondeterminism during the run on a word on-the-fly, randomly, and positionally. Formally, $\mach{R}: \Delta \mapsto [0,1]$, such that  
$\sum_{\tau=(p,\sigma,q)  \in \Delta} \mach{R}(\tau) = 1$ holds for every pair $(p,\sigma)\in Q\times\Sigma$.
A resolver $\mach{R}$ for $\A$ turns it into a \pfa $\mach{P}^{\A}_{\mach{R}}$.
We call  $\{ \tau \in \Trans \mid \mach{R}(\tau) > 0\}$ the \emph{support} of $\mach{R}$. 

Given a threshold $\lambda \in [0,1]$, we say that $\mach{A}$ is {\em $\lambda$-memoryless stochastically resolvable} (or simply {\em \lmr{}}) if there exists a resolver $\mach{R}$ for $\A$ such that 
$\Lang(\mach{P'}_{\geq \lambda}) = \Lang(\mach{A})$, where $\mach{P'}=\mach{P}^{\A}_{\mach{R}}$.
An automaton $\mach{A}$ is {\em positively memoryless stochastically resolvable} (or simply {\em \mr}) if it is \lmr{} for some $\lambda \in (0,1]$.

\section{\texorpdfstring{$\lambda$}{Lambda}-Resolvability}
\label{sec:lambda}

In this section we demonstrate the importance of the threshold $\lambda$: we first show that the choice of this threshold defines a spectrum of automata classes in terms of their resolvability. We then turn to the main negative result, that checking $\lambda$-resolvability is undecidable.

\begin{thm}\label{thm:lambda-spectrum}
For every $\lmb \in (0,1)_{\rats}$ there exists a unary NFA $\A_{\lmb}$ such that $\A_{\lmb}$ is $\lmb$-resolvable but not $(\lmb + \epsilon)$-resolvable for any $\epsilon>0$.
\end{thm}

\begin{figure}
\centering
\scalebox{0.85}{
\begin{tikzpicture}
\tikzset{
yscale=1,>=latex',shorten >=1pt,node distance=2.5cm,
every state/.style={inner sep =.04cm,minimum size=0.1cm},
accstate/.style={ double,draw=black, circle,line width=0.7pt,double distance=1.5pt,inner sep =.06cm,outer sep=1pt, minimum size=1},
on grid,auto,initial text = {}	
}

\node[state,initial left] 	   (s)  {\Large$q_0$};
\node[accstate, right of=s]  (q2) {\textcolor{teal}{$q^2_{\{1,2\}}$}};
\node[accstate, above =1.5 of q2]      (q1) {\textcolor{teal}{$q^1_{\{1,2\}}$}};
\node[state, below=1.5 of q2] (q3) {$q^3_{\{1,2\}}$};

\node[state, right of= q1]  (q11) {$q^1_{\{2,3\}}$};
\node[accstate, right of= q2]  (q22) {\textcolor{blue}{$q^2_{\{2,3\}}$}};
\node[accstate, right of= q3]  (q33) {\textcolor{blue}{$q^3_{\{2,3\}}$}};

\node[accstate, right of= q11]  (q111) {\textcolor{red}{$q^1_{\{3,1\}}$}};
\node[state, right of= q22]  (q222) {$q^2_{\{3,1\}}$};
\node[accstate, right of= q33]  (q333) {\textcolor{red}{$q^3_{\{3,1\}}$}};

\path[->] (s) edge  [above]node {\color{black} $a$} (q1);
\path[->] (s) edge  [above]node {\color{black} $a$} (q2);
\path[->] (s) edge  [above]node {\color{black} $a$} (q3);

\path[->] (q1) edge  [above]node {\color{black} $a$} (q11);
\path[->] (q2) edge  [above]node {\color{black} $a$} (q22);
\path[->] (q3) edge  [above]node {\color{black} $a$} (q33);

\path[->] (q11) edge  [above]node {\color{black} $a$} (q111);
\path[->] (q22) edge  [above]node {\color{black} $a$} (q222);
\path[->] (q33) edge  [above]node {\color{black} $a$} (q333);

\end{tikzpicture}
}
\caption{The automaton $\A_{\lambda}$ for $\lambda = \frac{2}{3}$, in the proof of \cref{thm:lambda-spectrum}.
}%
\label{fig:lambda-sr-hierrarchy}
 \end{figure}
\begin{proof}
Let $\lmb = \frac{m}{n}$ where $m,n \in \positives$ and $m < n$. The NFA $\A_{\lmb}$ over unary alphabet $\{a\}$ is as follows. In the initial state $q_0$ upon reading the letter $a$, $\A_{\lmb}$ nondeterministically goes to one of $n$ branches, each with $n\choose m$ states. Consider an arbitrary ordering of all $m$ sized subsets of $[n]$, $S_1,\dots, S_{n\choose m}$. For any branch $j \in [n]$ and $i \in  [{n\choose m}] $, the $i$-th state on the $j$-th branch - $q^j_{S_i}$,  is accepting iff $j \in S_i$. Refer to \cref{fig:lambda-sr-hierrarchy} for an example with $\lambda=\frac{2}{3}$.
The automaton $\A_{\lmb}$ is $m$-ambiguous, where each word in $\{a^i| i \in [{n\choose m}]\}$ has exactly $m$ accepting runs. Moreover, $\A_{\lmb}$ is $\frac{m}{n}$-resolvable, since the uniform resolver that picks any transition at $q_0$ with probability $\frac{1}{n}$,  accepts every word with probability exactly $\frac{m}{n}$.
For any other resolver $\mach{R}$, consider the $m$ branches $i_1, \dots, i_m$, with the lowest assigned probabilities.
Let $\ell$ be the unique index such that $S_{\ell} = \{i_1, \dots, i_m\}$. Then for the word $a^{\ell}$, we have $\mach{P}^{\A_{\lambda}}_{\mach{R}}(a^{\ell}) < \frac{m}{n}$. Hence, $\A_{\lmb}$ is not $(\lmb + \epsilon)$-resolvable for any $\epsilon>0$.
\end{proof}

\begin{figure}
\centering
\scalebox{0.85}{
\begin{tikzpicture}[node distance=50pt]
\tikzstyle{state}=[draw,rectangle, rounded corners,minimum size=25pt]
\tikzstyle{fstate}=[rectangle, rounded corners,minimum size=25pt]

\node[state] (P) at (0,2.7) {$\mach{P}$};
\node[fstate] (FP) at (0,2.8){};
\node at (0,2.1) {over $\Sigma$};
\node[fstate] (FFP) at (0,2.6){};
\node[state] (A) at (5,2.7) {$\mach{A}$};
\node[fstate] (FA) at (5,2.8) {};
\node[fstate] (FFA) at (5,2.6) {};
\node at (5.6,2.1) {over $\Sigma$};
\node[state] (AP) at (5,0) {$\mach{A'}$};
\node at (0,-0.6) {over $\Sigma'$};
\node[state] (PP) at (0,0) {$\mach{P'}$}; 
\node at (5,-0.6) {over $\Sigma'$};

\path 
(FP) edge [->] node[above] {based on} (FA)
(FFA) edge [->] node[below] {underlying} (FFP)
(PP) edge [->] node[above] {unique simple \pfa} (AP)
(PP) edge [->] node[below] {based on} (AP)
(AP) edge [->] node[left] {\rotatebox{-90}{extends}} (A)
    ;
\end{tikzpicture}
}
\caption{The construction for \cref{thm:undecidable}.}
\label{fig:flow}
\end{figure}

\label{sec:undecidable}
\begin{restatable}{thm}{undecidablethm}
\label{thm:undecidable}
    The $\lambda$-resolvability problem is undecidable for \nfa.
\end{restatable}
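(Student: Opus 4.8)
The plan is to reduce from an undecidable problem about probabilistic automata. Concretely: given a \emph{simple}, complete \pfa $\mach{P}$ over $\Sigma$ --- every transition probability lies in $\{0,\tfrac12,1\}$, so each pair $(q,\sigma)$ carries exactly one or two transitions --- with underlying \nfa $\mach{A}$, it is undecidable whether $\Lang(\mach{P}_{\ge1/2})=\Lang(\mach{A})$; this follows by a routine complementation from the undecidability of \pfa emptiness~\cite{paz2014introduction}, which already holds for simple \pfa. Given such a $\mach{P}$, I would construct in polynomial time an \nfa $\mach{A}'$ over an extended alphabet $\Sigma'\supseteq\Sigma$ so that $\mach{A}'$ is $\tfrac12$-resolvable if and only if $\Lang(\mach{P}_{\ge1/2})=\Lang(\mach{A})$. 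Undecidability of $\tfrac12$-resolvability, hence of the $\lambda$-resolvability problem, follows; padding finite runs into infinite ones, it transfers to automata on infinite words regardless of the acceptance condition.

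\textbf{The construction (cf.\ \cref{fig:flow}).} The automaton $\mach{A}'$ \emph{extends} $\mach{A}$: it keeps all states and transitions of $\mach{A}$, and on fresh letters it installs a \emph{test gadget} for every binary nondeterministic pair $(p,\sigma)$ of $\mach{A}$, with targets $q_1,q_2$. Each gadget has a fresh ``entry'' letter sending the run deterministically from $q_0$ to $p$, and two fresh ``detector'' letters; after reading the entry letter, then $\sigma$, then the first (resp.\ second) detector, the run lands in an accepting sink precisely when $\sigma$ was resolved towards $q_1$ (resp.\ $q_2$), and in a rejecting sink otherwise. Since all gadget transitions are deterministic, $\mach{A}'$ still has at most binary nondeterminism, so it admits a \emph{unique} simple \pfa $\mach{P}'$, which moreover coincides with $\mach{P}$ on $\Sigma^*$. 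Two observations then drive the reduction. (i) The two gadget words for $(p,\sigma)$ are accepted by $\mach{A}'$ with probabilities $x$ and $1-x$, where $x$ is the resolver's probability on the transition $p\xrightarrow{\sigma}q_1$; a resolver witnessing $\tfrac12$-resolvability must thus satisfy $x\ge\tfrac12$ and $1-x\ge\tfrac12$, i.e.\ $x=\tfrac12$, at every binary choice --- so the witnessing resolver is forced to be the simple one and induces exactly $\mach{P}'$. (ii) Conversely, the simple resolver is a witness if and only if $\Lang(\mach{P}'_{\ge1/2})=\Lang(\mach{A}')$; once one checks that the only words of $\Lang(\mach{A}')$ involving the fresh letters are the intended test words, each of probability exactly $\tfrac12$, this reduces --- on intersecting with $\Sigma^*$ --- to $\Lang(\mach{P}_{\ge1/2})=\Lang(\mach{A})$. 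Chaining (i) and (ii) gives the desired equivalence.

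\textbf{Main obstacle.} The delicate part, which I expect to be most of the work, is making the test gadgets \emph{faithful}: they must not add harmful words to $\Lang(\mach{A}')$. A naive detector that simply tests ``is the run at $q_1$?'' produces spurious accepted words --- enter a gadget, wander inside $\mach{A}$, return to $q_1$, then fire the detector --- whose probability under the simple resolver is $\Pr[\text{run at }q_1]$ and is typically far below $\tfrac12$; this would make $\mach{A}'$ non-resolvable irrespective of $\mach{P}$ and break the reduction. Since a memoryless resolver cannot distinguish ``$q_1$ reached inside a test'' from ``$q_1$ reached during a normal run'', one cannot circumvent this by routing the gadget through private copies of $q_1,q_2$ without losing the grip on the genuine transition $p\xrightarrow{\sigma}q_i$ that the forcing in (i) relies on. I expect the resolution to be a structural transformation of $\mach{A}$ --- e.g.\ splitting each state by its incoming transition, or a bounded unfolding --- that prevents a forked-to state from being re-reached before its detector is read, or else an appeal to robustness of the source undecidability (so that forcing $x$ only into a small neighbourhood of $\tfrac12$ suffices). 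With faithful gadgets in place, the two inclusions of the reduction and the remaining bookkeeping are routine.
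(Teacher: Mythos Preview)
Your overall strategy---reduce from a hard problem about simple \pfa, extend the underlying \nfa with gadgets that force any $\lambda$-resolver to coincide with the unique simple resolver, then read off the original \pfa question---is exactly the paper's approach, and you have correctly isolated the one real difficulty: the detector letters you attach at the targets $q_1,q_2$ of a binary choice unavoidably create spurious words of the form $\text{entry}_p\cdot w\cdot \text{detect}_i$ whenever $w$ can steer the run back to $q_i$, and these words have unbounded low probability under the simple resolver.

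The gap is that neither of your speculated fixes closes this. Splitting states by incoming transition (the edge automaton) does not help: the detector now sits at the state ``$(q,\sigma,q_i)$'', but that state is still re-reachable from itself whenever $q$ lies on a cycle through $q_i$ in $\mach{A}$, so the same spurious words reappear one level up; iterating the split is just finite unfolding and cannot kill cycles. The ``robustness'' idea---settling for $x\in[\tfrac12-\epsilon,\tfrac12+\epsilon]$---breaks step (ii) of your argument, since you would no longer be comparing against the fixed simple \pfa $\mach{P}$ but against a whole family of perturbed \pfa, and you have no undecidable source problem of that shape to reduce from.

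The paper's resolution is different in kind: it does not try to \emph{prevent} spurious words but to \emph{absorb} them. It introduces a single fresh letter $\$_0$ at the very start that splits nondeterministically into two branches $x$ and $y$ (this split is itself forced to be $\tfrac12$--$\tfrac12$ by a tiny fixed gadget $\mach{A}_\$$). The $y$-branch carries your construction essentially as written: entry words $w_p=\$_0 p$ reach $p$ with probability exactly $\tfrac12$, then $\sigma$, then the detector, giving the test word probability $\tfrac12\cdot\tfrac12=\tfrac14$. The $x$-branch carries a \emph{deterministic} automaton $\mach{A}_{\text{extra}}$ that accepts precisely the spurious pattern $\$_0\cdot Q\cdot(\Sigma^*\setminus\Sigma)\cdot\Delta$; every spurious word thus picks up probability $\ge\tfrac12$ from the $x$-side regardless of what happens on the $y$-side. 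The price is that the threshold drops from $\tfrac12$ to $\tfrac14$: the forcing equations become $x\ge\tfrac12$ and $1-x\ge\tfrac12$ only after factoring out the initial $\tfrac12$, and the link to the source problem becomes $\mach{P}'(\$_0 q_0 w)=\tfrac12\,\mach{P}(w)$, so $\mach{A}'$ is $\tfrac14$-resolvable iff $\mach{P}_{\ge 1/2}$ is universal. Your deterministic-entry/threshold-$\tfrac12$ design cannot accommodate such an absorber, because any parallel branch from the initial state would sap probability from the test words and destroy the forcing; this is why the change to $\tfrac14$ is not cosmetic but essential.
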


Our approach relies on a reduction from the undecidable universality  problem for simple\footnote{Note that \cite{Gimbert2010} proves the (strict) emptiness problem for simple \pfa is undecidable. Since a \pfa is universal if and only if its complement is empty, the universality problem is also undecidable.} {\pfa}s~\cite{paz2014introduction,Gimbert2010} to the $\lambda$-resolvability problem for an {\nfa}.  
Consider a simple {\pfa} $\mach{P}$ over the alphabet $\Sigma$, whose underlying {\nfa} is $\mach{A}$.
For a simple \pfa, the structure of its underlying \nfa uniquely determines the probabilities on its transitions: for each state $q$ and letter $\sigma$, there are either one or two outgoing transitions from $q$ labelled with $\sigma$; if there is one transition it must have probability $1$ and if there are two, each must have probability $\frac{1}{2}$.

Intuitively, we want the resolver (if it exists) for the underlying automaton $\mach{A}$ of $\mach{P}$ to always induce a simple \pfa. So, $\mach{A}$ is \lmr{} if and only if $\mach{P}_{\geq \lambda}$ and $\mach{A}$ recognise the same language. In principle, a resolver for an \nfa does not necessarily enforce that two outgoing transitions both have probability $\frac{1}{2}$ and may use any arbitrary split. To enforce an equal split, we construct an \nfa $\mach{A'}$ over an extended alphabet $\Sigma'$, such that $\mach{A'}$ is a \emph{super-automaton} of $\mach{A}$. By this, we mean that the alphabet, state set, set of accepting states, and transition set of $\mach{A}$ are all subsets of those of $\mach{A'}$, and we refer to $\mach{A}$ as a sub-automaton of $\mach{A'}$.
The constructed \nfa $\mach{A'}$ is designed so that it admits only one possible $\frac{1}{4}$-resolver, and that the resolver matches the underlying probabilities of the simple \pfa $\mach{P}$ on the part of the automaton of $\mach{A'}$ corresponding to $\mach{A}$. These relations are described in \cref{fig:flow}.

In \cite{Gimbert2010}, it is shown that determining whether $\Lang(\mach{P}_{>\frac{1}{2}})$ is empty for a given simple \pfa $\mach{P}$ is undecidable.  
For a given \pfa $\mach{P} = (\Sigma, Q, q_0, F, \Delta, \Theta)$, define its complement \pfa as $\mach{P}^{\text{comp}} = (\Sigma, Q, q_0, F_{\text{comp}}, \Delta, \Theta)$, where $F_{\text{comp}} = Q \setminus F$.  
By the definition of \pfa{s}, we have  
$\mach{P}(w) + \mach{P}^{\text{comp}}(w) = 1$, for all $w \in \Sigma^*$.
Thus, $\Lang(\mach{P}_{>\frac{1}{2}})$ is empty if and only if $\Lang(\mach{P}^{\text{comp}}_{\geq\frac{1}{2}})$ is universal. This leads to the following lemma:

\begin{lem}\label{lem:simple_universal}
It is undecidable whether, for a given simple \pfa $\mach{P}$, $\Lang(\mach{P}_{\geq\frac{1}{2}})$ is universal.
\end{lem}

To prove Theorem~\ref{thm:undecidable}, we show that the universality problem of $\mach{P}_{\geq\frac{1}{2}}$ for a given simple \pfa $\mach{P}$ can be effectively reduced to the \lmr{4} problem for an \nfa $\mach{A'}$ derived from the underlying \nfa $\mach{A}$ of $\mach{P}$.
To establish this uniqueness property, we need the following lemma:

\begin{lem}\label{prop:uniqueness}
Given an alphabet $\Sigma$ and an \nfa $\mach{A}$ over $\Sigma$, if for every state $p$ of $\mach{A}$ and every $\sigma \in \Sigma$, there are at most two distinct transitions of the form $(p, \sigma, q)$ for some state $q$, then the simple \pfa $\mach{P}$ based on $\mach{A}$ is unique.
\end{lem}
\begin{proof}
Let $\mach{P}=(\Sigma,Q,q_0,\Delta,F,\Theta)$.
Suppose $(p, \sigma, q)$ and $(p, \sigma, q')$ are two distinct transitions of $\mach{A}$. Since $\mach{P}$ is based on $\mach{A}$, the transitions $(p, \sigma, q, d)$ and $(p, \sigma, q', d')$ exist in $\mach{P}$ for some probabilities $d$ and $d'$. 
By assumption, every transition $(p, \sigma, q, d)$ in $\mach{P}$ has a nonzero probability, i.e., $d \neq 0$, and for every pair $(p,\sigma)\in Q\times \Sigma$, we have $\sum_{\tau=(p,\sigma,q)}\Theta(\tau)=1$. Therefore, we have $d + d' = 1$. Since $\mach{P}$ is simple, it follows that $d, d' \in \{\frac{1}{2}, 1\}$, which implies $d = d' = \frac{1}{2}$.
\end{proof}

Now we are ready to give the details of the proof of \Cref{thm:undecidable}.
\begin{proof}
Without loss of generality, we fix $\Sigma = \{a, b\}$. If $\mach{A}$ is not universal, then $\mach{P}_{\geq\frac{1}{2}}$ is also not universal. Since the universality problem for {\nfa} is decidable, we may further assume that $\machlang{A} = \Sigma^*$.
To derive the undecidability result, we construct the super-automaton $\mach{A'}$ of $\mach{A}$ such that it satisfies the following three properties:

\begin{enumerate}[C.1]
    \item\label{prop:c1} For every state $p$ of $\mach{A'}$ and every $\sigma \in \Sigma'$, $\mach{A'}$ has at most two distinct transitions of the form $(p,\sigma,q)$ for some state $q$.
    \item\label{prop:c2} If $\mach{A'}$ is \lmr{4}, then its \lmr{4} solution must be a simple \pfa.
    \item\label{prop:c3} The language $\machlang{\mach{A'}}$ is the disjoint union of two languages, $\Lang_{\text{ext}}$ and $\Lang_{\text{ind}}$, satisfying:
    \begin{enumerate}[a]
        \item\label{prop:c3a} For every $w\in \Lang_{\text{ind}}$, $\mach{P'}(w) \geq \frac{1}{2}$.
        \item\label{prop:c3b} For every $w'' \in \Lang_{\text{ext}}$, where $\Lang_{\text{ext}} = S \cdot \Sigma^*$ for some singleton set $S = \{w'\}$ over $\Sigma'\setminus\Sigma$ and $w'' = w' w$ with $w \in \Sigma^*$, we have $\mach{P'}(w'') = \frac{1}{2} \cdot \mach{P}(w)$.
    \end{enumerate}
\end{enumerate}

By \Cref{prop:uniqueness}, the simple \pfa $\mach{P'}$ based on $\mach{A'}$ is unique if $\mach{A'}$ satisfies \Cref{prop:c1}. Therefore, if $\mach{A'}$ further satisfies \Cref{prop:c2}, then $\mach{P'}$ is the unique \lmr{4} solution to $\mach{A'}$ provided that $\mach{A'}$ is \lmr{4}.  
Property \Cref{prop:c3} states that $\mach{P'}$ is a \lmr{4} solution to $\mach{A'}$ if and only if $\mach{P}_{\geq\frac{1}{2}}$ is universal.  
Accordingly, constructing $\mach{A'}$ to satisfy \Cref{prop:c1}--\Cref{prop:c3} leads to the following result:

Let $\mach{A}$ be the underlying \nfa of a given simple \pfa $\mach{P}$, and let $\mach{A'}$ be the super-automaton of $\mach{A}$ satisfying properties \Cref{prop:c1}--\Cref{prop:c3}. Then $\mach{P}_{\geq\frac{1}{2}}$ is universal iff $\mach{A'}$ is \lmr{4}.

The automaton $\mach{A'}$ is constructed over $\Sigma' = \Sigma \cup Q \cup \Delta \cup \Sigma_{\$}$, where $\Sigma_{\$} = \{\$_0, \dots, \$_6\}$. We also treat states $Q$ and transitions $\Delta$ of $\mach{A}$ as symbols. Without loss of generality, we assume that $\Sigma, Q, \Delta$, and $\Sigma_{\$}$ are pairwise disjoint. The language $\Lang(\mach{A'})$ consists of two main parts:
\begin{itemize}
    \item $\Lang_{\text{ext}} = \{\$_0\} \cdot \{q_0\} \cdot \Sigma^*$.
    \item $\Lang_{\text{ind}}$, which is the union of:
    \begin{itemize}
        \item $\Lang_{\$} = \{\$_0\$_1\$_i, \$_0\$_2\$_j \mid i \in \{3,4\}, j \in \{5,6\}\}$.
        \item $\Lang_{\text{extra}} = \{\$_0\} \cdot (Q \setminus \{q_0\}) \cdot \Sigma^* \cup \{\$_0\} \cdot Q \cdot (\Sigma^* \setminus \Sigma) \cdot \Delta$.
        \item $\Lang_{\text{nondet}} = \{\$_0\} \cdot \{p\sigma\tau \in Q\cdot\Sigma\cdot\Delta \mid \tau=(p,\sigma,s) \in \Delta \text{ is a nondeterministic transition}\}$.
    \end{itemize}
\end{itemize}

\begin{figure}[!h]
\centering
\begin{tikzpicture}
\centering
[node distance=160pt]
\tikzstyle{state}=[draw,shape=circle,minimum size=20pt]

\tikzset{invisible/.style={minimum width=0mm,inner sep=0mm,outer sep=0mm},initial text = {}}

\node[state,initial] (qp0) at (-3,0) {$q'_0$};
\node[state] (x) at (-1.5,1) {$x$};
\node[state] (y) at (-1.5,-1) {$y$};
\node[state] (q1) at (1,1.5) {$q_3$};
\node[state] (q2) at (1,0.5) {$q_4$};
\node[state] (q3) at (1,-0.5) {$q_5$};
\node[state] (q4) at (1,-1.5) {$q_6$};
\node[state,accepting] (qpf) at (4,0) {$q'_f$};

\path
(qp0)
    edge [->] node[above] {$\$_0$} (x)
(qp0)
    edge [->] node[above] {$\$_0$} (y)
(x)
    edge [->] node[above] {$\$_1$} (q1)
(x)
    edge [->] node[above] {$\$_1$} (q2)
(y)
    edge [->] node[above] {$\$_2$} (q3)
(y)
    edge [->] node[above] {$\$_2$} (q4)
(q1)
    edge [->] node[above] {$\$_3$} (qpf)
(q2)
    edge [->] node[above] {$\$_4$} (qpf)
(q3)
    edge [->] node[above] {$\$_5$} (qpf)
(q4)
    edge [->] node[above] {$\$_6$} (qpf)
    ;

\end{tikzpicture}
 
\caption{The \nfa $\mach{A}_{\$}$ corresponds to $\Lang_{\$}$.}
\label{fig:L_dollar}
\end{figure} 
\begin{figure}[!h]
\centering
\scalebox{0.85}{
\begin{tikzpicture}
\centering
[node distance=160pt]
\tikzstyle{state}=[draw,shape=circle,minimum size=25pt]

\tikzset{invisible/.style={minimum width=0mm,inner sep=0mm,outer sep=0mm},initial text = {}}

\node[state,initial] (qp0) at (-4,0) {$q'_0$};
\node[state] (x) at (-2,0) {$x$};
\node[state,accepting] (qppf) at (1,0) {};
\node[state,accepting] (qpppf) at (3.5,0) {};
\node[state,accepting] (qppppf) at (6,0) {};

\node[state] (xp) at (-0.5,-2) {};
\node[state,accepting] (qf) at (3.5,-2) {};

\path
(qp0)
    edge [->] node[above, sloped] {$\$_0$} (x)
(x)
    edge [->] node[above, sloped] {$p\in Q\setminus\{q_0\}$} (qppf)
(qppf)
    edge [->] node[above, sloped] {$\sigma\in\Sigma$} (qpppf)
(qpppf)
    edge [->] node[above, sloped] {$\sigma\in\Sigma$} (qppppf)
(qppppf)
    edge [loop right] node[right] {$\sigma\in\Sigma$} (qppppf)

(x)
    edge [->] node[below, sloped] {$q_0$} (xp)
(qppf)
    edge [->] node[above, sloped] {$\sigma\in\Delta$} (qf)
(qppppf)
    edge [->] node[below, sloped] {$\sigma\in\Delta$} (qf)
(xp)
    edge [->] node[above, sloped] {$\sigma\in\Delta$} (qf)
(xp)
    edge [loop above] node[above] {$\sigma\in\Sigma$} (xp)
;

\end{tikzpicture}
 }
\caption{The \nfa $\mach{A}_{\text{extra}}$ corresponds to $\Lang_{\text{extra}}$.}
\label{fig:L_extra}
\end{figure}
 
Each of the languages $\Lang_{\text{ext}}$, $\Lang_{\$}$, $\Lang_{\text{extra}}$, and $\Lang_{\text{nondet}}$ corresponds to a sub-automaton of $\mach{A'}$, and the \nfa they correspond to share some states and transitions. Note that while each of these disjoint languages refers to a sub-automaton of $\mach{A'}$, this is not a general property of \nfa.  
The details of these sub-automata are as follows:
\begin{itemize}
    \item $\Lang_{\$}$ is recognized by the \nfa $\mach{A}_{\$}$ in Figure~\ref{fig:L_dollar}.
    \item $\Lang_{\text{extra}}$ is recognized by the \nfa $\mach{A}_{\text{extra}}$ in Figure~\ref{fig:L_extra}, where $\mach{A}_{\text{extra}}$ and $\mach{A}_{\$}$ share the states $q'_0$ and $x$, as well as the transition $(q'_0,\$_0,x)$.
    \item $\Lang_{\text{ext}}$ is recognized by the super-automaton $\mach{A}_{\text{ext}}$ of $\mach{A}$, where $\mach{A}_{\text{ext}}$ is obtained from $\mach{A}$ by replacing the initial state $q_0$ with $q'_0$, adding two transitions $(q'_0,\$_0,y)$ and $(y,q_0,q_0)$, and introducing two new states $q'_0, y$ along with the symbols $\$_0$ and $q_0$. The transition $(q'_0,\$_0,y)$ also appears in $\mach{A}_{\$}$.
    \item $\Lang_{\text{nondet}}$ is recognized by the super-automaton $\mach{A}_{\text{nondet}}$ of $\mach{A}$, where $\mach{A}_{\text{nondet}}$ is obtained from $\mach{A}$ by replacing the initial state $q_0$ with $q'_0$ and adding the transition $(q'_0,\$_0,y)$, which is shared by both $\mach{A}_{\text{ext}}$ and $\mach{A}_{\$}$. Additionally, as shown in Figure~\ref{fig:nondet_structure}, for every $p \in Q$ and $\sigma\in\Sigma$, if $e_1=(p,\sigma,q)$ and $e_2=(p,\sigma,r)$ are two distinct transitions of $\mach{A}$, then two new transitions $(q,e_1,q'_f)$ and $(r,e_2,q'_f)$ are added, leading to a newly introduced accepting state $q'_f$.
\end{itemize}

\begin{figure}[!h]
\centering
\begin{tikzpicture}
\centering
[node distance=160pt]
\tikzstyle{state}=[draw,shape=circle,minimum size=20pt]

\tikzset{invisible/.style={minimum width=0mm,inner sep=0mm,outer sep=0mm},initial text = {}}

\node[state,initial] (qp0) at (-3,0) {$q'_0$};
\node[state] (q0) at (-1.5,0) {$q_0$};
\node[state] (p) at (0,0) {$p$};
\node[state] (q) at (1.5,1) {$q$};
\node[state] (r) at (1.5,-1) {$r$};
\node[state,accepting] (qf) at (3,0) {$q'_f$};

\path
(qp0)
    edge [->] node[above] {$\$_0$} (q0)
(q0)
    edge [->] node[above] {$q_0$} (p)
(p)
    edge [->] node[above] {$\sigma$} (q)
(p)
    edge [->] node[above] {$\sigma$} (r)
(q)
    edge [->] node[above] {$e_1$} (qf)
(r)
    edge [->] node[above] {$e_2$} (qf)    

;

\end{tikzpicture}

\caption{Nondeterministic outgoing transitions from $p$.}
\label{fig:nondet_structure}
\end{figure}

According to their definitions, these automata all satisfy property \Cref{prop:c1}. The reasons are as follows:
\begin{itemize}
    \item The only nondeterministic states in $\mach{A}_{\$}$ are $q'_0, x, y$, from which there are exactly two transitions.
    \item $\mach{A}_{\text{extra}}$ is deterministic.
    \item All the nondeterministic states of $\mach{A}_{\text{ext}}$ and $\mach{A}_{\text{nondet}}$ are states of $\mach{A}$, which already satisfies \Cref{prop:c1}.
\end{itemize}
Therefore, in the remainder of the proof, we only need to verify that $\mach{A'}$ satisfies properties \Cref{prop:c2} and \Cref{prop:c3}.

\subsection*{Property \Cref{prop:c2}}
First, consider $\mach{A}_{\$}$. Let $\mach{P}_{\$}$ be a \pfa based on $\mach{A}_{\$}$, where the transitions of $\mach{P}_{\$}$ are  
$(q'_0,\$_0,x,d_1)$, $(q'_0,\$_0,y,d_2)$, $(x,\$_1,q_3,d_3)$, $(x,\$_1,q_4,d_4)$, $(y,\$_2,q_5,d_5)$, $(y,\$_2,q_6,d_6)$, and $(q_i,\$_i,q'_f,1)$ for $i=3,\dots,6$ and some $d_1,\dots, d_6 \in [0,1]$.  
Then, for every word $w = \$_0\$_i\$_j$ where $i=1,2$ and $j=  3,\dots,6$, we have:
\[\mach{P}_{\$}(w) = d_i \times d_j.\]  
By the definition of \pfa{s}, it follows that:
\[d_1 + d_2 = d_3 + d_4 + d_5 + d_6 = 1.\]  
Accordingly, $\mach{P}_{\$}$ is a \lmr{4} solution to $\mach{A}_{\$}$ if and only if $d_i = \frac{1}{2}$ for all $i =1,\dots,6$, which is equivalent to $\mach{P}_{\$}$ being simple.  
Hence, property \Cref{prop:c2} holds for $\mach{A}_{\$}$, and the probability values of the transitions $(q'_0,\$_0,x)$ and $(q'_0,\$_0,y)$, which are shared by other automata, must be $\frac{1}{2}$.  

Next, consider $\mach{A}_{\text{ext}}$ and $\mach{A}_{\text{nondet}}$. By definition, if we ignore their accepting states, $\mach{A}_{\text{ext}}$ is a sub-automaton of $\mach{A}_{\text{nondet}}$. For every $p \in Q_{\text{nondet}}$, $\mach{A}_{\text{nondet}}$ contains a structure as shown in Figure~\ref{fig:nondet_structure}, where the sub-structure from $p$ is analogous to the sub-structure from $x$ or $y$ in Figure~\ref{fig:L_dollar}. Since the probability value of the transition $(q'_0,\$_0,y)$ must be $\frac{1}{2}$, the \lmr{4} solution \pfa based on $\mach{A}_{\text{nondet}}$ must be simple if $\mach{A}_{\text{nondet}}$ has a solution. Hence, property \Cref{prop:c2} holds for both $\mach{A}_{\text{nondet}}$ and $\mach{A}_{\text{ext}}$.  

Finally, since $\mach{A}_{\text{extra}}$ is deterministic, property \Cref{prop:c2} clearly holds. Accordingly, we can conclude that property \Cref{prop:c2} holds for $\mach{A'}$. 

\subsection*{Property \Cref{prop:c3}}

Since we have shown that $\mach{A'}$ satisfies properties \Cref{prop:c1} and \Cref{prop:c2}, \Cref{prop:uniqueness} allows us to assume that the \pfa{s} based on $\mach{A'}$ and its sub-automata are simple for the remainder of this section.

Property \Cref{prop:c3} consists of two sub-properties, \Cref{prop:c3}.a and \Cref{prop:c3}.b, where \Cref{prop:c3b} refers only to the language $\Lang_{\text{ext}} = \{\$_0q_0\} \cdot \Sigma^*$ recognized by $\mach{A}_{\text{ext}}$. Let $\mach{P}_{\text{ext}}$ be the simple \pfa based on $\mach{A}_{\text{ext}}$.  
For every word $w \in \Lang_{\text{ext}}$ and accepting run $\pi$ of $\mach{P}_{\text{ext}}$ on $w$, we have $w = \$_0q_0 \cdot w''$ for some $w'' \in \Sigma^*$, and $\pi = \pi' \pi''$, where $\pi' = (q'_0, \$_0, y)(y, q_0, q_0)$ is the sub-run on $\$_0q_0$, and $\pi''$ is the sub-run on $w''$. By the construction of $\mach{P}_{\text{ext}}$, $\pi''$ is an accepting run of the original \pfa $\mach{P}$ on $w''$. Since both transitions $(q'_0, \$_0, y)$ and $(y, q_0, q_0)$ are deterministic in $\mach{P}_{\text{ext}}$, it follows that  
\[
\mach{P}_{\text{ext}}(w) = \mach{P}(w'').
\]
Based on previous results, in the simple \pfa $\mach{P'}$ based on $\mach{A'}$, the transition $(q'_0, \$_0, y)$ has probability $\frac{1}{2}$, and the transition from state $y$ on reading $q_0$ is deterministic. Thus, we obtain  
\[
\mach{P'}(w) = \frac{1}{2} \cdot \mach{P}_{\text{ext}}(w) = \frac{1}{2} \cdot \mach{P}(w'').
\]
Therefore, \Cref{prop:c3}.b holds for $\mach{A'}$.

Property \Cref{prop:c3}.a concerns the language $\Lang_{\text{ind}}$, which is the union of $\Lang_{\$}$, $\Lang_{\text{extra}}$, and $\Lang_{\text{nondet}}$. In the previous subsection, we have shown that $\mach{A}_{\$}$ is \lmr{4}, and that $\mach{A}_{\text{extra}}$ is deterministic and shares only the transition $(q'_0, \$_0, x)$ with another sub-automaton $\mach{A}_{\$}$ of $\mach{A'}$. Therefore, for every $w \in \Lang_{\$} \cup \Lang_{\text{extra}}$, the simple \pfa $\mach{P'}$ based on $\mach{A'}$ satisfies $\mach{P'}(w) \geq \frac{1}{4}$.

For every $w \in \Lang_{\text{nondet}}$, word $w$ is of the form $\$_0 p \sigma \tau$, where $\tau = (p, \sigma, s)$ is a nondeterministic transition of $\mach{A}$ for some state $s$. In $\mach{A'}$, there is exactly one accepting run $\pi = (q'_0, \$_0, y)(y, p, p)(p, \sigma, s)(s, \tau, q'_f)$ on $w$. Since the transitions $(y, p, p)$ and $(s, \tau, q'_f)$ are deterministic, the probability along $\pi$ is $\frac{1}{2} \cdot 1 \cdot \frac{1}{2} \cdot 1 = \frac{1}{4}$,
which implies $\mach{P'}(w) \geq \frac{1}{4}$. Thus, property \Cref{prop:c3}.a holds for $\mach{A'}$. 

We have established that $\mach{A'}$ satisfies properties \Cref{prop:c1}--\Cref{prop:c3}, so $\mach{A'}$ is \lmr{4} if and only if $\mach{P}_{\geq\frac{1}{2}}$ is \lmr{4} for the given \pfa $\mach{P}$. Consequently, the \lmr{} problem for \nfa is undecidable.
\end{proof}

\section{Deciding Resolvability for Unary Finite Automata}
\label{sec:decidingunary}
\begin{thm}
\label{thm:unarysigma2}
    Positive resolvability is in $\NP^{\coNP}$
    for unary \nfa.
\end{thm}

For a unary NFA a memoryless stochastic resolver induces a unary probabilistic automaton, or equivalently a Markov chain. This is given by an initial distribution as a row vector $I\in[0,1]^d$, a stochastic matrix $P\in[0,1]^{d\times d}$ and a final distribution as a column vector $F\in[0,1]^d$ such that $\mathcal{P}(a^n) = I P^n F$.
We make use of the following lemma, which summarises standard results in Markov chain theory to our needs (see e.g.~\cite[Sec~1.2-1.5]{lawler2010introduction} or~\cite[Sec~1.8]{norris1998markov}).
The lemma shows that there are a sequence of limit distributions that are visited periodically. 
Essentially, we get, for each word length $n$, a set of reachable states, and except for a possible initial sequence, these sets of states repeat periodically. Once in the periodical part, whenever there is a reachable accepting state,  we require there be some reachable accepting state in a bottom strongly connected component (SCC) to ensure the probability is bounded away from zero. 
The key point of the following lemma is to translate this into the support of the limit distributions for these states to be non-zero / zero for reachable states in- / outside of bottom SCCs, so that we only need to consider the support graph of the Markov chain, not the probabilities themselves.
This allows us to guess the support of the resolver in $\NP$ and verify in $\coNP$ that any resolver with this support would suffice.
\begin{restatable}{lem}{lemmastructurelimit}
\label{lemma:generaltheory}
    Let $P$ be a $d$-dimensional Markov chain with initial distribution $I$. 
    There exists $T$, bounded by an exponential in $d$, and limit distributions $\pi^{(0)},\dots,\pi^{(T-1)} \in [0,1]^{d}$, such that $\lim_{n\to\infty} IP^k (P^T)^n = \pi^{(k)}$. Furthermore, $\pi^{(k)}(q)$ is non-zero if and only if \begin{enumerate}
        \item $q$ is in a bottom SCC, and 
        \item $q$ is reachable in $P$ from some state in $I$ by some path of length $k + T\ell$ for some $\ell \leq d$.
    \end{enumerate}
\end{restatable}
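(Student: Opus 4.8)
The plan is to read everything off the standard classification of the states of the finite Markov chain $P$. First I would partition $Q$ into the transient states and the bottom strongly connected components (bottom SCCs) $C_1,\dots,C_r$, writing $p_i$ for the period of $C_i$, i.e.\ the $\gcd$ of the lengths of the cycles inside $C_i$. I then set $T:=\operatorname{lcm}(p_1,\dots,p_r)$. Since $p_i\le d$ for all $i$, we have $T\mid\operatorname{lcm}(1,2,\dots,d)=e^{(1+o(1))d}$ by Chebyshev's bound on $\sum_{\text{prime powers }\le d}\log(\cdot)$, so $T$ is bounded by an exponential in $d$, as required.

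For the convergence statement I would use the cyclic decomposition of each bottom SCC, $C_i=C_{i,0}\sqcup\cdots\sqcup C_{i,p_i-1}$, where one step of $P$ sends $C_{i,j}$ into $C_{i,(j+1)\bmod p_i}$. Because $p_i\mid T$, the matrix $P^T$ maps each $C_{i,j}$ into itself, and $P^T|_{C_{i,j}}$ is a stochastic matrix that is irreducible and aperiodic (being a power of the primitive matrix $P^{p_i}|_{C_{i,j}}$); hence $(P^T|_{C_{i,j}})^n$ converges to the matrix all of whose rows equal the unique stationary distribution $\mu_{i,j}$ of $P^T|_{C_{i,j}}$, and $\mu_{i,j}$ has full support on $C_{i,j}$. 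On the transient block, $P^{Tn}\to 0$. Consequently $(P^T)^n$ converges to a stochastic matrix $\Pi$ whose columns over transient states are zero, so $\lim_n IP^k(P^T)^n=IP^k\Pi=:\pi^{(k)}$ exists, is a probability distribution (a finite chain reaches a bottom SCC almost surely), and vanishes on all transient states. Writing $v=IP^k$, the restriction of $IP^k(P^T)^n=IP^{k+Tn}$ to a subclass $C_{i,j}$ has $\ell_1$-mass $\Pr_v[X_{k+Tn}\in C_{i,j}]$, which is non-decreasing in $n$ (mass inside $C_{i,j}$ stays there under $P^T$ and can only accumulate) and converges to some $c_{i,j}\in[0,1]$; combined with the mixing of $P^T|_{C_{i,j}}$ this yields $\pi^{(k)}=\sum_{i,j}c_{i,j}\mu_{i,j}$.

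For the support characterisation, since each $\mu_{i,j}$ has full support, $\pi^{(k)}(q)\ne 0$ holds exactly when $q$ lies in some bottom SCC $C_i$, say in $C_{i,j}$, and $c_{i,j}>0$. By the monotonicity above, $c_{i,j}>0$ iff $\Pr_v[X_{k+Tn}\in C_{i,j}]>0$ for some $n$, i.e.\ iff some state of $C_{i,j}$ is reachable from $\operatorname{supp}(I)$ in exactly $k+Tn$ steps for some $n\ge 0$. Using that $q$ is recurrent, that walks inside $C_i$ between two states of the same cyclic subclass have length $\equiv 0\pmod{p_i}$, and that walks ending at $q$ can be lengthened by any sufficiently large multiple of $p_i$, this is equivalent to the existence of a walk from $\operatorname{supp}(I)$ to $q$ whose length is $\equiv k\pmod{p_i}$, and then (since $p_i\mid T$ and the lengths $m+p_i\mathbb{Z}$ hit every residue $\equiv k$ modulo $T$) to the existence of such a walk whose length is $\equiv k\pmod T$. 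To get the claimed bound $\ell\le d$, I would finally pass to the product of the underlying digraph of $P$ with the cyclic group $\mathbb{Z}/T\mathbb{Z}$ (a graph on $d\cdot T$ vertices that additionally tracks the walk length modulo $T$): if a walk from $\operatorname{supp}(I)$ to $q$ of length $\equiv k\pmod T$ exists at all, then a shortest one has length at most $dT-1$, which is of the form $k+T\ell$ with $\ell\le d-1<d$.

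The hard part will be pinning down the exact form of the support condition and the bound $\ell\le d$, rather than the convergence itself (which, with $T$ the l.c.m.\ of the bottom-SCC periods, reduces to primitivity of $P^{p_i}$ on each cyclic subclass). The subtlety is that the limit $\pi^{(k)}$ inside a bottom SCC $C_i$ is \emph{not} governed by the stationary distribution of $C_i$ but by that of $P^T$ restricted to a single cyclic subclass; accordingly a state $q\in C_{i,j}$ carries positive limit mass precisely when $C_{i,j}$ is reached from $\operatorname{supp}(I)$ at a time $\equiv k\pmod T$, and translating this into the ``reachable by a path of length $k+T\ell$, $\ell\le d$'' form is what needs the product-graph argument above. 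Everything else — absorption into bottom SCCs, Perron--Frobenius/primitivity on each subclass, and reachability under a congruence constraint — is standard.
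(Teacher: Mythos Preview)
Your proof is correct and follows essentially the same strategy as the paper: take $T$ to be a global period, argue that $P^T$ is ergodic on each of its recurrent classes (the cyclic subclasses of the bottom SCCs of $P$) while vanishing on the transient block, and then reduce the support characterisation to a reachability question with a short-path bound. The only cosmetic differences are that the paper takes $T$ to be the lcm of the periods of \emph{all} states (setting the period to $1$ for states on no cycle), re-decomposes $P^T$ into its own bottom SCCs rather than naming the cyclic subclasses explicitly, and bounds $\ell$ via reachability directly in the $d$-state graph of $B=P^T$ from $\operatorname{supp}(IP^k)$ rather than via your product with $\mathbb{Z}/T\mathbb{Z}$; both variants yield the same conclusions.
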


We now show how to decide whether there is a stochastic memoryless resolver.
\begin{proof}[Proof of \cref{thm:unarysigma2}]
Let $\mach{A} = (\{a\}, Q, q_0,\Delta, F)$ be a unary NFA over the alphabet $\{a\}$, with $|Q| = d$, initial state $q_0$ encoded in a row vector $I \in\{0,1\}^d$ and final states encoded in a column vector $F\in\{0,1\}^d$. We ask if there is a $P\in[0,1]^{d\times d}$ with support at most $\Delta$ and $\lambda$ such that $I P^{n}F \ge \lambda$ whenever $a^n\in \Lang(\mach{A})$.

If $P$ exists, it has some support $S\subseteq\Delta$ and we guess this support of $P$ non-deterministically. 
The choice of support induces a new NFA $\mach{A}_S \subseteq \mach{A}$, in which every edge will be attributed a non-zero probability and $\Lang(\mach{A}_S) = \{a^n \mid IP^nF > 0\}$. We verify, in $\coNP$~\cite{stockmeyer1973word}, that $\Lang(\mach{A}_S) = \Lang(\mach{A})$.
If not, this is not a good support. 

Henceforth, we assume $\Lang(\mach{A}_S) = \Lang(\mach{A})$ and consider the condition to check whether the support $S$ is good. For any $P$ with support $S$ we have $IP^n F > 0$ for $a^n\in \Lang(\mach{A}_S)$. 
It remains to decide whether this probability can be bounded away from zero, that is for some $\lambda > 0$, $IP^n F > \lambda$ whenever $a^n\in \Lang(\mach{A}_S)$.
The condition we describe is independent of the exact choice of $P$ and is based only on the structure of $\mach{A}_S$. 
Using $T$ from \cref{lemma:generaltheory} we decompose $\Lang(\mach{A}_S)$ into  $T$ phases and define for all  $k \in \{0,\dots,T-1\}$ the set $\Lang_k = \{a^{k+Tn}\mid n\in\mathbb{N}\} \cap \Lang(\mach{A})$. Each $\Lang_k$ is either finite, in which case it turns out there is nothing to do, or eventually periodic with period $T$, in which case we use \cref{lemma:generaltheory}'s characterisation of non-zero limit: 
\begin{description}
    \item[$\Lang_k$ is finite] 
There is nothing to decide:  any $P$ with support $S$ induces a lower bound $\min \{ IP^k (P^T)^{n} F\mid n\in\mathbb{N}, a^{k+Tn} \in \Lang_k\} > 0$ on the weight of the words in $\Lang_k$, as the minimum over a finite set of non-zero values. 

\item[$\Lang_k$ is eventually periodic] We check there exists a path from $q_0$ to some final state $q_f\in F$ of length $k+T\ell$ for some $\ell \leq d$, and that $q_f$ is in some bottom SCC in $\mach{A_S}$.
Then for any $P$ with support $S$, \cref{lemma:generaltheory} entails a limit distribution $\pi^{(k)} \in [0,1]^{d}$ such that $\lim_{n\to\infty} IP^k (P^T)^n = \pi^{(k)}$, in which $\pi^{(k)}(q_f)$ is non-zero. Thus $\lim_{n\to\infty} IP^k (P^T)^n F = \pi^{(k)} F = \epsilon_k > 0$. 
In particular, there exists $n_k$ such that, for all $n' > n_k$, we have $IP^k (P^T)^{n'} F > \epsilon_k / 2$. 
Hence, there is a lower bound on the weights of words in $\Lang_k$: $\min \{ IP^k (P^T)^{n} F \mid n : a^{k+Tn} \in \Lang_k, n \le n_k\}\cup\{\epsilon_k/2\} > 0$, which is a minimum over a finite set of non-zero values.  \\
If there is no such path, then for any $P$ with support $S$, the limit distribution is zero in all final states: the probability tends to zero no matter the exact choice of probabilities.
\end{description}
For a given $k$, it can be decided in polynomial time whether $\mathcal{L}_k$ is finite and whether there is a path of the given type. On a boolean matrix $A_S$ representing the reachability graph of $\mach{A}_S$, the vector $I(A_S)^k$ and boolean matrix $(A_S)^T$ can be computed by iterated squaring in which the reachability questions can be checked.

Thus to summarise, our procedure guesses a support $S\subseteq \Delta$ and the period $T$ in $\NP$, and verifies in $\coNP$ both that $\Lang(\mach{A_S}) = \Lang(\mach{A})$ and there is no $k<T$ in which the reachability query fails.
\end{proof}
In \cref{thm:unaryhardness} we show that the problem is $\coNP$-hard already in the finite-ambiguous case, we present this result alongside the non-unary case in \cref{sec:fnfa} due to commonalities.
 
\section{Unambiguous Finite Automata}
\label{sec:unambiguous}

In this section we provide algorithms for deciding the positive resolvablility problem and $\lmb$-resolvability problem for \unfa. We recall the definition of bad support. 
\paragraph{Bad Support} Given an \nfa $\mach{A}$, a support $S$ is {\em bad} if there is no resolver $\mach{R}$ over $S$ that positively resolves $\A$, that is, $\mach{P}^{\A}_{\mach{R}}$ is not a \mr{} solution to $\mach{A}$ for each $\mach{R}$ over $S$. Equivalently, $S$ is bad if for each resolver $\mach{R}$ over $S$, there is an infinite sequence of words $\{w_i\}_i$ such that $\lim_{i \to\infty} \mach{P}_{\mach{R}}(w_i) = 0$. Essentially, the sequence $\{w_i\}_i$ is a witness for the resolver $\mach{R}$ being  bad.

Unambiguous automata are a special class of \finamb automata and we start this section with a general property related to bad supports for \finamb automata. We will use this to derive the decidability and complexity bounds for the resolvability problems for unambiguous automata as well as general \finamb automata later. %

\begin{restatable}{lem}{lemBadSuppInfTrans}\label{lem:bad-edges-increase}
Let $S$ be a support for an \fnfa $\A$ with $\Ll(\mach{A}) = \Ll(\mach{A}_S)$. 
For a word $w$ with $k$ accepting runs, $\pi_1,\dots,\pi_k$,  let $b(w)$ be the minimum number of nondeterministic transitions present in any of the $\pi_i$. 
Then $S$ is a bad support if and only if there is an infinite sequence $\{w_{i}\}_i$ of words such that $\{b(w_{i})\}_i$ is a strictly increasing sequence.%
\end{restatable}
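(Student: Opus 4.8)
The plan is to prove the two implications separately, working throughout with accepting runs of $\mach{A}_S$ (so that ``nondeterministic transition'' means nondeterministic in $S$, i.e.\ in $\mach{A}_S$), and first to reformulate the combinatorial condition: there is an infinite sequence $\{w_i\}_i$ with $\{b(w_i)\}_i$ strictly increasing if and only if $\sup\{b(w)\mid w\in\Ll(\mach{A}_S)\}=\infty$. One direction is trivial; for the other, if this supremum is infinite then starting from any $w_1$ one repeatedly picks $w_{i+1}$ with $b(w_{i+1})>b(w_i)$. Note $b(w)$ is defined exactly for $w\in\Ll(\mach{A}_S)=\Ll(\mach{A})$, and $\mach{A}_S$ inherits $k$-ambiguity from $\mach{A}$ since every run of $\mach{A}_S$ is a run of $\mach{A}$; let $m$ denote the maximal number of transitions leaving any state--letter pair.

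\emph{Unbounded $b$ implies $S$ is bad.} Let $\mach{R}$ be an arbitrary resolver over $S$ and put $\mach{P}:=\mach{P}^{\mach{A}}_{\mach{R}}$. A transition deterministic in $S$ gets probability $1$ under $\mach{R}$, while any transition $\tau$ nondeterministic in $S$ shares its source with another transition of $S$ that also receives positive probability, so $\mach{R}(\tau)\le c$ for $c:=\max\{\mach{R}(\tau)\mid\tau\text{ nondeterministic in }S\}<1$. Every accepting run of $\mach{A}_S$ on $w_i$ contains at least $b(w_i)$ nondeterministic transitions, hence has $\mach{R}$-probability at most $c^{b(w_i)}$; summing over the at most $k$ accepting runs gives $\mach{P}(w_i)\le k\,c^{b(w_i)}\to 0$. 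Since all $w_i\in\Ll(\mach{A})$, no threshold $\lambda>0$ satisfies $\Ll(\mach{P}_{\ge\lambda})=\Ll(\mach{A})$, so $\mach{R}$ does not positively resolve $\mach{A}$; as $\mach{R}$ was arbitrary, $S$ is bad.

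\emph{Bounded $b$ implies $S$ is good} (the contrapositive of the remaining direction). Suppose $b(w)\le B$ for all $w\in\Ll(\mach{A}_S)$, and take the uniform resolver $\mach{R}$ over $S$, splitting probability uniformly among the transitions of $S$ leaving each state--letter pair. Its support is exactly $S$, so $\Ll((\mach{P}^{\mach{A}}_{\mach{R}})_{>0})=\Ll(\mach{A}_S)=\Ll(\mach{A})$, and each transition of $S$ gets probability at least $1/m$ (deterministic ones get $1$). For $w\in\Ll(\mach{A}_S)$, pick an accepting run of $\mach{A}_S$ realising $b(w)\le B$; its $\mach{R}$-probability is at least $(1/m)^{B}$, hence $\mach{P}^{\mach{A}}_{\mach{R}}(w)\ge(1/m)^{B}=:\lambda>0$, while $\mach{P}^{\mach{A}}_{\mach{R}}(w)=0$ for $w\notin\Ll(\mach{A})$. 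Thus $\Ll((\mach{P}^{\mach{A}}_{\mach{R}})_{\ge\lambda})=\Ll(\mach{A})$, so $\mach{R}$ positively resolves $\mach{A}$ and $S$ is good. Combining the two implications yields the equivalence.

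\emph{Main obstacle.} The arguments are short; the care required is in (i) counting runs and nondeterminism in $\mach{A}_S$ rather than $\mach{A}$ --- deleting transitions can turn nondeterministic choices deterministic, which is exactly why the uniform-over-$S$ resolver (not one uniform over all of $\mach{A}$) does the job --- and (ii) ensuring the witnessing threshold $\lambda=(1/m)^{B}$ is uniform over \emph{all} accepted words, which is precisely where boundedness of $b$ enters. One should also note that $b$ is not monotone under arbitrary edge deletion, so the statement is genuinely tied to the fixed support $S$.
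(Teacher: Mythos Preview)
Your proof is correct and follows essentially the same approach as the paper: the paper isolates the lower-bound direction as an observation that $\mach{P}_{\mach{R}}(w)\ge\epsilon^{b(w)}$ for $\epsilon=\min_{\tau\in X_S}\mach{R}(\tau)$, and then argues (tersely) that a bad support forces $b(w_i)\to\infty$ along some sequence, while you spell out both bounds explicitly and use the uniform resolver for the ``bounded $b$'' direction. Your write-up is in fact more complete than the paper's, which leaves the upper-bound direction (unbounded $b\Rightarrow$ bad) implicit.
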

\begin{proof}
For the forward direction, let $S$ be a bad support.
By definition, it follows that for any resolver $\mach{R}$ on $S$, there is an infinite sequence $\{w_i\}_i$ s.t. $\lim_{i \to\infty} \mach{P}_{\mach{R}}(w_i) = 0$. Given a word $w$, if there is an accepting run $\pi$ of $w$ on a \pfa, then the acceptance probability of $w$ by the \pfa is bounded below by the acceptance probability along $\pi$. 
In other words, for any resolver $\mach{R}$ over support $S$, with $X_S$, the set of nondeterministic transitions in $S$, if $\epsilon_{min} = \min\limits_{(p,\sigma,q) \in X_S}\mach{R}(p,\sigma,q)$ and if $w \in \Ll(\A_S)$ is a word which has an accepting run containing at most $T$ nondeterministic transitions from $S$, then $\mach{P}_{\mach{R}}(w) \geq \epsilon_{min}^T$. It follows that, $\{w_i\}_i$ has an infinite sub-sequence $\{w_{i_j}\}_j$ s.t. $\{b(w_{i_j})\}_{j}$ is a strictly increasing sequence.

For other direction, let  $\{w_i\}_i$ be an infinite sequence s.t. $\{b(w_{i})\}_{i}$ is a strictly increasing sequence. 
Let the ambiguity of $\A$ be $M$. For any resolver $\mach{R}$ on $S$, let $\epsilon_{max} = \max\limits_{(p,\sigma,q) \in X_S}\mach{R}(p,\sigma,q)$. Then $\mach{P}_{\mach{R}}(w) \leq M\epsilon_{max}^{b(w)}$. Since $\{b(w_{i})\}_{i}$  is strictly increasing, it follows that $\lim_{i \to\infty} \mach{P}_{\mach{R}}(w_i) = 0$ 
\end{proof}
We immediately get as a corollary, that positive resolvability is determined by the support and is agnostic of the exact probabiliites in a resolver for \finamb automata.
Note that the forward direction of \Cref{lem:bad-edges-increase} holds for any automata and this doesn't require the automata to be \finamb. However, the other direction requires the automata to be \finamb and it may not hold in general as demonstrated in \Cref{fig:infAmbi}.
\begin{cor}
If $S$ is not a bad support, then $\A$ is positively resolvable with any resolver $\mach{R}$ over $S$.     
\end{cor}

For a \unfa---where each word has exactly one accepting run---removing a transition without changing the language implies that the transition either originates from an unreachable state or leads to a state with empty language. Such transitions can thus be safely removed or assigned probability zero.  
Assuming the \unfa is trim, its support is unique. It is not positively resolvable if and only if this support is bad.

It is easy to see that a \unfa is not positively resolvable if it admits an accepting run of the form 
$q_0 \xrightarrow{x} q \xrightarrow{y} q \xrightarrow{z} f$,  
where $q_0$ is the initial state, $f$ is an accepting state, and there is a nondeterministic transition taken while reading the infix $y$.  
By pumping $y$, we obtain a sequence of accepted words of the form $xy^iz$, each with a unique accepting run, but with decreasing probabilities assigned by any stochastic resolver.  
This follows from the fact that the loop induced by $y$ contains a nondeterministic transition, which causes the probability mass to diminish.
The \unfa in \cref{fig:eg_nfa_b} illustrates this phenomenon: it is unambiguous but not positively resolvable.  
This is witnessed by the word $bb$, where the first $b$ corresponds to the nondeterministic, pumpable part of the word.  

In the following, we show that the existence of such a witness word is both a necessary and sufficient condition for a \unfa not to be positively resolvable.
\begin{restatable}{lem}{lemBadWordUFA}\label{lem:ufa-bad-word}
A \unfa $\A$ is not positively resolvable if and only if there exists a word $w = xyz \in \Ll(\A)$ such that the accepting run on $w$ satisfies the following conditions:
\begin{enumerate}
    \item After reading the prefix $x$, the run reaches a state $q \in \delta_{\A}(q_0, x)$, and after subsequently reading $y$, it returns to the same state, i.e., $q \in \delta_{\A}(q, y)$. Moreover, reading $z$ from $q$ leads to an accepting state: $\delta_{\A}(q, z) \cap F \neq \emptyset$.
    \item The words $y$ and $z$ begin with the same letter $a$, it follows that the transition taken when reading $y$ from $q$ is nondeterministic.
\end{enumerate}
\end{restatable}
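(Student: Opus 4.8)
The plan is to derive both directions from \cref{lem:bad-edges-increase}, applied to the forced support $\Delta$ of $\A$. First I would note that we may assume $\A$ is trim (useless transitions can be given probability zero and affect neither resolvability nor the existence of a witness), and that then removing \emph{any} transition strictly shrinks the language: if $\tau=(p,\sigma,r)\in\Delta$ with $p$ reachable by a word $u$ and $r$ co-reachable by a word $v$, then $u\sigma v\in\Ll(\A)$ and, by unambiguity, its unique accepting run uses $\tau$, so dropping $\tau$ removes $u\sigma v$. Hence every resolver that positively resolves $\A$ has support exactly $\Delta$, so $\A$ is \mr{} if and only if $\Delta$ is not a bad support. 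Because each word has a unique accepting run, the number $b(w)$ of \cref{lem:bad-edges-increase} is just the number of nondeterministic transitions occurring on that run, and the lemma then says $\A$ fails to be \mr{} precisely when there is an infinite family of words whose accepting runs contain strictly increasing numbers of nondeterministic transitions. So the whole statement reduces to: such a family exists if and only if a witness $w=xyz$ as in conditions~1--2 exists.

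For the ``if'' direction I would pump $y$. Given a witness $w=xyz$, every $xy^iz$ is accepted, and iterating the $y$-segment of the run on $w$ gives a valid accepting run $q_0\xrightarrow{x}q\,(\xrightarrow{y}q)^i\xrightarrow{z}f$, which is \emph{the} accepting run by unambiguity. Since the $y$-loop at $q$ contains a nondeterministic transition (its first one, by condition~2), $b(xy^iz)\ge i$, so these values are unbounded and, after passing to a subsequence, \cref{lem:bad-edges-increase} applies. Equivalently and more concretely: for any resolver $\mach{R}$ each traversal of the loop contributes a factor of at most $\mach{R}(\tau)<1$ --- the competing transition receives positive mass because the trim \unfa forces $\mach{R}$ to have full support --- so $\mach{P}^{\A}_{\mach{R}}(xy^iz)\to0$.

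For the ``only if'' direction I would take a word $w$ whose unique accepting run $\pi$ carries more than $|Q|$ nondeterministic transitions, and look at the states from which those transitions leave $\pi$. By pigeonhole two of them coincide, giving a factorisation $\pi=q_0\xrightarrow{x_0}q'\xrightarrow{u}q'\xrightarrow{v}f$ whose $u$-loop at $q'$ traverses a nondeterministic transition. I would then \emph{rotate} that loop so the nondeterministic transition sits at the pivot: write $u=u_1u_2$ with $q'\xrightarrow{u_1}p\xrightarrow{u_2}q'$ so that the first transition of $u_2$, from $p$ on some letter $a$, is nondeterministic, and set $x:=x_0u_1$, $y:=u_2u_1$, $z:=u_2v$. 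Then $q_0\xrightarrow{x}p$, $p\xrightarrow{y}p$ (a cyclic rotation of the old closed walk), and $p\xrightarrow{z}f$ (via $p\xrightarrow{u_2}q'\xrightarrow{v}f$), so condition~1 holds with pivot $p$; moreover $y=u_2u_1$ and $z=u_2v$ share the nonempty prefix $u_2$, hence begin with the same letter $a$, while the first transition of $y$ from $p$ is the nondeterministic one chosen above, so condition~2 holds. Since $xyz=x_0u^2v\in\Ll(\A)$, this is the required witness.

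I expect the rotation to be the main obstacle: one has to confirm that cyclically rotating a loop still yields genuine runs (routine, since rotations of closed walks are closed walks) and, more carefully, that the rotated decomposition really satisfies condition~2 --- that $y$ and $z$ share their first letter and that this pins a nondeterministic transition at $p$ --- which is precisely why a nondeterministic transition must first be brought to the pivot rather than left buried inside the loop, and which relies on unambiguity to speak of ``the'' run on each pumped word. A secondary subtlety is the opening reduction, where unambiguity is used both to force the support to be $\Delta$ and to turn \cref{lem:bad-edges-increase}'s conclusion into a \emph{single} run carrying many nondeterministic transitions, which is what lets the pigeonhole step be run on one run.
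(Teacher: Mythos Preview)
Your proof is correct and follows the paper's strategy: trim to force the unique support $\Delta$, invoke \cref{lem:bad-edges-increase}, pump for the ``if'' direction, and pigeonhole inside a single long run for ``only if''. The only difference is in how the witness is assembled once a nondeterministic transition on letter $a$ is found on a loop at some state: the paper takes $z$ through the \emph{alternative} $a$-successor (using productivity of that state), whereas you rotate the loop so that $y$ and $z$ share the initial segment $u_2$ of the original run --- both yield a valid witness, and your version has the minor nicety of staying entirely inside the original accepting run.
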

\begin{proof}
It is straightforward that the existence of such a word implies that $\A$ is not positively resolvable.
To prove the converse, suppose that $\A$ is not positively resolvable. 
Assume further that all states in $\A$ are reachable and productive (i.e., from each state, there exists a path to an accepting state).

By \cref{lem:bad-edges-increase}, there must exist a sequence of accepted words whose unique accepting runs contain an increasing number of nondeterministic transitions. 
Since the number of nondeterministic transitions in $\A$ is finite, the pigeonhole principle implies that some nondeterministic transition $q \xrightarrow{a} q'$ must appear infinitely often in this sequence. 
Hence, this transition must lie within a loop.

Let $x$ be a word such that $\delta_{\A}(q_0, x) \ni q$, and let $y$ be a loop word from $q$ such that it starts with the transition $q \xrightarrow{a} q'$ and returns to $q$.
Since the transition $q \xrightarrow{a} q'$ is nondeterministic and no state is unproductive, we can find a suffix $z$ such that starting from $q$, there exists an alternative run reading $a$ to some state $q'' \neq q'$ that leads to an accepting state, i.e., $\delta_{\A}(q, z) \cap F \neq \emptyset$.
Thus, the word $w = xyz$ witnesses $\A$ not being positive resolvable.
\end{proof}

By guessing this witnessing word of \cref{lem:ufa-bad-word} letter by letter, we have an NL procedure for deciding whether a \unfa is positively resolvable or not. 
\begin{restatable}{thm}{theoremUFAUpperbound}\label{thm:1-ambiguous-non-fixed}
The positive resolvability problem for {\unfa} is in NL.
\end{restatable}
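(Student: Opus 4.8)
The plan is to decide the \emph{complement} of the problem in \NL---that is, to recognise those \unfa that are \emph{not} positively resolvable---and then conclude by $\NL=\coNL$ (Immerman--Szelepcsényi). By \cref{lem:ufa-bad-word} this complement is precisely the class of \unfa admitting a witness decomposition $w=xyz$ of an accepting run, and the key observation is that such a witness need never be written down: a nondeterministic log-space machine can traverse it transition by transition, carrying only a constant number of states plus a phase counter.

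First I would argue that we may assume $\A$ is trim. A state survives trimming iff it is reachable from $q_0$ and co-reachable to $F$; both are \NL reachability queries (and, as $\NL=\coNL$, so is the test that a state fails to survive). Trimming changes neither $\Ll(\A)$ nor positive resolvability, and for a trim \unfa the support is unique and equals $\Delta$, so the phrase ``$(q,a,q')$ is nondeterministic'' becomes the purely structural statement that $q$ has two distinct outgoing $a$-transitions. Whenever the procedure below inspects the structure of $\A$ it performs the relevant reachability/co-reachability checks on the fly; since \NL is closed under log-space composition, this does not affect the complexity.

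The procedure then proceeds as follows. It guesses the anchor state $q$ and checks that $q$ is reachable from $q_0$, witnessing a prefix $x$ with $q\in\delta_{\A}(q_0,x)$. It guesses the common first letter $a$ of $y$ and $z$ together with two distinct states $q_1\neq q_2$ having $(q,a,q_1),(q,a,q_2)\in\Delta$, which certifies that the first transition of $y$ from $q$ is nondeterministic. It checks that $q$ is reachable from $q_1$, realising a non-empty loop $y$ on $q$ whose first letter is $a$. Finally it guesses a transition $(q,a,q_3)\in\Delta$ and checks that some accepting state is reachable from $q_3$, realising a suffix $z$ that begins with $a$ and with $\delta_{\A}(q,z)\cap F\neq\emptyset$; it then accepts. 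Concatenating the guessed paths yields an accepting run $q_0\xrightarrow{x}q\xrightarrow{y}q\xrightarrow{z}f$ meeting conditions~1 and~2 of \cref{lem:ufa-bad-word}, so $w=xyz$ is a genuine witness; conversely, any witness furnished by \cref{lem:ufa-bad-word} yields an accepting computation of the procedure (extracting $a$ as the common first letter, $q_1$ as the first $y$-target, $q_2$ as the other $a$-target, and the remaining path segments from $y$ and $z$). The stored data---current state, $q$, $a$, $q_1$, $q_2$, a three-valued phase counter---has size $\bigO(\log|\A|)$, so this is an \NL algorithm for the complement, and $\NL=\coNL$ finishes the proof.

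The main point to get right is the bookkeeping: the witness word can be arbitrarily long, so it must never be materialised, only traversed while carrying the constant-size summary above; and one must check that the inlined reachability, co-reachability and out-degree tests faithfully model the notions of trimness and ``nondeterministic in the support'' used by \cref{lem:ufa-bad-word} without breaking the logarithmic-space bound---which they do, each being itself an \NL subroutine.
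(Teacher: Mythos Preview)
Your proposal is correct and follows essentially the same approach as the paper: both guess the witness $w=xyz$ of \cref{lem:ufa-bad-word} letter by letter in \NL (so the complement is in \NL), then appeal to $\NL=\coNL$. You spell out more carefully the bookkeeping (phase counter, stored anchor states, on-the-fly trimming so that ``nondeterministic'' is purely structural), whereas the paper is terse, but the underlying idea is identical.
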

\begin{proof}
We have an NL procedure to guess the ``bad'' word of \cref{lem:ufa-bad-word} letter by letter.
Since this bad word has a unique run, it consists of an initial path to a state, a loop on this state that begins with a letter $a$, and then a path to a final state, which also begins with the same letter $a$.
\end{proof}

We can show a matching lower bound in the unambiguous case by a straightforward reduction from the emptiness problem for \unfa, which is known to be NL-complete.

\begin{restatable}{thm}{unambignlhardness}\label{thm:ufa-NL-hard}
Given a \unfa, it is NL-hard to decide if it is positively resolvable, and NL-hard to decide if it is $\lambda$-resolvable for any fixed $\lambda > 0$.
\end{restatable}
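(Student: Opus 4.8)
The plan is to reduce from the emptiness problem for \unfa, which is \NL-complete (and \NL is closed under complement, so non-emptiness is also \NL-complete). Given a trim \unfa $\mach{C}$, I want to construct in logspace a \unfa $\mach{A}$ such that $\mach{A}$ is positively resolvable (equivalently $\lambda$-resolvable for a fixed $\lambda$) if and only if $\machlang{\mach{C}} = \emptyset$. The idea is to graft onto $\mach{C}$ a small gadget that, whenever $\mach{C}$ accepts \emph{some} word $w$, creates a ``bad witness'' of the shape described in \cref{lem:ufa-bad-word}: a run $q_0 \xrightarrow{x} q \xrightarrow{y} q \xrightarrow{z} f$ in which the $y$-loop forces a nondeterministic choice and $y,z$ start with the same letter. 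Conversely, if $\mach{C}$ accepts nothing, then $\mach{A}$ should collapse to a deterministic (hence trivially positively resolvable, and $\lambda$-resolvable for every $\lambda\le 1$) automaton on the relevant part.

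Concretely, over an alphabet extended by a fresh letter $\#$, I would take $\mach{A}$ to have a fresh initial state $p_0$ with a self-loop on $\#$ that is \emph{nondeterministic}: one branch stays at $p_0$, the other enters (a copy of) the initial state of $\mach{C}$ and then reads a word of $\mach{C}$ followed by $\#$ to reach an accepting state. One must check that $\mach{A}$ is still unambiguous: on a word of the form $\#^i w \#$ with $w\in\machlang{\mach{C}}$ there is exactly one accepting run, because the ``stay'' branch is non-accepting on its own and $\mach{C}$ is unambiguous; on all other words there is no accepting run at all (for this, $\#$ must not occur inside words of $\mach{C}$, which holds since $\#$ is fresh). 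If $\machlang{\mach{C}}\neq\emptyset$, pick any $w\in\machlang{\mach{C}}$; then the accepting run on $\#^2 w\#$ has the form $p_0 \xrightarrow{\#} p_0 \xrightarrow{\#} p_0 \xrightarrow{w\#} f$, the $\#$-self-loop at $p_0$ is nondeterministic, and the suffix $w\#\cdots$ also starts with... — here I have to be careful, since the suffix need not start with $\#$. The fix is to make the \emph{loop} letter and the \emph{continuation} letter the same: have the nondeterministic self-loop at $p_0$ on a letter $a$ that is also the first letter of every accepted continuation. For instance route through a fresh prefix letter: $p_0$ has a nondeterministic $a$-self-loop and an $a$-transition into the $\mach{C}$-gadget, and the gadget is arranged so that after that first $a$ it reads an arbitrary $w\in\machlang{\mach{C}}$ and a closing symbol. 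Then $a^2 w\,\$$ witnesses \cref{lem:ufa-bad-word} exactly when $\machlang{\mach{C}}\neq\emptyset$, with $x=a$, $y=a$, $z=w\$$, both $y$ and $z$ starting with $a$. If $\machlang{\mach{C}}=\emptyset$, then no word reaches the gadget's accepting state, so after trimming $\mach{A}$ the nondeterministic $a$-self-loop disappears (it leads nowhere useful), leaving a trim deterministic automaton, which is positively resolvable and $\lambda$-resolvable for every $\lambda\in(0,1]$.

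For the $\lambda$-resolvability claim with a fixed $\lambda$: in the ``yes'' case ($\machlang{\mach{C}}=\emptyset$) the automaton is deterministic so every word gets probability $1\ge\lambda$ for any $\lambda\le 1$, hence it is $\lambda$-resolvable; in the ``no'' case the pumped family $a^i w\,\$$ has, under any memoryless resolver, acceptance probabilities forming a geometric sequence tending to $0$, so no positive $\lambda$ works, in particular not the fixed one. The main obstacle I anticipate is verifying unambiguity of the constructed $\mach{A}$ cleanly — one must ensure the two branches of the $a$-self-loop never both lead to accepting runs on the same word, which requires the closing fresh symbol to appear exactly once and only at the end, and requires $\mach{C}$ itself to be trim and unambiguous; all of this is arranged by the logspace construction and checked by a short case analysis on the position of the fresh symbol(s) in the input word.
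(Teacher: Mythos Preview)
Your approach is essentially the paper's: reduce from \unfa emptiness by prepending a fresh initial state with a nondeterministic self-loop on a fresh letter, so that the loop becomes a bad-word witness precisely when the original automaton accepts something. The paper's construction is slightly leaner (no trailing symbol is needed), but the idea is identical.

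The execution, however, is muddled at the bad-word decomposition. On your word $\#^2 w\#$, the accepting run is $p_0 \xrightarrow{\#} p_0 \xrightarrow{\#} c_0 \xrightarrow{w} c_f \xrightarrow{\#} f$, not $p_0 \xrightarrow{\#} p_0 \xrightarrow{\#} p_0 \xrightarrow{w\#} f$ (from $p_0$ you cannot read the first letter of $w$, since $\#$ is the only outgoing label). The correct witness for \cref{lem:ufa-bad-word} is $x=\emptyword$, $y=\#$, $z=\# w\#$: after $x$ the run is at $p_0$, after $y$ it is back at $p_0$ via the self-loop, and $z$ begins with $\#$ just as $y$ does. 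So your original $\#$-construction already works, and the ``fix'' paragraph is unnecessary; worse, the decomposition you give there ($x=a$, $y=a$, $z=w\$$) fails condition~1 of the lemma (after reading the second $a$ the accepting run is at $c_0$, not $p_0$) and your claim that $z=w\$$ starts with $a$ is false when $a$ is fresh.

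A small wording issue in the empty-language case: if $\machlang{\mach{C}}=\emptyset$ then $\machlang{\mach{A}}=\emptyset$, so the $\lambda$-resolvability condition $\Lang(\mach{P'}_{\ge\lambda})=\machlang{\mach{A}}$ holds vacuously for every $\lambda$; there is no need to talk about ``every word getting probability~$1$''.
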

\begin{proof}
We reduce from the emptiness problem of unambiguous automata which is known to be NL-complete. 
Given an unambiguous NFA $\A$, we create a new unambiguous NFA $\mathcal{B}$ by adding a state $t$ and making it the new initial state. 
We also add a fresh letter $\$$ and two $\$$-transitions, one takes $t$ back to $t$ and another takes $t$ to the initial state of $\A$.
It is also not hard to see that $\mathcal{B}$ is positively resolvable iff $\A$ has an empty language.
Furthermore, the unambiguous NFA $\mathcal{B}$ is positively resolvable iff it is $\lambda$-resolvable for any fixed $\lambda > 0$.
This proves the theorem.\qedhere
\end{proof}

For a positively resolvable \unfa $\A$, \cref{lem:ufa-bad-word} essentially states that no accepting run may contain nondeterministic transitions in its looping part.
This is equivalent to requiring that all transitions within SCCs of $\A$ are deterministic.
Exploiting this structural property, we can compute the largest possible value of $\lambda$ such that $\A$ is $\lambda$-resolvable.
To do so, we collapse all SCCs into single nodes, yielding a rooted directed acyclic graph (DAG), and reduce the problem to counting the maximum number of disjoint paths in this DAG, which can be computed in polynomial time. 

\begin{restatable}{thm}{ufaPTIMEQuantRes}\label{thm:ufa-lambda-resolvable}
It can be decided in polynomial time whether a \unfa is $\lambda$-resolvable.
\end{restatable}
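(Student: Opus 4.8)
The plan is to exploit the structural characterisation from \cref{lem:ufa-bad-word}: a trim \unfa $\A$ is positively resolvable if and only if no accepting run contains a nondeterministic transition in a cycle, equivalently all transitions lying within a strongly connected component (SCC) of $\A$ are deterministic. First I would check, in polynomial time, whether $\A$ is positively resolvable at all by testing this SCC condition (compute the SCCs via Tarjan's algorithm, and for each SCC check that every state has at most one outgoing transition per letter that stays in the same SCC); if it fails, output that $\A$ is not $\lambda$-resolvable for any $\lambda>0$. Otherwise we must compute the largest $\lambda^\ast$ such that $\A$ is $\lambda^\ast$-resolvable, since $\A$ is $\lambda$-resolvable exactly for $\lambda \le \lambda^\ast$ (any resolver witnessing $\lambda^\ast$ also witnesses any smaller threshold, because the accepted language is fixed and acceptance probabilities only need to exceed the threshold).

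The key reduction is to the acceptance-probability landscape once nondeterminism is confined to the acyclic skeleton. Contract every SCC of $\A$ to a single node, obtaining a rooted DAG $D$; transitions interior to an SCC play no role in the probability calculus beyond being deterministic (they contribute factor $1$), so the only probabilistic splitting happens on the DAG edges. For a word $w$ with its unique accepting run $\pi$, the acceptance probability under a memoryless resolver $\mach{R}$ is the product of $\mach{R}$-probabilities along the nondeterministic transitions of $\pi$, all of which lie between SCCs and hence along a root-to-leaf path in $D$. The worst case over all accepted words corresponds to a longest/most-branching such path, and the optimal resolver must spread probability so that, at each branching state-letter pair, the minimum outgoing probability is as large as possible subject to the shares being reused consistently along all paths through that node. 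This is precisely a bottleneck question: I would argue $\lambda^\ast = 1/N$ where $N$ is the maximum, over all accepted words $w$, of the product of out-degrees encountered along $\pi$'s nondeterministic transitions — and that this $N$ equals the maximum number of pairwise ``run-disjoint'' accepting runs that can be forced to share initial segments, computable as a max-flow / disjoint-paths count in $D$. Concretely, assign to each DAG edge capacity equal to the number of letters on which that transition is nondeterministic from its source (or handle letters separately), push flow from the root to the accepting SCCs, and read off the max number of edge-disjoint paths; the uniform-split resolver achieves $1/N$ and \cref{lem:ufa-bad-word}'s pumping-free structure shows no resolver beats it.

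The main obstacle I anticipate is getting the precise combinatorial quantity right: the naive ``max product of out-degrees along a path'' can be over-pessimistic because a resolver may bias splits unevenly, and conversely the disjoint-paths count must be taken with the right multiplicities (a single state with out-degree $k$ on a given letter forces a $1/k$ factor regardless of how many downstream branches actually lead to acceptance, but only branches leading to accepting states matter for the \emph{worst accepted word}). So the careful step is to define, for each node $v$ in $D$ reachable on the accepting run of some word, the quantity $N(v)$ = maximum over accepted words whose run passes through $v$ of the product of out-degrees on nondeterministic transitions from $v$ onward, show it satisfies a recurrence $N(v) = \max_{\sigma} \big( \deg_\sigma(v) \cdot \max_{v' } N(v')\big)$ over successors $v'$ reachable on $\sigma$ via a nondeterministic transition on a run of an accepted word (and $N(v)=1$ if $v$ can accept with no further nondeterminism), compute all $N(v)$ by a single reverse-topological pass over $D$ in polynomial time, and then verify matching bounds: the uniform resolver gives every accepted word probability $\ge 1/N(q_0)$, and some accepted word witnesses that no resolver does better than $1/N(q_0)$ by a counting/averaging argument at the branching nodes. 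Once $\lambda^\ast = 1/N(q_0)$ is established, deciding $\lambda$-resolvability for the given $\lambda$ is just the comparison $\lambda \le 1/N(q_0)$, all within polynomial time.
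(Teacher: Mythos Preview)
Your high-level plan matches the paper's: verify positive resolvability via the SCC condition, collapse SCCs to a DAG, and compute the optimal threshold $\lambda^\ast$ by a reverse-topological dynamic programme. The gap is in the specific quantity your recurrence computes.

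Your recurrence $N(v)=\max_{\sigma}\big(\deg_\sigma(v)\cdot\max_{v'}N(v')\big)$ yields the maximum product of out-degrees along any root-to-sink path. You correctly observe that the uniform resolver achieves $1/N(q_0)$, but the claim that \emph{no} resolver can do better is false---for exactly the reason you yourself flag (``over-pessimistic because a resolver may bias splits unevenly''). Take a DAG with $q_0\xrightarrow{a}q_1,q_2$, then $q_1\xrightarrow{b}q_3,q_4$, while $q_2,q_3,q_4$ each continue deterministically to distinct accepting sinks. Your formula gives $N(q_0)=2\cdot 2=4$, but the resolver assigning $\tfrac{2}{3}$ to $q_0\!\to\!q_1$, $\tfrac{1}{3}$ to $q_0\!\to\!q_2$, and $\tfrac{1}{2}$ each on $q_1\!\to\!q_3,q_4$ gives every accepted word probability exactly $\tfrac{1}{3}$. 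So $\lambda^\ast=\tfrac{1}{3}$, not $\tfrac{1}{4}$, and your ``counting/averaging argument at the branching nodes'' cannot establish the upper bound for the quantity $N$. The disjoint-paths\,/\,max-flow framing is likewise not the right invariant here.

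The correct recurrence replaces $\deg\cdot\max$ by a \emph{sum}: set $g(q)=1$ at sinks and $g(q)=\max_{\sigma}\sum_{(q,\sigma,q')\in\Delta_B}g(q')$ otherwise; then $\lambda^\ast=1/g(q_0)$. Now the averaging argument does go through: fixing at each node the maximising letter $f(q)$ singles out $g(q_0)$ accepted words whose unique runs branch only on the chosen letters, and since the resolver's probabilities on letter $f(q)$ at $q$ sum to $1$, the acceptance probabilities of these $g(q_0)$ words sum to exactly $1$, forcing the minimum to be at most $1/g(q_0)$. The matching resolver puts probability $g(q')/g(q)$ on the edge $(q,f(q),q')$. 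This is precisely the paper's computation; it is still a single reverse-topological pass, so once you replace the recurrence your polynomial-time claim stands.
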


\begin{proof}
Let $\A$ be a \unfa over the alphabet $\Sigma$.
By \cref{thm:1-ambiguous-non-fixed}, we can decide in NL whether $\A$ is positively resolvable.
If so, we compute a threshold value $\lambda'$ such that $\A$ is $\lambda'$-resolvable, but not $(\lambda' + \epsilon)$-resolvable for any $\epsilon > 0$.
This allows us to conclude whether $\A$ is $\lambda$-resolvable by checking if $\lambda' \leq \lambda$.

To compute $\lambda'$, we first eliminate all unreachable states and states with empty language from $\A$ and then perform a strongly connected component (SCC) decomposition.
This yields a new NFA $\mach{B}$ whose structure forms a rooted directed acyclic graph (DAG), obtained by collapsing each SCC into a single state.

We define a function $f : Q_B \to \Sigma$ and construct an NFA $\mach{B}_f$ by restricting $\mach{B}$ to retain only the $f(q)$-transitions for each state $q \in Q_B$.
The number of distinct paths from the initial state to the sink states in $\mach{B}_f$ corresponds to the number of distinct words in $\A$ whose acceptance probabilities are disjoint.
Therefore, this quantity provides a lower bound on $1 / \lambda'$.

Our goal is to find a function $f$ that maximizes the number of such paths in $\mach{B}_f$.
We show below that this optimal mapping can be computed in polynomial time.  

We define an auxiliary function $g: Q_B \to \mathbb{N}$ that maps each state $q$ to the maximum number of paths from $q$ to the sink states in $\mach{B}$.  
Consequently, we have $\lambda' = 1 / g(q_0)$, where $q_0$ is the initial state of $\mach{B}$.  
Let $\Delta_B$ be the set of transitions of $\mach{B}$.
We compute in polynomial time $f$ and $g$ in a bottom-up manner:
\begin{itemize}
    \item $g(q) = 1$, if $q$ is a sink state;
    \item $g(q) = \displaystyle\max_{a \in \Sigma}\left\{ \sum_{(q, a, q') \in \Delta_B}{g(q')} \mid g(q') \text{ is defined for all } q' \text{ such that } (q, a, q') \in \Delta_B \right\}$ and $f(q) = \displaystyle\arg\max_{a \in \Sigma}\left\{ \sum_{(q, a, q') \in \Delta_B}{g(q')} \mid g(q') \text{ is defined for all } q' \text{ such that } (q, a, q') \in \Delta_B \right\}$, if $q$ is not a sink state. \qedhere
\end{itemize}

\end{proof}

\begin{figure}[htb]
\centering
\scalebox{0.8}{
\begin{tikzpicture}
\tikzset{
xscale=1,>=latex',shorten >=1pt,node distance=2.5cm,
every state/.style={inner sep =.09cm,minimum size=1},
accstate/.style={ circle, fill=gray!30,inner sep =.09cm,minimum size=1},
on grid,auto,initial text = {}	
}

\node[state] 	   (q1)  {$q_1$};
\node[state, below=2 of q1]  (q2) {$q_2$};
\node[state, below left=2 of q2]  (q3) {$q_3$};
\node[state, below right=2 of q2]  (q4) {$q_4$};

\path[->,line width=0.25mm,color=red] (q1) edge  [left]node {\color{black} $b$} (q2);
\path[->,line width=0.25mm,color=red] (q2) edge  [below]node {\color{black} $c$} (q3);
\path[->,line width=0.25mm,color=red] (q2) edge  [below]node {\color{black} $c$} (q4);

\draw[->] (q1) edge[bend right=50] node[above left] {$a$} (q3);
\draw[->,line width=0.25mm,color=red] (q1) edge[bend left=10] node[above right] {$b$} (q4);
\draw[->] (q1) edge[bend left=50] node[above right] {$a$} (q4);

\node[state, right=8 of q1] 	   (pq1)  {$q_1$};
\node[state, below=2 of pq1]  (pq2) {$q_2$};
\node[state, below left=2 of pq2]  (pq3) {$q_3$};
\node[state, below right=2 of pq2]  (pq4) {$q_4$};

\path[->,line width=0.25mm,color=red] (pq1) edge  [left]node {\color{black} $b: \frac{2}{3}$} (pq2);
\path[->,line width=0.25mm,color=red] (pq2) edge  [above left]node {\color{black} $c: \frac{1}{2}$} (pq3);
\path[->,line width=0.25mm,color=red] (pq2) edge  [below left]node {\color{black} $c: \frac{1}{2}$} (pq4);

\draw[->] (pq1) edge[bend right=50] node[above left] {$a: \frac{1}{2}$} (pq3);
\draw[->,line width=0.25mm,color=red] (pq1) edge[bend left=10] node[right] {$b: \frac{1}{3}$} (pq4);
\draw[->] (pq1) edge[bend left=50] node[above right] {$a: \frac{1}{2}$} (pq4);

\end{tikzpicture}
}
\caption{(Left) An example NFA $\mathcal{B}$ after SCC decomposition, where each strongly connected component has been collapsed into a single node.
(Right) The same NFA $\mathcal{B}$ with a stochastic resolver applied. }
\label{fig:ptime-ufa}
\end{figure} 

\begin{exa}
See left of \cref{fig:ptime-ufa} for an illustration of the NFA $\mathcal{B}$ after SCC decomposition.
We simulate a run of the algorithm on $\mathcal{B}$ as follows.
Initially, the sink states $q_3$ and $q_4$ are assigned values $g(q_3) = g(q_4) = 1$.
In the next iteration, since both $c$-successors of $q_2$ have already been assigned, we compute
$g(q_2) = g(q_3) + g(q_4) = 2$ and set $f(q_2) = c$.
Finally, we process $q_1$: choosing input $a$ yields $g(q_3) + g(q_4) = 2$, while choosing $b$ yields $g(q_2) + g(q_4) = 3$, which is larger.
Thus, we set $g(q_1) = 3$ and $f(q_1) = b$.
The automaton is then $\frac{1}{3}$-resolvable but not $\frac{1}{3}+\varepsilon$-resolvable for any $\varepsilon > 0$.
We show the stochastic resolver on the right of \cref{fig:ptime-ufa} to achieve $\frac{1}{3}$-resolvability.
\end{exa} %

\section{Finitely-Ambiguous Finite Automata}
\label{sec:fnfa}

In this section we establish decidability and hardness for both positive resolvability and $\lambda$-resolvability for \finamb \nfa.

\subsection{Hardness of Resolvability}
\label{subsec:fnfa-hardness}
We show that when an \nfa is \finamb, deciding the existence of a resolver, and the existence of a $\lambda$-resolver are both $\coNP$-hard in the unary case and $\PSPACE$-hard otherwise. 
While the unary and non-unary cases differ slightly, both are based on deciding whether the union of $k$ given DFAs are universal. 

The ideas of both proofs are the same, we will extend the given automaton with a new component that ensures it is universal, but the new component itself will not be positively resolvable (see \cref{fig:hardnessidea}). Therefore, if the existing $k$-DFAs are universal then the resolver that uniformly resolves between these $k$-DFAs (assigning zero weight to the new component) is sufficient. However, if the $k$-DFAs are not themselves universal, then the new component must be used to recognise some words, but this component is not positively resolvable, and so neither is the whole automaton. The difference between the proofs will be how we ensure this works even if only a single short word is not in the union of the $k$-DFAs.

\begin{figure}
    \centering
    \scalebox{0.8}{
\begin{tikzpicture}[
    state/.style={circle, draw, minimum size=0.8cm},
    box/.style={rectangle, draw, minimum width=3.5cm, minimum height=1cm, align=center,rounded corners},
    ->, >=stealth, node distance=1.5cm and 0.5cm
  ]

\node[state, initial, initial text=] (q0) {};

\node[box, below right=of q0,yshift=1.3cm] (box1) {Universal but not\\ positively resolvable};
\node[box, right=of box1] (box2) {DFA};
\node[box, right=of box2,xshift=0.5cm] (box3) {DFA};

\node[right=0.2cm of box2, align=center] (dots) {$\cdots$};

\coordinate[above=0.3cm of dots,xshift=-0.2cm] (space);

\draw (q0) edge (box1);
\draw (q0) edge[bend left=10] (box2);
\draw (q0) edge[bend left=11,color=gray] (space);
\draw (q0) edge[bend left=12] (box3);

\end{tikzpicture}
}
\caption{Conceptual idea of hardness in both unary and non-unary cases (exact details differ).}
    \label{fig:hardnessidea}
\end{figure}

In the unary case we reduce from the problem of whether a unary NFA is universal which is $\coNP$-hard \cite{stockmeyer1973word}, but similarly to \cite[Theorem 7.20]{ChistikovKMP22}, we can assume that the shape is the union of simple cycles. By bringing into Chrobak normal form in polynomial time the automaton has an initial stem and a nondeterministic transition to $k$ cycles at the end of the stem. But universality requires that all words are accepting, each state of the stem must be accepting, which can be pre-checked in polynomial time. Thus the hard case remains to determine whether the union of the remaining $k$-cycle DFAs are universal.
\begin{restatable}{thm}{unaryhardnessthm}
\label{thm:unaryhardness}
Given a unary k-ambiguous NFA it is $\coNP$-hard to decide if it is positively resolvable, and $\coNP$-hard to decide if it is $1/(k-1)$-resolvable for $1<k\in\ints$.
\end{restatable}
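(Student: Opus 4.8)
\textbf{Proof plan for \cref{thm:unaryhardness}.}
The plan is to reduce from unary NFA universality, which is $\coNP$-hard by~\cite{stockmeyer1973word}. As noted before the statement, after bringing the given unary NFA into Chrobak normal form in polynomial time, we may assume it consists of an initial deterministic stem followed by a single nondeterministic branching into $k$ simple cycles $C_1,\dots,C_k$ of lengths $\ell_1,\dots,\ell_k$. Universality forces every state on the stem to be accepting, which we check in polynomial time; if this fails we output a fixed non-positively-resolvable automaton, and otherwise the remaining question is exactly whether the union of the $k$ cycle-DFAs (each $C_i$ being a deterministic cycle with some subset of accepting states, read starting from the branch point) is universal over $\{a\}$. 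So from now on the input is $k$ such cycles and we must decide their union's universality.

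From this input we build a unary $k$-ambiguous NFA $\A'$: keep a copy of the stem, then from the end of the stem add the nondeterministic transition into the $k$ given cycles \emph{and} one additional new component $\mathcal{N}$ that is designed to (i) accept all of $\{a\}^*$ read from the branch point, so that $\A'$ is unconditionally universal, but (ii) never be positively resolvable on its own, and crucially (iii) keep the whole automaton $k$-ambiguous. For (ii) the key idea, exactly as in the unambiguous case witnessed by \cref{fig:eg_nfa_b} and \cref{lem:ufa-bad-word}, is to route acceptance in $\mathcal{N}$ through a self-loop (or short cycle) that carries a nondeterministic transition, so that any stochastic resolver assigns probability tending to $0$ along $a^n$ within $\mathcal{N}$. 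The subtlety over the non-unary case is that $\mathcal{N}$ must be helpful even when the union of the $C_i$ misses only a single, possibly large, word length; the natural way is to make $\mathcal{N}$'s accepting behaviour periodic with a period coprime to (or a suitable common refinement related to) $\mathrm{lcm}(\ell_1,\dots,\ell_k)$, or simply to have $\mathcal{N}$ accept every word, so that $\A'$ is universal no matter what, while the $C_i$-part alone is universal iff the original union is. One then argues:

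\emph{Completeness.} If $\bigcup_i \Lang(C_i) = \{a\}^*$, then the resolver that at the branch point puts probability $1/k$ on each of the $k$ cycles $C_i$ (and weight $0$ on entering $\mathcal{N}$) already accepts every word with probability at least $1/k$; being more careful about which subset of cycles is accepting for each residue mod $\mathrm{lcm}(\ell_i)$, the guaranteed probability is $(\#\text{accepting cycles for that residue})/k \ge 1/k$, and in the worst residue it is exactly $1/(k-1)$ when at least one cycle fails to accept there but the union still does — this is where the $1/(k-1)$ threshold comes from, so we should tune $\mathcal{N}$ and the accepting sets of the $C_i$ so that in the universal case every residue class is accepted by at least $k-1$ of the $k$ cycles, giving a $1/(k-1)$-resolver. \emph{Soundness.} If the union of the $C_i$ is not universal, pick a missing length; $\mathcal{N}$ must carry nonzero weight for the corresponding word, but $\mathcal{N}$'s only accepting runs traverse a nondeterministic loop, so by \cref{lem:bad-edges-increase} (the function $b(w_i)$ strictly increases along the pumped words $a^{n}$ inside $\mathcal{N}$) the support using $\mathcal{N}$ is bad, and no resolver positively resolves $\A'$ — so it is neither positively resolvable nor $1/(k-1)$-resolvable.

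\textbf{Main obstacle.} The delicate part is the interaction of three requirements on $\mathcal{N}$: it must be needed for universality whenever a single residue class is missed from $\bigcup_i\Lang(C_i)$, it must not be positively resolvable, and adding it must not push the ambiguity above $k$ (so $\mathcal{N}$ can contribute at most a bounded number of accepting runs, and must not add runs on words already covered by the $C_i$, or we must re-budget $k$ accordingly). Getting the $1/(k-1)$ bound exactly — rather than merely ``some positive $\lambda$'' — requires carefully arranging the accepting states along the cycles so that in the universal case every residue is hit by exactly $k-1$ of them and designing $\mathcal{N}$ to cover precisely the residues where fewer than $k-1$ cycles accept; balancing this counting against the ambiguity bound is the crux of the construction, and is where the ``details differ'' remark and the appendix work will go.
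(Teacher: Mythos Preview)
Your high-level plan is right and matches the paper --- reduce from universality of a union of $k-1$ unary cycle DFAs, add a single extra branch that makes the automaton unconditionally universal while itself being non-positively-resolvable, and argue the two directions. But two of your worries are phantom obstacles, and your explanation of the $1/(k-1)$ threshold is off.

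First, the ``single missing word length'' concern dissolves immediately: the union of cycle DFAs is periodic with period $L=\mathrm{lcm}(\ell_i)$, so if $a^n$ is missed then so is $a^{n+L}, a^{n+2L},\dots$. Hence whenever the union is non-universal there are infinitely many missing words, and your gadget $\mathcal N$ need only be the obvious two-state ``self-loop plus exit'' construction (the paper uses $t_2$ with a self-loop and an edge to an accepting sink $t_3$, plus a separate one-state branch $t_1$ for the length-$1$ word). No coprimality or refinement games are needed.

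Second, the $1/(k-1)$ threshold has nothing to do with arranging that $k-1$ of the cycles accept each residue. The paper starts from $k-1$ cycles (these are the input), and the gadget $\mathcal N$ contributes exactly one additional accepting run on every word, so the result is $k$-ambiguous. In the universal case the resolver simply puts weight $0$ on $\mathcal N$ and $1/(k-1)$ uniformly on the $k-1$ cycles; since every word is accepted by at least one cycle, the probability is at least $1/(k-1)$. Your proposed tuning of accepting sets is unnecessary and would not survive the reduction anyway, since you do not get to choose the accepting sets of the $C_i$. What you flagged as the ``main obstacle'' --- budgeting the ambiguity --- is thus a one-line observation: the gadget adds one run per word, full stop.
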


\begin{proof}[Proof of \cref{thm:unaryhardness}]

Let $\A$ be a DFA structured as the union of $k$ cycles, reached from a single initial accepting state $q_0$. Let $\mach{B}$ be an NFA, comprised of $\A$ and three additional states. The first $t_1$ is such that $q_0\to t_1$ and $t_1$ is accepting --- this ensures $\mach{B}$ accepts the word of length 1. The states $t_2,t_3$ are used to ensure $B$ accepts all words of length $n\ge 2$, but with a particular shape: $q_0\to t_2$, a self loop on $t_2$ and an edge from $t_2$ to $t_3$. The state $t_3$ has no outgoing edges and is accepting. Note that only one of the $t_1$ or $t_2,t_3$ branch can accept any given word, so the ambiguity increases only by one. Thus, $\mach{B}$, depicted in \cref{fig:hardnessunary}, is universal and $k+1$ ambiguous. 
\begin{figure}
    \centering
    
\begin{tikzpicture}[
    state/.style={circle, draw, minimum size=0.8cm},
    box/.style={rectangle, draw, minimum width=3.5cm, minimum height=1cm, align=center, rounded corners},
    ->, >=stealth, node distance=1.5cm and 0.5cm
  ]

\node[state, initial, initial text=] (q0) {$q_0$};

\node[state, below=of q0,yshift=0.8cm,accepting] (t1) {$t_1$};
\node[state, right=of t1] (t2) {$t_2$};
\node[state, right=of t2,accepting] (t3) {$t_3$};

\node[box, right=of t3] (box2) {DFA (cycle)};
\node[box, right=of box2,xshift=0.5cm] (box3) {DFA (cycle)};

\draw[->] (q0) -- (t1);
\draw[->] (q0) -- (t2);

\draw[->] (q0) edge[bend left=10] node[above left] {} (box2);
\draw[->] (q0) edge[bend left=10] node[above left] {} (box3);

\draw[->] (t2) edge[loop below] node[below] {} (t2);
\draw[->] (t2) edge[]  (t3);

\node[right=0.2cm of box2, align=center] (dots) {$\cdots$};
\coordinate[above=0.3cm of dots,xshift=-0.2cm] (space);
\draw (q0) edge[bend left=10,color=gray] (space);

\end{tikzpicture}
\caption{Automaton showing coNP-hardnss of resolvability in the unary case.}
    \label{fig:hardnessunary}
\end{figure}

\begin{claim}
$\A$ is universal if and only if $\mach{B}$ is positively resolvable if and only if $\mach{B}$ is $1/k$-resolvable.
\end{claim}
If $\A$ is universal, the stochastic resolver can assign $1/k$ probability to each branch in $\A$, and assign zero probability to $t_1,t_2,t_3$ branches of $\mach{B}$. Hence, $\mach{B}$ is positively resolvable, in particular $1/k$-resolvable.

If $\A$ is not universal, then due to the cyclic structure, there is an infinite sequence of words $w_1,w_2,\dots \not\in L(A)$.  We assume wlog that $|w_i| > |w_j|$ for $i> j$. Then observe that any resolver for $\mach{B}$ must assign diminishing probabilities to $w_i$ as $i\to\infty$. It cannot achieve any probability from the $\A$ branches, so must achieve it using $t_2,t_3$ branch. However, there must be some probability $p\in(0,1)$ on the self loop, and $1-p\in(0,1)$ on the edge from $t_2$ to $t_3$, otherwise some word has zero probability. But then the probability of the word of length $n$ is $\Theta((p^{n-1}(1-p))$, which tends to zero as $n\to\infty$, in particular on $w_1,w_2\dots$. Hence, $\mach{B}$ is not positively resolvable.
\end{proof}

In the non-unary case we show $\PSPACE$-hardness via the problem asking whether the union language of $k$-DFAs is universal, which is $\PSPACE$-complete. 
This problem can be seen as the complement of the (well-known) problem of intersection emptiness for DFAs, that is, whether the intersection of $k$ (not fixed) DFAs is empty, which is known to be $\PSPACE$-complete \cite{Kozen1977}. 
The unary case exploits the cyclic structure to ensure that if some word is not in the union, then there are indeed infinitely many missing words. There is no such natural analogue in the non-unary case requiring a modified construction.

\begin{restatable}{thm}{generalhardness}
\label{thm:SR-fixed-k}
Given a $k$-ambiguous NFA, it is $\PSPACE$-hard to decide if it is positively resolvable, and $\PSPACE$-hard to decide if it is $1/(k-2)$-resolvable for $2<k\in\ints$.
\end{restatable}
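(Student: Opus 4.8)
The plan is to reduce from the $\PSPACE$-complete problem asking whether the union $\Lang(\mach{D}_1) \cup \dots \cup \Lang(\mach{D}_k)$ of $k$ (not-fixed) complete \dfa{}s over an alphabet $\Sigma$ is universal, i.e.\ equals $\Sigma^*$; equivalently, whether the intersection of the $k$ complement \dfa{}s is empty \cite{Kozen1977}. Following the conceptual picture of \cref{fig:hardnessidea}, given $\mach{D}_1,\dots,\mach{D}_k$ we build an \nfa $\mach{A}$ whose initial state $q_0$ has, on a fresh first letter (or directly, after a small gadget), $k{+}1$ nondeterministic branches: $k$ branches are disjoint copies of the $\mach{D}_i$, and one extra branch is a fixed ``bad gadget'' $\mach{G}$ that on the one hand recognizes all of $\Sigma^*$ (so $\Lang(\mach{A}) = \Sigma^*$ always, making the automaton trim and keeping ambiguity bounded), but on the other hand is \emph{not positively resolvable}: concretely $\mach{G}$ contains an accepting run of the shape $q_0 \xrightarrow{x} q \xrightarrow{y} q \xrightarrow{z} f$ with a nondeterministic transition inside the loop $y$, so that by \cref{lem:ufa-bad-word}-style pumping the acceptance probability of $xy^iz$ in $\mach{G}$ decays to $0$. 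The bad gadget of \cref{fig:eg_nfa_b} is the prototype; a standard choice is $q_0 \xrightarrow{a,b} q_0$ with $q_0 \xrightarrow{b} q_f$, padded so that it accepts $\Sigma^*$.

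The correctness argument has two directions. If the union $\bigcup_i \Lang(\mach{D}_i)$ is universal, then the resolver that at $q_0$ splits uniformly among the $k$ \dfa-branches (assigning probability $0$ to the $\mach{G}$-branch) accepts every word $w$ with probability exactly $|\{i : w \in \Lang(\mach{D}_i)\}|/k \ge 1/k > 0$ since each $\mach{D}_i$ is deterministic; hence $\mach{A}$ is positively resolvable, and in fact $1/k$-resolvable (this is where the threshold $1/(k-2)$ in the statement is tuned: the precise branch count after the gadget is arranged so that a not-in-union word would be forced to rely on the decaying gadget branch and could never reach $1/(k-2)$). Conversely, suppose some $w_0 \notin \bigcup_i \Lang(\mach{D}_i)$; then the only accepting runs of $\mach{A}$ on words of the form $x w_0^{\,?}z$ built from the gadget's pumpable loop must pass through the bad gadget, and any resolver assigns the looping nondeterministic transition some fixed probability $p<1$, so $\mach{P}^{\mach{A}}_{\mach{R}}(x y^i z) \le p^{\,i} \to 0$; thus no positive lower bound can hold and $\mach{A}$ is not positively resolvable. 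Formally this is an instance of \cref{lem:bad-edges-increase}: the sequence $x y^i z$ has $b(\cdot)$ strictly increasing, so the (unique relevant) support is bad. The same sequence also rules out $1/(k-2)$-resolvability, giving both hardness claims simultaneously.

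The subtlety — and the main obstacle, flagged in the excerpt as the reason the non-unary construction must differ from the unary one — is making the ``only one short missing word'' case bite. If $\bigcup_i \Lang(\mach{D}_i)$ omits merely a single, bounded-length word $w_0$ (say the empty word or a length-one word), a naive reduction would let the resolver recognize $w_0$ through some finite deterministic detour at positive probability, so $\mach{A}$ could still be positively resolvable and the reduction would fail. The fix is to design the gadget/encoding so that \emph{any} missing word of the union is inflated into an infinite family of missing words sharing a pumpable loop: for instance, route the $\mach{D}_i$-branches so they only accept padded words $\#^j w$ for all $j$ while the $\mathord{\#}$-self-loop is nondeterministic only in the gadget, or interleave $w_0$ with the gadget's loop letter so that $w_0 \notin \bigcup_i\Lang(\mach{D}_i)$ forces $x w_0' {}^i z \notin \bigcup_i \Lang(\mach{D}_i)$ for all $i$, compelling the use of $\mach{G}$. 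Getting this padding to simultaneously (i) preserve universality of $\Lang(\mach{A})$, (ii) preserve $k$-ambiguity (only finitely many accepting runs on any word — the gadget must be unambiguous on each word, and the $\mathord{\#}$-padding must not create spurious runs), and (iii) keep the reduction polynomial, is the delicate bookkeeping; once it is in place, the two-direction argument above closes the proof.
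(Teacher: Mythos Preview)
Your high-level plan is right and matches the paper: reduce from universality of the union of $k$ \dfa{}s, build an \nfa{} whose branches are the $\mach D_i$ together with an always-accepting but non-positively-resolvable gadget, and argue the two directions exactly as you do. You also correctly identify the one real obstacle --- a single missing word must be amplified into an infinite family that only the gadget can cover.

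What is missing is the actual construction. You gesture at two fixes (a fresh letter $\#$ with ``the $\#$-self-loop nondeterministic only in the gadget'', or ``interleave $w_0$ with the gadget's loop letter'') but neither is specified, and the devil really is in the details you defer. In particular, your prototype gadget (\cref{fig:eg_nfa_b}) only accepts $\Sigma^*b$, not $\Sigma^*$; ``padding so that it accepts $\Sigma^*$'' is exactly the hard part, since naive ways to do this either make the gadget deterministic (and hence resolvable), or unboundedly ambiguous, or shift the nondeterminism so that the decaying family no longer exists. Without a concrete gadget that is simultaneously (i) universal over the relevant alphabet, (ii) finitely ambiguous, and (iii) not positively resolvable, the backward direction does not go through.

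The paper solves this differently from either of your sketches, and without a fresh letter. Over $\Sigma=\{a,b\}$ it places, in front of the $\mach D_i$, a single state $t_1$ with a \emph{deterministic} $a$-self-loop and a nondeterministic $b$-transition into the $k$ \dfa{}s; thus any missing $w$ yields the infinite family $a^i b w$ missing from the $t_1$-branch. The ``bad gadget'' is two states $t_2,t_3$ (both accepting) where $t_2$ nondeterministically takes $a$ to $t_2$ or $t_3$, and $t_3$ nondeterministically takes $b$ to $t_3$ or $t_2$. This gadget is $2$-ambiguous, accepts every nonempty word over $\{a,b\}$, yet on $a^i b w$ the run from $t_2$ must take the $a$-self-loop $i{-}1$ times before switching, forcing probability $\le p^{i-1}(1{-}p)\to 0$. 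The whole automaton is $(k{+}2)$-ambiguous and $1/k$-resolvable iff the union is universal, which gives the $1/(k{-}2)$ threshold after reindexing.
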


\begin{proof}[Proof of \cref{thm:SR-fixed-k}]
Given $k$ DFAs, we construct a $(k+2)$-ambiguous NFA $\A$, such that the union language of the $k$ DFAs is universal if and only if $\A$ is positively resolvable.

Without loss of generality, assume the DFAs are over $\Sigma = \{a, b\}$.
We introduce four additional states on top of the $k$ DFAs. 
The new state $t_1$ in $\A$ transitions to itself upon reading $a$ and non-deterministically transitions to the initial states of the $k$ DFAs upon reading $b$.
The new states $t_2$ and $t_3$ ensure that $\A$ accepts all words. 
At state $t_2$, there are only $a$-transitions, which lead either to itself or to $t_3$. 
Similarly, at state $t_3$, there are only $b$-transitions, which lead either to itself or to $t_2$.

All three new states, $t_1$, $t_2$, and $t_3$, are made accepting. 
The new initial state $t_0$ transitions to $t_1$, $t_2$, and $t_3$ upon reading $\varepsilon$.

\begin{figure}
    \centering
    
\begin{tikzpicture}[
    state/.style={circle, draw, minimum size=0.8cm},
    box/.style={rectangle, draw, minimum width=3.5cm, minimum height=1cm, align=center, rounded corners},
    ->, >=stealth, node distance=1.5cm and 0.5cm
  ]

\node[state, initial, initial text=] (q0) {};

\node[state, right=of q0,yshift=0.6cm,accepting] (t1) {$t_1$};
\node[state, below=of t1, yshift=1cm,accepting] (t2) {$t_2$};
\node[state, right=of t2,accepting] (t3) {$t_3$};

\node[box, right=of t3] (box2) {DFA};
\node[box, right=of box2,xshift=0.5cm] (box3) {DFA};

\draw[->] (q0) -- (t1);
\draw[->] (q0) -- (t2);
\draw[->] (q0) edge[bend left=10, in=130] (t3);

\draw[->] (t1) edge[loop above] node[above] {$a$} (t1);
\draw[->] (t1) edge[bend left=10] node[above left] {$b$} (box2);
\draw[->] (t1) edge[bend left=20] node[above left] {$b$} (box3);

\draw[->] (t2) edge[loop below] node[below] {$a$} (t2);
\draw[->] (t2) edge[bend left] node[above] {$a$} (t3);
\draw[->] (t3) edge[loop below] node[below] {$b$} (t3);
\draw[->] (t3) edge[bend left] node[below] {$b$} (t2);

\node[right=0.2cm of box2, align=center] (dots) {$\cdots$};
\coordinate[above=0.3cm of dots,xshift=-0.2cm] (space);
\draw (t1) edge[bend left=15,color=gray] (space);

\end{tikzpicture}
\caption{Automaton showing PSPACE-hardness of resolvability in the general case.}
    \label{fig:hardnessgeneral}
\end{figure}

It is not hard to see that any word, upon reading $\varepsilon$, has at most $k$ accepting runs when transitioning to $t_1$ and exactly two accepting runs when transitioning to either $t_2$ or $t_3$, depending on the initial letter of the word. If the initial letter is $a$, there are two accepting runs from $t_2$ and none from $t_3$. 
Conversely, if the initial letter is $b$, there are two accepting runs from $t_3$ and none from $t_2$. Thus, $\A$, depiected in \cref{fig:hardnessgeneral}, is universal and $k+2$ ambiguous.

\begin{claim}
The union of the $k$ DFAs is universal if and only if $\mach{A}$ is positively resolvable if and only if $\mach{A}$ is $1/k$-resolvable.
\end{claim}

Assume that the union language of the $k$ DFAs is universal. 
The stochastic resolver can assign probability $1$ to the $t_1$ branch, probability $0$ to the $t_2$ and $t_3$ branches, and a probability of $\frac{1}{k}$ for each of the $k$ nondeterministic transitions from $t_1$ to the initial states of the $k$ DFAs.
Thus, $\A$ is stochastically resolvable with a threshold of $\frac{1}{k}$.

Assume that a word $w$ is missing from the union language of the $k$ DFAs. 
In particular, consider the infinite sequence of words $w_1, w_2, \dots$ where $w_i = a^{i}bw$. 
These words are not accepted by the $t_1$ branch of $\A$.

The stochastic resolver must assign a nonzero probability to both the $t_2$ and $t_3$ branches, as well as to both $a$-transitions from $t_2$. 
Let the probability of the self-loop at $t_2$ be $p \in (0,1)$ and the probability of transitioning from $t_2$ to $t_3$ be $1 - p$. 
The probability of the word $w_i = a^{i}bw$ is at most $p^{i}$, which tends to zero as $i \to \infty$.
In other words, the resolver assigns diminishing probability to the infinite sequence $w_1, w_2, \dots$.
Hence, $\A$ is not positively resolvable.
\end{proof}
\subsection{Deciding Positive Resolvability for Finitely-Ambiguous \nfa}
\label{subsec:fnfa-pr}

In \cref{sec:unambiguous}, we established positive resolvability for unambiguous automata by identifying a witnessing ``bad'' word. 
In this section, we generalise the notion of a ``bad'' word to \finamb \nfa.
For any \unfa $\A$, there exists a unique support $S$ such that $\Ll(\A) = \Ll(\A_S)$ %
assuming $\A$ is trim.
However, when $\A$ is no longer unambiguous, there may be multiple such supports $S$ satisfying $\Ll(\A) = \Ll(\A_S)$.
If $\A$ is not positively resolvable, then all of its supports are \emph{bad}, in the sense that no stochastic resolver over any of these supports can positively resolve $\A$.
Our algorithm for deciding positive resolvability works by checking if supports of $\A$ are bad.
We characterise a bad support $S$, %
using a witness \emph{bad} word that satisfies several conditions. 
This is a generalisation of the witnessing word in the \unfa case. 
Intuitively, a bad word for a support is a word, which contain subwords, that can be pumped arbitrarily to produce new words with acceptance probability arbitrarily close to $0$. 
However, finding a word that can be pumped is not sufficient, as pumping can create new runs which does not have non-deterministic transitions. Consider the automaton in \Cref{fig:fnfa-bad-pumping}. For the word, $abb$ there is a unique run $q_0 \xrightarrow{a} q_1 \xrightarrow{b} q_1 \xrightarrow{b} f$, with a non-deterministic transition on $b$ at $q_1$. However, for the pumped word $ab^ib$ with $i> 2$ there are two runs, one of which is $q_0 \xrightarrow{a} q_1 \xrightarrow{b^i} q_1 \xrightarrow{b} f$ that will vanish with increasing $i$, and the other is $q_0 \xrightarrow{a} q_2 \xrightarrow{b} q_3 \xrightarrow{b} q_4 \xrightarrow{b} f \xrightarrow{b^{i-2}} f$ with no vanishing non-deterministic transitions. Hence, pumping the subword $b$ does not necessarily yield new words with acceptance probability close to zero.

\begin{figure}[!htbp]
\centering
\begin{tikzpicture}
\tikzset{
xscale=1,>=latex',shorten >=1pt,node distance=2.5cm,
every state/.style={inner sep =.09cm,minimum size=1},
accstate/.style={ circle, fill=gray!30,inner sep =.09cm,minimum size=1},
on grid,auto,initial text = {}	
}

\node[state,initial]  (q0) at (-4.8,1.2) {$q_0$};

\node[state]  (q1) at (-1.8,2.6) {$q_1$};
\node[state,accepting] (f1) at (1.0,2.6) {$f_1$};

\node[state]  (q2) at (-1.8,0.2) {$q_2$};
\node[state]  (q3) at (0.2,0.2) {$q_3$};
\node[state]  (q4) at (2.2,0.2) {$q_4$};
\node[state,accepting] (f2) at (4.2,0.2) {$f_2$};

\draw[->] (q0) -- node[above left,black] {$a$} (q1);
\draw[->,red] (q1) edge[loop above] node[black] {$b$} ();
\draw[->] (q1) -- node[above,black] {$b$} (f1);

\draw[->] (q0) -- node[below left,black] {$a$} (q2);
\draw[->] (q2) -- node[above,black] {$b$} (q3);
\draw[->] (q3) -- node[above,black] {$b$} (q4);
\draw[->] (q4) -- node[above,black] {$b$} (f2);
\draw[->] (f2) edge[loop above] node[black] {$b$} ();

\end{tikzpicture}
\caption{A 2-ambiguous NFA where pumping naively fails for checking bad support. }
\label{fig:fnfa-bad-pumping}
\end{figure}

Hence, these ``pumpable'' subwords must preserve some reachability behaviour as well as contain probabilistic choices with probability in $(0,1)$. This leads us to the following definition of a bad word for a support.

\begin{defi}(Bad word)\label{def:bad-word}
Let $\A$ be an \fnfa and $S \subseteq \Trans$ be a support. 
Then a word $w \in \Sigma^*$ with  $M$ accepting runs is a bad word for $S$ if %
for some $\ell \leq M$, there exists a decomposition $w = x_0y_1x_1\dots y_\ell x_\ell$ with words
$x_0,\dots,x_\ell \in \Sigma^*$, $y_1,\dots,y_\ell \in \Sigma^+$, sets of states $R_1,\dots,R_{\ell}$, $Q_1, \dots, Q_{\ell} \incl Q$, and states $q_1,\dots,q_\ell$ with $q_i \in Q_i$ for each $i \in [\ell]$ such that:
\begin{enumerate}
    \item For each $i \in [\ell]$, all accepting runs of $w$ in $\A_S$ from $q_0$ end in $Q_i$ after reading $x_0y_1\dots x_{i-1}$, i.e., $q \in \delta_{\A_S}(q_0, x_0y_1\dots x_{i-1})$ and $\delta_{\A_S}(q,  y_i\dots x_{\ell}) \cap F \neq \emptyset$  $\iff$ $q \in Q_i$.
    \item An accepting run of $w$ in $\A_S$ must be in some $q_i$ for some $i \in [\ell]$ after reading $x_0y_1x_1\dots y_i$.
    \item For each $i \in [\ell]$ and each $q \in Q_i$, every accepting run of $w$ that reaches $q$ after reading the prefix $x_0y_1 \dots x_{i-1}$ also returns to $q$ after subsequently reading $y_i$. 
    Moreover, %
    when $q = q_i$, this run of $y_i$ from $q_i$ to $q_i$ contains at least one transition from $S$ which is nondeterministic in $\A_S$.  
    \item For each $i \in [\ell]$, $R_i$ contains all states that are reached from $q_0$ after reading $x_0y_1\dots x_{i-1}$, but has no continuation to an accepting run in $\A_S$ for the remaining suffix $y_ix_i\dots x_{\ell}$,
    i,e., $q \in \delta_{\A_S}(q_0, x_0y_1\dots x_{i-1})$ and $\delta_{\A_S}(q,  y_i\dots x_{\ell}) \cap F = \emptyset$  $\iff$ $q \in R_i$.
    Moreover, $R_i$ is also exactly the set of states that are reached from $q_0$ after reading $x_0y_1\dots x_{i-1}y_i$, with no continuation to an accepting run in $\A_S$ for the remaining suffix $x_i\dots x_{\ell}$,
    i.e., $q \in \delta_{\A_S}(q_0, x_0y_1\dots x_{i-1}y_i)$ and $\delta_{\A_S}(q, x_i\dots x_{\ell}) \cap F = \emptyset$  $\iff$ $q \in R_i$.
\end{enumerate}  
\end{defi}
\begin{exa}\label{exmp:4-ambi-nfa}
The \nfa $\A$ in \cref{fig:fin-ambi-not-sr} is 4-ambiguous.
Consider the full support $S$ containing all transitions of $\A$.
The word $w = \texttt{abbabbac}$ serves as a bad word witness for $S$, with decomposition $w = x_0 y_1 x_1 y_2 x_2 $ where $\ell = 2$, $x_0 = \texttt{ab}$, $y_1 = \texttt{b}$, $x_1 = \texttt{ab}$, $y_2 = \texttt{b}$, and $x_2 = \texttt{ac}$.
There are four distinct accepting runs on $w$; see \cref{fig:bad-word} for an illustration.
Based on \cref{def:bad-word}, for $w$, we have $Q_1 = \{q_1, q_3\}$, $Q_2 = \{q_4, q_6\}$ and $R_1 = R_2 = \{q_f\}$.
Among the four accepting runs, two take nondeterministic transitions from $S$ when reading $y_1$, and the other two do so when reading $y_2$. 
Runs from $Q_i$ to $Q_i$ are highlighted and nondeterministic transitions are coloured in the figure.
This shows that $w$ is a bad word for $S$: the entire sequence of words $\{x_0 y_1^i x_1 y_2^i x_2\}_{i \geq 1}$ admits four accepting runs each, and the probability assigned to each such word diminishes as $i$ increases under any stochastic resolver over $S$.
In fact, this automaton is not positively resolvable. 
Although there exists another support which still preserves the language of $\A$ — the one that excludes the transitions going through $q_2$, we can show that this support is also bad, witnessed by the same bad word. 
With this support, we no longer have the second accepting run in \cref{fig:bad-word}.
\end{exa}

\begin{figure}
\begin{subfigure}[t]{0.37\textwidth}
\centering

\scalebox{0.8}{
\begin{tikzpicture}
\tikzset{
xscale=1,>=latex',shorten >=1pt,node distance=2.5cm,
every state/.style={inner sep =.09cm,minimum size=1},
accstate/.style={ circle, fill=gray!30,inner sep =.09cm,minimum size=1},
on grid,auto,initial text = {}	
}

\node[state,initial left] 	   (s)  {$s$};
\node[state, above right = 2 of s]      (q1) {$q_1$};
\node[state, below right=2 of s]  (q3) {$q_3$};
\node[state, right= 2 of q1] (q4) {$q_4$};
\node[state, right= 2 of q3] (q6) {$q_6$};
\node[state, below= 1.4 of q1] (q2) {$q_2$};
\node[state, right= 2 of q2] (q5) {$q_5$};
\node[state, accepting, below right= 2 of q4] (qf) {$q_f$};

\path[->] (s) edge  [left]node {\color{black} $a$} (q1);
\path[->] (s) edge  [below]node {\color{black} $a$} (q3);
\path[->] (s) edge  [below]node {\color{black} $a$} (q2);

\path[->] (q1) edge  [loop above,color = red,  above]node {\color{black} $b$} (q1);
\path[->] (q1) edge  [above ]node {$a$} (q4);
\path[->] (q1) edge  [above, ]node {\color{black} $b$} (qf);

\path[->] (q3) edge  [above]node {$a$} (q6);
\path[->] (q4) edge  [above]node {$a$} (qf);
\path[->] (q6) edge  [right]node {\color{black} $a,b$} (qf);

\path[->] (q2) edge  [left]node {$b$} (q1);
\path[->] (q6) edge  [left]node {\color{black} $a$} (q5);
\path[->] (q5) edge  [below]node {$c$} (qf);

\path[->] (q4) edge  [loop above, above]node {$b$} (q4);
\path[->] (q3) edge  [loop below, below]node {$b$} (q3);
\path[->] (q6) edge  [loop below ,color = blue, below]node {\color{black} $b$} (q6);
\path[->] (qf) edge  [loop above , above]node {$c$} (qf);

\path[->] (q5) edge  [loop left, left]node {$a$} (q5);

\end{tikzpicture}
}
 \caption{Non-\mr{} \fnfa.}
\label{fig:fin-ambi-not-sr}
\end{subfigure}
\hspace{-0.5cm}
\begin{subfigure}[t]{0.63\textwidth}
\centering

\scalebox{1.0}{
\begin{tikzpicture}
\tikzset{
xscale=1,>=latex',shorten >=1pt,node distance=2.5cm,
every state/.style={inner sep =.09cm,minimum size=1},
accstate/.style={ circle, fill=gray!30,inner sep =.09cm,minimum size=1},
on grid,auto,initial text = {}	
}
\usetikzlibrary{positioning,arrows,automata}
\tikzstyle{BoxStyle} = [draw, circle, fill=black, scale=0.4,minimum width = 1pt, minimum height = 1pt]

\tikzset{
  redbox/.style={
    rectangle,
    rounded corners,
    fill=gray!20,
    fill opacity=0.7,   %
    draw=none,
    minimum width=1.7cm,
    minimum height=2.5cm,
    align=center,
    font=\small
  }
}
\tikzset{
    bluebox/.style={
    rectangle,
    rounded corners,
    fill=gray!20,
    fill opacity=0.7,   %
    draw=none,
    minimum width=1.7cm,
    minimum height=2.5cm,
    align=center,
    font=\small
  }
}

\node[label] (a)  at (0.5,0.5) {$\texttt{a}$};
\node[label] (ab) at (1.5,0.5){$\texttt{b}$};
\node[label, color=black] (abb) at (2.5,0.5){\color{red}{${\texttt{b}}$}};
\node[label] (abba) at (3.5,0.5){$\texttt{a}$};
\node[label] (abbab) at (4.5,0.5){$\texttt{b}$};
\node[label, color=black] (abbabb) at (5.5,0.5){\color{blue}${\texttt{b}}$};
\node[label] (abbabba) at (6.5,0.5){$\texttt{a}$};
\node[label] (abbabbac) at (7.5,0.5){$\texttt{c}$};

\node[label] (l1)  at (0,0) {$s$};
\node[label] (l1a) at (1,0){$q_1$};
\node[label, color=black] (l1ab) at (2,0){${q_1}$};
\node[label, color=black] (l1abb) at (3,0){${q_1}$};
\node[label] (l1abba) at (4,0){$q_4$};
\node[label, color=black] (l1abbab) at (5,0){${q_4}$};
\node[label, color=black] (l1abbabb) at (6,0){${q_4}$};
\node[label] (l1abbabba) at (7,0){$q_f$};
\node[label] (l1abbabbac) at (8,0){$q_f$};

\draw[->] (l1) -- (l1a);
\draw[->] (l1a) -- (l1ab);
\draw[->, color=red, line width=0.25mm] (l1ab) -- (l1abb);
\draw[->] (l1abb) -- (l1abba);
\draw[->] (l1abba) -- (l1abbab);
\draw[->, color=black] (l1abbab) -- (l1abbabb);
\draw[->] (l1abbabb) -- (l1abbabba);
\draw[->] (l1abbabba) -- (l1abbabbac);

\node[label] (l2)  at (0,-.5) {$s$};
\node[label] (l2a) at (1,-.5){$q_2$};
\node[label, color=black] (l2ab) at (2,-.5){${q_1}$};
\node[label, color=black] (l2abb) at (3,-.5){${q_1}$};
\node[label] (l2abba) at (4,-.5){$q_4$};
\node[label, color=black] (l2abbab) at (5,-.5){${q_4}$};
\node[label, color=black] (l2abbabb) at (6,-.5){${q_4}$};
\node[label] (l2abbabba) at (7,-.5){$q_f$};
\node[label] (l2abbabbac) at (8,-.5){$q_f$};

\draw[->] (l2) -- (l2a);
\draw[->] (l2a) -- (l2ab);
\draw[->, color=red, line width=0.25mm] (l2ab) -- (l2abb);
\draw[->] (l2abb) -- (l2abba);
\draw[->] (l2abba) -- (l2abbab);
\draw[->, color=black=] (l2abbab) -- (l2abbabb);
\draw[->] (l2abbabb) -- (l2abbabba);
\draw[->] (l2abbabba) -- (l2abbabbac);

\node[label] (l3)  at (0,-1.0) {$s$};
\node[label] (l3a) at (1,-1.0){$q_3$};
\node[label, color=black] (l3ab) at (2,-1.0){${q_3}$};
\node[label, color=black] (l3abb) at (3,-1.0){${q_3}$};
\node[label] (l3abba) at (4,-1.0){$q_6$};
\node[label, color=black] (l3abbab) at (5,-1.0){${q_6}$};
\node[label, color=black] (l3abbabb) at (6,-1.0){${q_6}$};
\node[label] (l3abbabba) at (7,-1.0){$q_5$};
\node[label] (l3abbabbac) at (8,-1.0){$q_f$};

\draw[->] (l3) -- (l3a);
\draw[->] (l3a) -- (l3ab);
\draw[->, color=black] (l3ab) -- (l3abb);
\draw[->] (l3abb) -- (l3abba);
\draw[->] (l3abba) -- (l3abbab);
\draw[->, color=blue,line width=0.25mm] (l3abbab) -- (l3abbabb);
\draw[->] (l3abbabb) -- (l3abbabba);
\draw[->] (l3abbabba) -- (l3abbabbac);

\node[label] (l4)  at (0,-1.5) {$s$};
\node[label] (l4a) at (1,-1.5){$q_3$};
\node[label, color=black] (l4ab) at (2,-1.5){${q_3}$};
\node[label, color=black] (l4abb) at (3,-1.5){${q_3}$};
\node[label] (l4abba) at (4,-1.5){$q_6$};
\node[label, color=black] (l4abbab) at (5,-1.5){${q_6}$};
\node[label, color=black] (l4abbabb) at (6,-1.5){${q_6}$};
\node[label] (l4abbabba) at (7,-1.5){$q_f$};
\node[label] (l4abbabbac) at (8,-1.5){$q_f$};

\draw[->] (l4) -- (l4a);
\draw[->] (l4a) -- (l4ab);
\draw[->, color=black] (l4ab) -- (l4abb);
\draw[->] (l4abb) -- (l4abba);
\draw[->] (l4abba) -- (l4abbab);
\draw[->, color=blue,line width=0.25mm] (l4abbab) -- (l4abbabb);
\draw[->] (l4abbabb) -- (l4abbabba);
\draw[->] (l4abbabba) -- (l4abbabbac);

\node[label] (R)  at (-0.7,-2.0) {$R$};

\node[label] (l5)  at (0,-2.0) {$\emptyset$};
\node[label] (l5a) at (1,-2.0){$\emptyset$};
\node[label, color=black] (l5ab) at (2,-2.0){${\{q_f\}}$}; 
\node[label, color=black] (l5abb) at (3,-2.0){${\{q_f\}}$};
\node[label] (l5abba) at (4,-2.0){$\emptyset$};
\node[label, color=black] (l5abbab) at (5,-2.0){${\{q_f\}}$};
\node[label, color=black] (l5abbabb) at (6,-2.0){${\{q_f\}}$};
\node[label] (l5abbabba) at (7,-2.0){$\emptyset$};
\node[label] (l5abbabbac) at (8,-2.0){$\emptyset$};

\begin{scope}[on background layer]
        \node[redbox] at (2.5,-1) {};
      \node[bluebox] at (5.5,-1) {};
\end{scope}
\end{tikzpicture}
}
 \caption{Four accepting runs on $w$.} 
\label{fig:bad-word}
\end{subfigure}
\captionsetup{justification=raggedright, singlelinecheck=false}
\caption{A bad word $w = x_0 \textcolor{black}{y_1} x_1 \textcolor{black}{y_2} x_2 = \texttt{ab\textcolor{red}{b}ab\textcolor{blue}{b}ac}$, where $x_0 = \texttt{ab}$, \textcolor{red}{$y_1 = \texttt{b}$}, $x_1 = \texttt{ab}$, \textcolor{blue}{$y_2 = \texttt{b}$}, and $x_2 = \texttt{ac}$. The set $R$ includes all reachable states from which no accepting run exists for the remaining suffix. %
}
\label{fig:xx}
\end{figure}

Detecting bad words involves finding a decomposition as well as suitable $Q_i$'s, $q_i$'s and $R_i$'s.  In order to explicitly track these objects for some word, we consider an automata $\Gamma_{\A}$, called the \emph{run automaton} of $\A$. Intuitively, a run of a word on $\Gamma_{\A}$, records all the states reached in accepting runs of $w$ as well as the reachable states from which there are no continuation to accepting runs. 

\paragraph{Run Automata} %
The run automaton of a $k$-ambiguous automata $\A$, denoted by $\Gamma_{\A}$ is an automaton with 
\begin{enumerate}
	\item  state space $\{ (q_1,\dots,q_k),R) \in Q^k \times 2^Q | \{q_1,\dots,q_k\} \cap  R = \emptyset \}$, i.e. for every state $(\mathbf{q},R)$ in $\Gamma_{\A}$, the set of states in the tuple $\mathbf{q}$ and the set $R$ are disjoint
	\item transitions $((q_1,\dots,q_k), R ) \xra{a} ((q'_1,\dots,q'_k), R' ) $ iff $q_i \xra{a} q_i'$ in $\A$ and $ \forall q \in R, \trans(q,a) \incl R'$.
	\item $((q_0)^k,\emptyset)$ is the initial state  and $\{(\mathbf{q},Q') | \mathbf{q} \in F^k, Q' \cap F = \emptyset\}$ is the set of final states.
\end{enumerate} 
In order to detect bad words, we will look for a particular type of runs on $\Gamma_{\A}$, called \emph{nice runs}, which are runs that track the behaviour of accepting runs of a word in $\A$ in a way that is consistent with the conditions in \cref{def:bad-word}.
\paragraph{Nice Run}
For a word $w = a_1 \dots a_t \in \Ll(\A)$ with $k$ accepting runs in $\A$, a \emph{nice run} of $w$ is a run on $\Gamma_{\A}$ from initial state $((q_0)^k,\emptyset)$ to final states such that
\begin{enumerate} 
\item in the $i$th step of the run, it reaches $((q^i_1,\dots,q^i_k), R_i)$ if the $j$th accepting run of $w$ is in $q^i_j$ after reading $a_1 \dots a_i$ and 
\item $q \in R_i$ iff there is no accepting run of $a_{i+1} \dots a_t$ from the reachable state $q$.
\end{enumerate}
For a word in $\Ll(\A)$ with strictly fewer than $k$ accepting runs will have such a run on $\Gamma_{\A}$, where for some $i_1 \neq i_2$, $q^{i_1}_j$ = $q^{i_2}_j$ for all $j$, i.e. there will be at least two copies of same run. 
Every word $w$ has a unique nice run on $\Gamma_{\A}$ up to duplication and shuffling of accepting runs. %

Note that a bad word for $S$ is essentially a word in $\Ll(A_S)$ which has a nice run on $\Gamma_{\A_S}$ such that in this nice run, the component of the run on the subword $y_i$ is a self-loop in $\Gamma_{\A_S}$ with several accepting runs containing a nondeterministic transition from $S$. In \cref{exmp:4-ambi-nfa} these self loops happen at states $((q_1,q_1,q_3,q_3),\{q_f\})$ and $((q_4,q_4,q_6,q_6),\{q_f\})$ in the nice run of the bad word in the run automata. 

The following lemma ensures that for the loop words $y_i$ in \cref{def:bad-word}, 
there is always a unique loop from any state $q \in Q_i$ back to itself.
Assume there are two loops from $q$ to $q$ over the word $y$. 
Let $x$ be a word that there is a run from $q_0$ to $q$ and $z$ a word that there is run from $q$ to a final state. 
Then, the sequence of words $w_i= xy^iz$ admits an unbounded number of accepting runs, contradicting that the automaton is \finamb.
\begin{lem}\label{lem:fnfa-loop-one-path}
In an \fnfa, for any accepting run containing a loop from $q$ to $q$ over the word $y$, it is the unique loop from $q$ to $q$ over $y$.
\end{lem}

A nice run for a bad word contains a cycle corresponding to each subword $y_i$.
The following lemma ensures that we can construct new words by pumping these subwords while preserving the number of accepting runs.
\begin{lem}\label{lem:pump-loop-word-conserve-runs} 
Let $w = xyz$ be a word in $\Ll(\A_S)$ with $M$ accepting runs, whose nice run in $\Gamma_{\A_S}$ has a cycle on the subword $y$. Then for each $j \in \nats$, $xy^jz$ also has $M$ accepting runs. 
\end{lem}
\begin{proof}
 Assume, for contradiction, that for some $j \ge 2$, the word $x y^j z$ admits strictly more accepting runs. Let one such additional accepting run be of the form:
\[
q_0 \xrightarrow{x} q_1 \xrightarrow{y} q_2 \xrightarrow{y} \cdots \xrightarrow{y} q_{j+1} \xrightarrow{z} f,
\]
where $f \in F$ is an accepting state.
Since, in the nice run of $xyz$, the part of this run on the subword $y$ is a cycle, then $xy^jz$ has also a nice run, which just repeats $j$ times the part of the run on $y$. 
Suppose after $x$ and $xy$, the accepting runs are in states $Q_i$ and the reachable states with no continuation to accepting runs of either $yz$ or $z$ be $R_i$.
By assumption, after reading any prefix of the form $xy^k$, the run must be in a state from $Q_i \cup R_i$. Therefore, for all $k \in [j+1]$, we must have $q_k \in Q_i \cup R_i$.  
In particular, $q_{j+1} \in Q_i$, because otherwise $q_{j+1} \in R_i$ contradicting the assumption that all states in $R_i$ do not have accepting runs on $z$.

We show by induction that for all $k \in [j]$, we have $q_k = q_{j+1} \in Q_i$.
\begin{itemize}
    \item \textbf{Base case:} Consider $q_j$. Suppose $q_j \notin Q_i$. Then $q_j \in R_i$, and again contradict the assumption that all states in $R_i$ do not have accepting runs on $yz$. 
    Hence, $q_j \in Q_i$.  
    Furthermore, if $q_j \ne q_{j+1}$, then we obtain an accepting run
    \[
    q_0 \xrightarrow{x} q_j \xrightarrow{y} q_{j+1} \xrightarrow{z} f
    \]
    on $xyz$, which contradicts that any accepting run reaching $q_j$ after reading $x$ must remain in $q_j$ upon reading $y$.

    \item \textbf{Inductive step:} Assume that for some $K \in [j]$, we have $q_k = q_{j+1} \in Q_i$ for all $k \in [K, j]$.  
    Consider $q_{K-1}$. If $q_{K-1} \in R_i$, then again we would obtain an accepting run on $xyz$ via
    \[
    q_0 \xrightarrow{x} q_{K-1} \xrightarrow{y} q_{j+1} \xrightarrow{z} f,
    \]
    contradicting the assumption. So $q_{K-1} \in Q_i$.  
    If $q_{K-1} \ne q_{j+1}$, then the same path gives a new accepting run through a distinct intermediate state, again a contradiction.
     The entire run is of the form:
    \[
    q_0 \xrightarrow{x} q_{j+1} \xrightarrow{y}  \cdots \xrightarrow{y} q_{j+1} \xrightarrow{z} f,
    \]
    which is not a new accepting run. 
    This contradicts the assumption that $x y^j z$ admits additional accepting runs, completing the proof.
    
    Hence, $q_{K-1} = q_{j+1}$.\qedhere
\end{itemize}
\end{proof}

\begin{cor}\label{lem:pump-bad-word-conserve-runs} 
Let $w = x_0y_1x_1 \ldots x_\ell$ be a bad word for support $S$ with $M$ accepting runs. Then for each $(j_1,\dots,j_{\ell}) \in \nats^{\ell}$, $ x_0y_1^{j_1}x_1y_2^{j_2}\ldots y_\ell^{j_\ell}x_\ell$ also has $M$ accepting runs.
\end{cor}

For \fnfa, a support $S$ is bad when either $\Ll(\A_S) \subsetneq \Ll(\A)$ or there is a bad word to witness that the support is bad.
The following is a key lemma of this section, demonstrating that the existence of such a word is necessary and sufficient for a support to be bad.

\begin{restatable}[Bad support-bad word]{lem}{lemBadSuppBadWord}\label{lem:bad-support-bad-word}
Let $\A$ be an \fnfa, and let $S \subseteq \Trans$ be a support.  
Then, $S$ is a bad support if and only if $\Ll(\mach{A}_S) \subsetneq \Ll(\mach{A})$ or there exists a bad word for $S$.
\end{restatable}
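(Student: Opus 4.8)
The plan is to prove both directions separately, using the characterisation from \cref{lem:bad-edges-increase} as the pivot. Recall that lemma says $S$ is bad (with $\Ll(\A_S)=\Ll(\A)$) iff there is a sequence of words $w_i$ with $b(w_i)$ — the minimum number of nondeterministic transitions over all accepting runs — strictly increasing without bound. So it suffices to show: (i) if $\Ll(\A_S)=\Ll(\A)$, then there is a bad word for $S$ iff $b$ is unbounded over $\Ll(\A_S)$; and (ii) if $\Ll(\A_S)\subsetneq\Ll(\A)$ the claim holds by definition. Since (ii) is immediate, the work is all in (i).

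\emph{Bad word $\Rightarrow$ unbounded $b$ (the ``if'' direction of badness).} Suppose $w=x_0y_1x_1\cdots y_\ell x_\ell$ is a bad word for $S$ with $M$ accepting runs, witnessed by the sets $Q_i,R_i$ and states $q_i$. By \cref{lem:pump-bad-word-conserve-runs}, each pumped word $w^{(j)}=x_0y_1^{j}x_1\cdots y_\ell^{j}x_\ell$ has exactly $M$ accepting runs, and by condition~3 of \cref{def:bad-word}, each such run that passes through $q_i$ traverses the self-loop $y_i$ from $q_i$ to $q_i$, which contains a nondeterministic (in $\A_S$) transition from $S$; by \cref{lem:fnfa-loop-one-path} this loop is unique, so the run really does re-use it $j$ times. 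By condition~2, \emph{every} accepting run of $w^{(j)}$ lands in some $q_i$ after reading $x_0y_1x_1\cdots y_i$, hence every one of the $M$ accepting runs crosses at least one nondeterministic transition at least $j$ times. Therefore $b(w^{(j)})\geq j$, so $b$ is unbounded and $S$ is bad by \cref{lem:bad-edges-increase}. The role of the $R_i$ here is to ensure the decomposition is ``stable'' under pumping — that pumping $y_i$ does not create extra accepting runs by routing through a state that was previously dead — which is exactly what \cref{lem:pump-bad-word-conserve-runs} formalises.

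\emph{Unbounded $b$ $\Rightarrow$ bad word exists (the harder direction).} Assume $\Ll(\A_S)=\Ll(\A)$ and $b$ is unbounded over $\Ll(\A_S)$. I would work in the run automaton $\Gamma_{\A_S}$ (with $k$ the ambiguity bound), whose state space $Q^k\times 2^Q$ has fixed size. Pick a word $w$ whose nice run $\pi$ in $\Gamma_{\A_S}$ has $b(w)$ large — larger than $|Q^k\times 2^Q|$ times $M$ say, where $M\le k$. Along $\pi$, at least one of the (at most $M$ distinct) accepting runs of $w$ must cross nondeterministic $S$-transitions many times; by the pigeonhole principle on the states of $\Gamma_{\A_S}$, between two such crossings on the same accepting-run coordinate there is a repeated $\Gamma_{\A_S}$-state, giving a cycle in the nice run. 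The content is to argue that one can extract from $\pi$ a decomposition into $\ell\le M$ genuine self-loops $y_i$ at $\Gamma_{\A_S}$-states of the form $((\dots q_i \dots q_i \dots),R_i)$ (i.e.\ the loop returns the $i$-th coordinate to $q_i$, not merely to the same $\Gamma$-state, which one gets by further iterating the cycle or by a Ramsey/idempotent argument on the transition monoid), where the $i$-th coordinate's loop uses a nondeterministic $S$-transition — and that after collapsing to these canonical loops, conditions 1–4 of \cref{def:bad-word} are met. Conditions 1 and 4 about $Q_i$ and $R_i$ are essentially bookkeeping: $Q_i$ is the set of states appearing in accepting-run coordinates at that point of the nice run and $R_i$ is the dead-state component, which by the definition of nice run and of $\Gamma_{\A_S}$ already satisfy the stated reachability equivalences; the ``moreover'' in condition~4 (that $R_i$ is also the dead set after reading the loop) follows because the loop is a self-loop in $\Gamma_{\A_S}$, so the $R$-component is unchanged. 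Condition~3's uniqueness-of-loop part is \cref{lem:fnfa-loop-one-path}.

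\emph{Main obstacle.} The delicate point is the extraction in the converse direction: turning ``$b$ is unbounded'' into a single bad word with a \emph{simultaneous} decomposition handling all $M$ accepting runs, where the loops $y_i$ are true self-loops returning each relevant coordinate $q_i$ to itself (not just returning to the same $\Gamma_{\A_S}$-state) and where every accepting run is accounted for by some $q_i$ (condition~2). I expect this needs an idempotent/Ramsey argument in the transition monoid of $\Gamma_{\A_S}$: take a long enough nice run, find a factor whose $\Gamma_{\A_S}$-transition-profile is idempotent, iterate it so that each accepting-run coordinate that is ``recurrent'' becomes a genuine fixed point, and check that the nondeterministic-transition count inside the idempotent factor is positive on at least one such coordinate (this is where unboundedness of $b$, as opposed to mere non-determinism, is used, to force the count up past the monoid size). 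Bundling the at most $M$ such idempotent factors into a single decomposition $x_0y_1x_1\cdots y_\ell x_\ell$ and verifying the $R_i$-stability clause is the most technical part; everything else is either a prior lemma or routine unpacking of \cref{def:bad-word}.
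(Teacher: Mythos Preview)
Your backward direction matches the paper.

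For the forward direction, your skeleton---choose $w$ with $b(w)$ larger than $|\Gamma_{\A_S}|$ and pigeonhole on the nice run---is also what the paper does, but you misidentify the obstacle. You worry that a cycle in the nice run might ``return to the same $\Gamma_{\A_S}$-state'' without ``returning the $i$-th coordinate to $q_i$''. These are the same thing: a $\Gamma_{\A_S}$-state is a pair $((p_1,\dots,p_k),R)\in Q^k\times 2^Q$, so a self-loop in $\Gamma_{\A_S}$ by definition returns every coordinate $p_j$ to $p_j$ and $R$ to $R$. No Ramsey or idempotent argument is needed; the pigeonhole step already delivers per-coordinate self-loops, and the $R$-stability in condition~4 of \cref{def:bad-word} comes for free from the $2^Q$-component of the repeated $\Gamma$-state.

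The genuine technical issue---which you allude to only as ``bundling''---is that the cycle-segments $y_1,\dots,y_M$ found for the different accepting runs may \emph{overlap} as factors of $w$, so they do not directly give a decomposition $x_0y_1x_1\cdots y_\ell x_\ell$. The paper fixes this by a short pumping trick rather than monoid idempotents: if $y_{i_1}$ and $y_{i_2}$ overlap in $w=x_{i_1}y_{i_1}z_{i_1}$, pass to $x_{i_1}y_{i_1}y_{i_1}z_{i_1}$; then $y_{i_2}$ lies entirely inside one of the halves $x_{i_1}y_{i_1}$ or $y_{i_1}z_{i_1}$, so the two loops are now disjoint. Iterating over all pairs yields a word $w'$ with pairwise-disjoint loop segments, and \cref{lem:pump-loop-word-conserve-runs} guarantees $w'$ still has exactly $M$ accepting runs. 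Reading off $Q_i,R_i,q_i$ from the $\Gamma_{\A_S}$-states at the loop endpoints then gives the bad word. This is considerably lighter than the transition-monoid machinery you propose, and it is where the real work of the converse direction sits.
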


\begin{proof}
We focus on the case where we do not lose language in $\A_S$, that is, $\Ll(\A_S) = \Ll(\A)$.
First, we show the backward direction. 
Take a bad word $w = x_0y_1x_1 \ldots x_\ell$ for $S$ from \cref{def:bad-word} with $M$ accepting runs.
Let $w_j = x_0y_1^jx_1y_2^j\ldots x_\ell$ produced by pumping the $y_i$ part $j$ times. 
Then, each $w_j$ also has $M$ accepting runs, according to \cref{lem:pump-bad-word-conserve-runs}.
The sequence of words $\{w_j\}_j$ also has an increasing number of nondeterministic transitions across all their accepting runs.
Hence, $S$ is a bad support, by \cref{lem:bad-edges-increase}.

For the forward direction, suppose there is no good resolver over $S$. Let the automaton be $k$-ambiguous. %
From \cref{lem:bad-edges-increase}, we know that there exists a sequence of words which has an increasing number of nondeterministic transitions across all their accepting runs.
For some $M \leq k$, we should find a subsequence of words in this sequence, where each of the words has exactly $M$ accepting runs. 
Let $w$ be a word in $\Ll(\A_S)$ such that every accepting run of $w$ contains at least $T + 1$ nondeterministic transitions, where $T = |Q|^k 2^{|Q|}$. %
Let $w = a_1 \dots a_t$ and the $M$ runs of $w$ be $\pi_1, \dots, \pi_M$. Consider a nice run of $w$ in the run automaton $\Gamma_{\A_S}$. %
Now for an accepting run $\pi_i$ of $w$, we pick a subsequence of states from this run, where run $\pi_i$ takes a nondeterministic transition from $S$. Since, $|\Gamma_{\A_S}| = T$, there are at least two states in this subsequence that are the same. This means, the nice run on $\Gamma_{\A_S}$ has a loop with run $\pi_i$ containing a nondeterministic transition from $S$. Let this subword on this loop be $y_i$, and we have a decomposition $w = x_iy_iz_i$ for each accepting run $\pi_i$. %

We can then look at the words $x_iy_iz_i$ for all $i \in [M]$ and construct a new word $w'=x_0'y_1'x_1' \ldots y_{M}'x_{M}'$ satisfying the conditions of \cref{def:bad-word}. 
We do it the following way: if the $y_i$'s do not overlap, we immediately obtain the decomposition required for a bad word. Otherwise, suppose there is an overlap between $x_{i_1} y_{i_1} z_{i_1}$ and $x_{i_2} y_{i_2} z_{i_2}$. In this case, we consider the word $x_{i_1} y_{i_1} y_{i_1} z_{i_1}$ by pumping $y_{i_1}$ once. Now, $y_{i_2}$ must be either part of the $x_{i_1} y_{i_1}$ segment or the $y_{i_1} z_{i_1}$ segment. Without loss of generality, assume the latter. We then have $y_{i_1} z_{i_1} = x'_{i_2} y_{i_2} z'_{i_2}$, which leads to the word $x_{i_1} y_{i_1} x'_{i_2} y_{i_2} z'{i_2}$, where the looping subwords $y_{i_1}$ and $y_{i_2}$ no longer overlap.
In this way, we can construct a new word $w'$ such that the looping parts $y_i$'s do not overlap. From \cref{lem:pump-loop-word-conserve-runs}, it follows that $w'$ still has $M$ accepting runs. Furthermore, the decomposition of $w'$ satisfies all the conditions in the definition of a bad word.
This proves the lemma.
\end{proof}

To decide the positive resolvability of an \fnfa $\A$, we consider all possible supports $S$ such that $\Ll(\A_S) = \Ll(\A)$, and check whether each support is bad by analysing the run automaton $\Gamma_{\A_S}$.
According to \cref{lem:bad-support-bad-word}, if a support is bad, there must exist a bad word for it, and there must also be a corresponding nice run over this word in $\Gamma_{\A_S}$.
This nice run encodes the structure of all accepting runs of $\A_S$ on the bad word.
In particular, for each accepting run $\pi$ of $\A_S$, the nice run in $\Gamma_{\A_S}$ includes a loop that contains at least one nondeterministic transition from $S$ in its component corresponding to $\pi$.
To show that a support is bad, it suffices to search for such nice runs in $\Gamma_{\A_S}$, which gives us a decidable procedure. A naive algorithm to check for a bad word would be to nondeterministically guess a nice run of a bad word in $\Gamma_{\A_S}$ from $((q_0)^k,\emptyset)$ to $(\mathbf{q},Q')$ with $\mathbf{q} \in F^k$ and $Q' \cap F = \emptyset$. This involves guessing a letter at each step, keeping track of states reached in each run, i.e. at $j$th step computing the states $(q^j_1,\dots,q^j_k)$ and guessing $R_j$ for state $((q^j_1,\dots,q^j_k), R_j)$ in $\Gamma_{\A_S}$. For the bad word decomposition, we would also need to guess the states $Q_i$ and the state $q_i \in Q_i$ for simulating the cycle in $\Gamma_{\A_S}$ with nondeterministic transition in some run. We also track runs that have already seen nondeterministic transition in some cycle on $\A_S$. Finally, we have a bad word, if all runs have seen nondeterministic transitions in some cycle and the sets $R_j$'s are consistent with the final guessed word.

However, given an \fnfa, since its degree of ambiguity $k$ is known to be bounded by $5^{\frac{|Q|}{2}} |Q|^{|Q|}$~\cite{WEBER1991325}, %
the size of the run automaton $\Gamma_{\A_S}$ can be $|Q|^k 2^{|Q|}$ in the worst case, 
 which is doubly exponential in the size of~$\A$. Hence the described procedure is not in $\PSPACE$. By slightly modifying this procedure, instead of storing the state tuple $(q^j_1,\dots,q^j_k)$ reached at every step, we store the set of states $A_j = \{q^j_1,\dots,q^j_k\}$ along with keeping track of those states $G_j$ in $A_j$, to which all accepting runs are yet to see nondeterministic transitions in some cycle in $\Gamma_{\A_S}$. Additionally, for the cycle part, when we guess $Q_i$, since from \cref{lem:fnfa-loop-one-path} we know that every state $q \in Q_i$ has unique run to itself for the subword $y_i$, there are at most $n$ runs to track in this part. So we arbitrarily order the runs and track the progression of these runs in the cycle. Note that, in this part any transition doesn't change the size of the sets $A_i$ and hence are bijections on $A_i$. We guess these bijections until the compositions of all the bijections is the identity map, i.e. each run has returned to the point of entering the cycle. At the end of the cycle we remove all states from $G_i$, whose cycles have encountered nondeterministic transitions from $S$. 
Since this procedure stores information only about sets $A_j,G_j,R_j$ of size at most $n$, as well as bijections on $[n]$, the state space can be described using $O\big(n^{2n+4}\big)$, giving us a $\PSPACE$ procedure for detecting bad words.

\begin{restatable}{thm}{pspaceFNFA}\label{thm:SR-fnfa-pspapce}
It can be decided in $\PSPACE$ if an \fnfa is positively resolvable. 
Moreover, the shortest bad word is of length at most $O\big(n^{2n+4}\big)$.
\end{restatable}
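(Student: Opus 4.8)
The plan is to reduce positive resolvability to a search for \emph{bad words}, one support at a time. By the definition of a bad support together with \cref{lem:bad-support-bad-word}, the \fnfa $\A$ will be positively resolvable exactly when some support $S \incl \Trans$ satisfies $\Ll(\A_S) = \Ll(\A)$ and admits no bad word: any resolver that positively resolves $\A$ has a support $S$, this $S$ must be good, a good support necessarily has $\Ll(\A_S) = \Ll(\A)$ (every resolver over $S$ induces a \pfa whose language is contained in $\Ll(\A_S) \incl \Ll(\A)$, so equality with $\Ll(\A)$ forces it), and \cref{lem:bad-support-bad-word} then characterises goodness of such an $S$ by the absence of a bad word. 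I would therefore enumerate the at most $2^{|\Trans|}$ supports $S$ (reusing space between iterations), test $\Ll(\A_S) = \Ll(\A)$ by a standard \PSPACE{} equivalence check for \nfa, and test the absence of a bad word for $S$; the automaton is positively resolvable iff some $S$ passes both tests. Everything then hinges on detecting bad words for a fixed $S$ within \PSPACE.

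A first, naive procedure is to guess a nice run of a candidate bad word in the run automaton $\Gamma_{\A_S}$: read the input letter by letter, guess the decomposition $x_0 y_1 x_1 \dots y_\ell x_\ell$, and on each loop factor $y_i$ guess the set $Q_i$ and simulate the matching cycle of $\Gamma_{\A_S}$, marking an accepting run once it completes a cycle using a transition nondeterministic in $\A_S$; accept once all $k$ accepting runs are marked and the run ends in some $(\mathbf{q}, Q')$ with $\mathbf{q} \in F^k$ and $Q' \cap F = \emptyset$. This is correct but not in \PSPACE: $\Gamma_{\A_S}$ has up to $|Q|^k 2^{|Q|}$ states, and the ambiguity $k$ can be as large as $5^{|Q|/2}|Q|^{|Q|}$~\cite{WEBER1991325}, so a single configuration is already doubly exponential in $|\A|$.

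The fix I would use is a set-abstraction of this search. Instead of the ordered tuple $(q_1^j,\dots,q_k^j)$ of run positions, store only the set $A_j = \{q_1^j,\dots,q_k^j\}$ of occupied states, the subset $G_j \incl A_j$ of states whose accepting runs are not yet marked, and the dead set $R_j$ from \cref{def:bad-word}; outside loop factors these sets carry all the information the bad-word conditions use. Inside a loop factor $y_i$, \cref{lem:fnfa-loop-one-path} guarantees that from each $q \in Q_i$ there is a unique run of $y_i$ back to $q$, so there are at most $|Q| \le n$ runs to follow; order them arbitrarily and record their progress by a partial map on $[n]$, where each step acts as a bijection on the current state set, continuing to guess bijections until their composition is the identity (every run has returned to its entry state) and then dropping from $G_i$ every state whose cycle used a nondeterministic transition of $S$. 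A configuration is then three subsets of $Q$ plus a partial map $[n]\to Q$, storable in $O(n\log n)$ bits, so the configuration graph has at most $O(n^{2n+4})$ vertices; the search runs in nondeterministic polynomial space, hence in \PSPACE{} by Savitch's theorem, and its complement ``$S$ has no bad word'' is in \PSPACE{} too. The length bound will drop out for free: a bad word is an accepting path in this configuration graph, and deleting repeated configurations shortens it to length $O(n^{2n+4})$.

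The hard part will be the correctness of the set-abstraction, namely that the abstracted search accepts exactly when the naive run-automaton search does. In one direction one must rule out spurious bad words: several accepting runs may share a state now and split later, so one has to show -- using finite ambiguity to bound branching and \cref{lem:fnfa-loop-one-path} to control loop shape -- that $A_j, G_j, R_j$ together with the loop bijections retain precisely what is needed to reconstruct a witness for \cref{def:bad-word}. In the other direction one must not miss any bad word: from a bad support, \cref{lem:bad-edges-increase} and the pumping argument in the proof sketch of \cref{lem:bad-support-bad-word} give a word each of whose accepting runs traverses a self-loop of $\Gamma_{\A_S}$ carrying a nondeterministic transition, and separating those (possibly overlapping) loops as in that sketch yields a genuine bad word whose nice run the abstracted search can follow. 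Verifying that the ``composition equals identity'' stopping rule corresponds exactly to simultaneously closing all the relevant $\Gamma_{\A_S}$-loops is the delicate technical point.
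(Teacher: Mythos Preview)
Your proposal is correct and follows essentially the same approach as the paper: iterate over supports, test language equivalence in \PSPACE, and detect bad words via a set-based abstraction of the run automaton that, inside loop factors, tracks only a bijection on at most $n$ states (justified by \cref{lem:fnfa-loop-one-path}), with the bad-word length bound falling out as the size of the abstract configuration graph. The one detail you gloss over that the paper makes explicit is that the loop-exit test must also check that the dead set $R$ has returned to its value $R_0$ recorded at loop entry (this is condition~4 of \cref{def:bad-word}); once $R_0$ is stored as part of the loop-mode configuration alongside the bijection, the designated state $q'$, and a ``nondeterministic transition seen'' flag, your bit count and the $O(n^{2n+4})$ bound match the paper's.
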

\begin{proof}
We provide a \PSPACE~ algorithm for the complement problem, i.e.\ checking if the given \fnfa is not positive resolvable. 
The first step is to go through all possible supports $S \incl \Trans$ that are language equivalent to $\A$. This can be checked in \PSPACE \cite{stockmeyer1973word} for general NFA. 

Given a support $S$ with $\Ll(\A_S) = \Ll(\A)$, we check whether it is bad by nondeterministically guessing a bad word $w$. Specifically, we guess a word of the form $w = x_0 y_1 x_1 \dots y_{\ell} x_{\ell}$, satisfying the conditions of \cref{lem:bad-support-bad-word}, letter by letter. During this process, we also guess the start and end points of each $y_i$, and simultaneously track an abstraction of the accepting runs, as well as of all non-accepting runs.

While reading a part $x_i$, which we call \emph{branching}, or $y_i$, which we call \emph{looping}, the abstraction intuitively preserves the following for a bad word:
(1) which states are reachable on the overall word read so far, but only appear on rejecting runs in a set $R$, 
(2) which states are part of an accepting run of the overall word in a set $A$, and 
(3)  which states are part of an accepting run of the overall word with index larger than the index of $i$ of the word $x_i$ in a set $G$.
The formal rules for these sets are: $R \cap A = \emptyset$ and $G \subseteq A$. 

The update rules from sets $R,A,G$ to $R',A',G'$ when guessing a letter $a$ are that all $a$-successors of a state in $R$ must be in $R'$, all $a$-successors of a state in $A$ must be in $A' \uplus R'$ and at least one of them must be in $A'$, and all states in $G'$ must be $a$-successors of states in~$G$.

When reading a looping part $y_i$, we remember the set $R$ and denote as $R_0$,
the abstraction additionally tracks
(4) the set $L$ of states that occurs on some accepting run at the beginning of $y_i$ ($L$ is the $Q_i$ from \cref{lem:bad-support-bad-word}),
(5) which state $q' \in L$ is on the path for the runs with index $i$ (that start and end with $q_i$ in terms of \cref{lem:bad-support-bad-word}),
(6) whether this path through $y_i$ has seen a non-deterministic transition using a boolean flag, and
(7) a bijection $f\colon L \rightarrow A$.

Moving from the branching mode to looping mode can be done through an $\varepsilon$-transition, setting $L$ to $A$, $f$ to the identity and $R_0$ to $R$, guessing $q_i\in G$ and setting the boolean flag to false.
Moving back to branching mode happens through an $\varepsilon$-transition that can be taken when
the boolean flag is set to true, $f$ is the identity and $R$ is $R_0$.
$q'$ is removed from $G$ in this $\varepsilon$-transition.

The update rules when guessing a letter $a$ in the loop mode additionally require that, for all states $q\in A$ resp.\ $q\in G$, exactly one $a$-successor is in $A'$ resp.\ $G'$; $q'$ is left unchanged, and the boolean flag is set to true if the $a$-transition from $f(q')$ is non-deterministic; otherwise it is left unchanged.
The bijection $f$ is updated to $f'$ such that, if $f \colon q \mapsto p$, then $f' \colon q \mapsto p'$ such that $p' \in A'$ and $p'$ is an $a$-successor of $p$.
We accept the run when we reach a state with $R \cap F = \emptyset$ and $G=\emptyset \neq A \subseteq F$ in the branching mode.

The algorithm essentially reduces to a reachability problem on a graph with an exponential number of nodes. 
However, rather than explicitly constructing this graph, we operate on abstractions involving only a constant number of sets, variables, and functions, thereby requiring only polynomial space.
The algorithm is then in NPSPACE, hence, in \PSPACE, since NPSPACE=\PSPACE \cite{Savitch1970}, .

To show the correctness of the algorithm, we prove that a bad word exists if and only if we can construct an accepting run in the graph.
For a bad word $w = x_0y_1x_1\dots y_\ell x_\ell$ for some $\ell$ with $M$ accepting runs, we observe that the $M$ accepting runs do not have a natural order among them, and that a repetition of each $y_i$ may diminish the probability mass of several accepting runs.
We give each accepting run an index $i$ that refers to the first $y_i$ that diminishes it. 
In particular, the accepting runs with index $i$ are in $q_i'$ after reading $x_0y_1 \ldots y_i$.
We can simply take the terms from \cref{lem:bad-support-bad-word} and construct an accepting run in the graph.

We can likewise construct a bad word from an accepting run on the graph as follows:
starting with $x_0$, the words between two $\epsilon$-transitions define words $x_0,y_1,x_1,y_2,\ldots$; for the $y_i$ we also store their respective $L$ and $q'$ as $Q_i'$ and $q_i'$, respectively.
It is now easy to see that this satisfies all local conditions of \cref{def:bad-word}, so that all we need to show is that all accepting runs are in $q_i'$ after reading $x_0y_1 \ldots y_i$ for some $i$. But this is guaranteed by $G=\emptyset$.

We get as a side result, the shortest bad word cannot be longer than the size of the graph described, and thus, $O\big(n^{2n+4}\big)$.
This is a rough bound: each state can appear in both the domain and range of the bijection, leading to at most $2n$ possibilities. 
There are further $2$ possibilities:
if a state is in the range of the bijection, then it belongs to $A$, and consequently, it can also be in $G$; 
if it is not in the range of the bijection, then it does not belong to $A$ and can either be in $R$ (since $R$ and $A$ are disjoint) or not. 
In addition, it can be the state $q'$ that we need to track in step~5.
It may be in the set $R_0$ or not.
\end{proof}

\subsection{Deciding \texorpdfstring{$\lambda$}{Lambda}-Resolvability for Finitely-Ambiguous \nfa}
\label{subsec:lambda-resolvable}

Here we will provide an algorithm to decide $\lambda$-resolvability for \fnfa. 
We do this by building a finite system of inequalities and check if it is satisfiable. 
Decidability follows from the decidability of first-order theory of reals \cite{Tarski1951}. 
Towards this, we consider \emph{primitive words}.
\paragraph*{Primitive words} A word $w$ is \emph{primitive} for $S$ if a nice run of $w$ in $\Gamma_{\A_S}$ has no cycles.
For example, in \cref{exmp:4-ambi-nfa}, the word $abbabbac$ is not primitive for the given support, while the word obtained by removing the loops from the nice run - $ababac$ is a primitive word. 

The key idea is that for each primitive word $w$ and corresponding nice run $\pi$ in the run automaton, we can obtain a number $z_w$, which is a lower bound on the probability of any word whose nice run is the same as $\pi$ without the self loops. Moreover, we can find a sequence of such words whose limit probability is $z_w$.
Hence, the automaton is $\lambda$-resolvable if and only if there is a support 
that satisfies $z_w \ge \lambda$ for every primitive word $w$.

\paragraph*{System of Inequalities}
Since $\A$ is \finamb, we can compute the degree of ambiguity $M$ of $\A$ \cite{CHAN198895}. 
First we check if $\A$ is positively resolvable using the algorithm in \cref{thm:SR-fnfa-pspapce}. 
If yes, for each good support $S$ of $\A$, we build a finite system of inequalities. 

Given a support $S$, we have a variable $x^S_{\tau}$ for each $\tau \in S$, to express the probabilities assigned by some stochastic resolver over $S$.
 
Let $X_S \subseteq S$ be the set of the nondeterministic transitions of $S$.  
The set of constraints $\Theta_S$ ensure that the probability variables satisfy standard probabilistic requirements:
\[
\Theta_S := \Big(\forall \tau \in S. \; 0 < x_\tau^S \le 1\Big) \; \wedge \; \Big(\forall (p, \sigma) \in Q \times \Sigma \text{ of } \A_S. \; \sum_{\tau = (p, \sigma, q) \in S} x_\tau^S = 1 \Big).
\]
These constraints enforce that each transition has positive probability at most $1$, and all outgoing transitions from any state on any letter sum to $1$.

For any word $w = a_1 \dots a_t$ and a run $\pi = \tau_1, \dots, \tau_t$ of $w$ in $\A_S$, we define the probability expression:
\[
p^w_{\pi} := \prod_{\tau_i \in X_S} x^S_{\tau_i}.
\]
This product captures the probability of following the specific run $\pi$, considering only the nondeterministic transitions.

\paragraph*{Primitive Word Constraints}
For each primitive word $w = a_1 \dots a_t$ for support $S$ with $M$ accepting runs $\pi_1, \dots, \pi_M$, consider its nice run on $\Gamma_{\A_S}$. Define $\texttt{Bad}(w)$ as the set of accepting runs that would diminish in probability when the word is pumped. Formally, $\pi_j \in \texttt{Bad}(w)$ if there exist words $u, v, u'$ such that:
\begin{enumerate}
    \item $w = uu'$,
    \item $uvu' \in \Ll(\A_S)$,
    \item On the nice run of $uvu'$ in $\Gamma_{\A_S}$, $v$ forms a self-loop,
    \item $\pi_j$ contains a nondeterministic transition from $S$ on the $v$ part of this self-loop.
\end{enumerate}

The set $\texttt{Bad}(w)$ is computed using the same algorithm as in \cref{thm:SR-fnfa-pspapce}. For each primitive word $w$, we define its guaranteed acceptance probability as:
\[
z_w := \sum_{\pi \not \in \texttt{Bad}(w)} p^w_{\pi}.
\]
This represents the total probability of accepting runs that do not diminish under pumping.
Each primitive word imposes a constraint:
\[
\varphi_w^S := z_w \geq \lambda.
\]

Let $T = |Q|^{M} 2^{|Q|}$ be the size of the run automaton. Any primitive word must have length at most $T$,since a loop would be formed otherwise. We enumerate all potential primitive words by considering all words $a_1 \dots a_t$ with $t \leq T$, verify which ones are actually primitive using the algorithm from \cref{thm:SR-fnfa-pspapce}, and collect them in $P_S$.
The combined formula for support $S$ is:
\[
\varphi_S := \bigwedge_{w \in P_S} \varphi_w^S
\]

Finally, the complete system of inequalities for deciding $\lambda$-resolvability is:
\[
\varphi := \exists S \subseteq \Delta \; \exists \;\{x^S_{\tau}\}_{\tau \in S}  \,(\Theta_S \land \varphi_S).
\]

This essentially asks, if there exist a support $S$ and probability assignments $\{x^S_{\tau}\}$ such that the probabilistic constraints are satisfied and every primitive word is accepted with probability at least $\lambda$?

\begin{restatable}{thm}{decidableQuantFnfa}\label{thm:constant-fnfa-lambda-resolvable}
    It is decidable whether an \fnfa is $\lambda$-resolvable.
\end{restatable}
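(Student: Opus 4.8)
The plan is to reduce $\lambda$-resolvability of a given \fnfa $\A$ to the satisfiability of a finite system of polynomial (in)equalities over the reals, which is decidable by Tarski's theorem. The decision variables will be, for each pair $(p,\sigma)$ and each transition $\tau = (p,\sigma,q)$, a probability $x_\tau \in [0,1]$ with the stochasticity constraints $\sum_{\tau = (p,\sigma,q)} x_\tau = 1$; together these determine a \pfa $\mach{P} = \mach{P}^\A_{\mach{R}}$. The support $S = \{\tau : x_\tau > 0\}$ is itself a choice; since there are only finitely many candidate supports $S \subseteq \Delta$, I would first loop over all $S$ with $\Ll(\A_S) = \Ll(\A)$ (checkable by a standard equivalence test), fix the sign pattern $x_\tau > 0 \iff \tau \in S$, and build one system per support. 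The automaton is $\lambda$-resolvable iff some support yields a satisfiable system, so it remains to express "$\mach{P}(w) \ge \lambda$ for every $w \in \Ll(\A)$ and $\mach{P}(w) < \lambda$ for every $w \notin \Ll(\A)$" as finitely many constraints. The second family is automatic once $\Ll(\A_S) = \Ll(\A)$, since transitions outside $S$ get probability $0$ and hence $\mach{P}(w)=0$ for $w \notin \Ll(\A)$. So the crux is to capture $\inf_{w \in \Ll(\A)} \mach{P}(w) \ge \lambda$ by a finite set of polynomial constraints in the $x_\tau$.

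To finitize the infimum, I would use the structure exposed by the run automaton $\Gamma_{\A_S}$ and the notion of \emph{primitive word} introduced just before the statement: a word is primitive for $S$ if its nice run in $\Gamma_{\A_S}$ has no cycle. There are only finitely many primitive words (their nice runs are simple paths in $\Gamma_{\A_S}$), and every word in $\Ll(\A_S)$ is obtained from a primitive word by pumping the cyclic segments of its nice run, which — by \cref{lem:pump-loop-word-conserve-runs} and \cref{lem:pump-bad-word-conserve-runs} — preserves the number of accepting runs but multiplies the acceptance probability of each run by the (sub-)probability of going around the corresponding cycle once, raised to the pumping exponent. Concretely, for a primitive word $w$ with nice run decomposing as $x_0 c_1 x_1 \cdots c_m x_m$ (the $c_j$ the cyclic segments, read at states of $\Gamma_{\A_S}$), and pumping vector $(j_1,\dots,j_m)\in\nats^m$, we get
\[
\mach{P}(x_0 c_1^{j_1} x_1 \cdots c_m^{j_m} x_m) \;=\; \sum_{\text{runs }\pi} \Bigl(\prod_{\text{non-cyclic part of }\pi} x_\tau\Bigr)\prod_{j=1}^{m}\Bigl(\,\text{(cycle weight of }\pi\text{ at }c_j)\Bigr)^{j_j},
\]
a sum of products of the $x_\tau$ with nonnegative-integer exponents. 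The infimum over all $w \in \Ll(\A_S)$ is thus the infimum over finitely many primitive templates and all pumping vectors; for each template the infimum over $\nats^m$ is attained in the limit by sending to $\infty$ exactly those $j_j$ whose every cycle weight (summed appropriately among the runs that survive) is $<1$, i.e.\ whose cycle contains a probabilistic choice used below probability one. This infimum is $0$ precisely when, for some primitive template, some cyclic segment carries a strictly-sub-probability cycle weight in \emph{every} accepting run passing through it — which is exactly the bad-word condition of \cref{def:bad-word} / \cref{lem:bad-support-bad-word}. Otherwise the infimum for that template is a finite value obtained by setting each pumped-to-infinity exponent's contribution to its limit (which is still a polynomial expression in the $x_\tau$, since those cycles with weight exactly $1$ contribute $1$ and are dropped, and cycles with weight $<1$ in some run but $=1$ in another need the limit taken run-by-run — here \cref{lem:fnfa-loop-one-path} guarantees a unique loop per state so the bookkeeping is well-defined).

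Putting this together, for each support $S$ with $\Ll(\A_S)=\Ll(\A)$ I would first run the \PSPACE\ bad-word test of \cref{thm:SR-fnfa-pspapce}: if $S$ admits a bad word then no $\mach{R}$ over $S$ gives $\inf \mach{P}(w)>0$, so skip $S$; otherwise enumerate the finitely many primitive templates of $\Gamma_{\A_S}$, and for each template and each "limit profile" of which cyclic segments are pumped to infinity (per accepting run), write the corresponding polynomial expression $f_{S,t}(x)$ for the limiting/minimal acceptance probability of that family; also add the finitely many constraints $\mach{P}(w)\ge\lambda$ for the primitive words themselves (a polynomial in $x$). Then $\A$ is $\lambda$-resolvable iff the disjunction over good supports $S$ of the system
\[
\bigl\{\, x_\tau > 0 \;(\tau\in S),\; x_\tau = 0 \;(\tau\notin S),\; \textstyle\sum_{\tau=(p,\sigma,q)} x_\tau = 1,\; f_{S,t}(x)\ge\lambda \text{ for all templates }t \,\bigr\}
\]
is satisfiable over $\reals$, which is decidable by \cite{Tarski1951}. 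The main obstacle I anticipate is the middle step: rigorously showing that the (uncountably many) acceptance probabilities $\{\mach{P}(w) : w\in\Ll(\A_S)\}$ have their infimum captured by finitely many polynomial expressions $f_{S,t}$. This requires carefully analyzing how pumping a cyclic segment of a nice run acts on \emph{each individual} accepting run's weight — some runs through the cycle may gain a factor $<1$ and others a factor $=1$ — and arguing that taking $j_j\to\infty$ either kills a run entirely or leaves it unchanged, so that the limiting sum is again a polynomial in the $x_\tau$; and then that the overall infimum over all pumping vectors is a minimum over the finitely many $\{0,\infty\}^m$ limit profiles (monotonicity in each $j_j$ is not immediate when different runs react differently, so one argues coordinate-wise that the infimum in $j_j$ is attained at $0$ or in the limit, holding the others fixed). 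Once that structural fact is in place, feeding the resulting finite polynomial system to the decision procedure for the first-order theory of reals finishes the proof.
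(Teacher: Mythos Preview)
Your approach is essentially the paper's: enumerate supports, then primitive words, and express the infimum of $\mach{P}(w')$ over all $w'$ contracting to a given primitive $w$ as a polynomial constraint, feeding the result to Tarski. Two places where the paper is cleaner than your outline. First, the monotonicity you flag as the main obstacle is in fact immediate: every cycle weight for every accepting run lies in $(0,1]$, so each summand in your displayed expression is nonincreasing in every pumping exponent $j_j$, and the infimum is always the all-$\infty$ limit---no enumeration of $\{0,\infty\}^m$ ``limit profiles'' is needed. Second, rather than fixing a finite list of cyclic segments $c_1,\dots,c_m$ attached to a primitive template (note your nice run of a \emph{primitive} word has no cycles by definition, so this is where the exposition wobbles), the paper quantifies existentially over \emph{all} insertable cycles at all positions and defines $\texttt{Bad}(w)$ as the set of accepting runs of $w$ that pick up a nondeterministic transition on \emph{some} such cycle; this is computable by the machinery of \cref{thm:SR-fnfa-pspapce}. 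The single constraint per primitive word is then $z_w := \sum_{\pi\notin\texttt{Bad}(w)} p^w_\pi \ge \lambda$, and the backward direction---that $z_w$ is actually attained as a limit---is shown by inserting one witnessing cycle for each $\pi\in\texttt{Bad}(w)$ and pumping them all simultaneously.
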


\begin{proof}

In the following, we prove that the system of inequalities $\varphi$, has a solution if and only if there exists a resolver $\mach{R}$ such that $\mach{P}_{\mach{R}}(w) \geq \lambda$ for all $w \in \Ll(\A)$.
Suppose there is a solution $\mathbf{x}$ to the system of inequalities. 
Let $\mach{R}[\mathbf{x}]$ be the resolver obtained from this solution with support $S$. We abuse the notation and use $\mach{P}_{\mach{R}[\mathbf{x}]}(\pi)$ to denote the probability of a run $\pi$ of $w$. For each primitive word $w$ for $S$, we have $\sum_{\pi \not \in \texttt{Bad}(w)} \mach{P}_{\mach{R}[\mathbf{x}]}(\pi) \geq \lambda$. We will show that for any word $w' \in \Ll(\A)$, either $w'$ is primitive
or there is a primitive word $w$ such that $\mach{P}_{\mach{R}[\mathbf{x}]}(w') \geq \sum_{\pi \not \in \texttt{Bad}(w)} \mach{P}_{\mach{R}[\mathbf{x}]}(\pi)$. 

If $w'$ is already primitive, this is already true since runs from $\texttt{Bad}(w')$ are taken out. 
Now if $w'$ is not primitive, then for some primitive word $w$, we can find a decomposition $w' = u_1v_1\dots u_tv_t$ such that $w = u_1\dots u_t$ is primitive and a nice run of $w''$ on $\Gamma_{\A_S}$ loops on all subwords $v_i$. We know that $w$ and $w''$ have the same number of runs. 
Since runs $\pi$ that sees nondeterministic transitions from $S$ in any of the subwords from $v_i$'s are already included in $\texttt{Bad}(w)$, and probabilities on nondeterministic transitions for runs of $w$ and $w'$ match on the $u_i$'s.
Thus, $\mach{P}_{\mach{R}[\mathbf{x}]}(w') \geq \sum_{\pi \not \in \texttt{Bad}(w)} \mach{P}_{\mach{R}[\mathbf{x}]}(\pi) \geq \lambda$.

For the other direction, let $\mach{R}$ be a stochastic resolver that resolves $\A$ with threshold $\lambda$ over support $S$.
The probabilities assigned by $\mach{R}$ trivially satisfy the constraints in $\Theta_S$.
Now, consider a primitive word $w$.
For each $\pi \in \texttt{Bad}(w)$, there exists a subword $v_{\pi}$ such that $w = u v_{\pi} u'$, and a nice run of $uv_{\pi}u'$ in $\Gamma_{\A_S}$ loops on $v_{\pi}$ and contains at least one nondeterministic transition from $S$ within the part of the run $\pi$ corresponding to $v_{\pi}$.
We can obtain a new word $w'$ that contains subwords $v_{\pi}$ for each $\pi \in \texttt{Bad}(w)$.  
Let that word be $w'$.
For a run $\pi'$ on $w'$, %
we have $\mach{P}_{\mach{R}}(\pi') = \mach{P}_{\mach{R}}(\pi) \cdot y_{\pi}$, where $\pi \in \texttt{Bad}(w)$ is the run without the parts on the looping subwords $v_{\pi''}$, and $y_{\pi}$ denote the probability assigned by $\mach{R}$ to the looping parts over all $v_{\pi''}$.
Since the loop run on $v_{\pi}$ has nondeterministic transitions, we have $y_{\pi} < 1$.
Then,
\begin{align*} 
 \mach{P}_{\mach{R}}(w') = \sum_{\pi \not \in \texttt{Bad}(w)} \mach{P}_{\mach{R}}(\pi)+ \sum_{\pi \in \texttt{Bad}(w)} \mach{P}_{\mach{R}}(\pi) \cdot y_{\pi} \geq \lambda.
\end{align*}
For each $i \in \mathbb{N}$, let $w_i'$ be the word obtained by modifying $w'$ so that each subword $v_{\pi}$, for every $\pi \in \texttt{Bad}(w)$, is repeated $i$ times. 
We have:
\begin{align*}
\mach{P}_{\mach{R}}(w'_i) = \sum_{\pi \not \in \texttt{Bad}(w)} \mach{P}_{\mach{R}}(\pi)+ \sum_{\pi \in \texttt{Bad}(w)} \mach{P}_{\mach{R}}(\pi) \cdot (y_{\pi})^{i} \ge \lambda.
\end{align*}
Hence,
\[
\lim_{i \to \infty}\mach{P}_{\mach{R}}(w'_i) = \sum_{\pi \not \in \texttt{Bad}(w)} \mach{P}_{\mach{R}}(\pi) \geq \lambda.
\]
Hence, $\mach{R}$ satisfies all constraints in $\varphi_S$, and thus satisfies the entire system of inequalities.
\end{proof}

We conclude by noting that our analysis does not extend to \nfa that are not \finamb, as \cref{lem:bad-edges-increase} fails in this setting. 
This can be demonstrated by the automata in \cref{fig:infAmbi}. This automata is not \finamb. This is because any word $a^n$ has exactly $2^n$ accepting runs, since all states are accepting and after every input letter there are exactly 2 choices for transition. Now on the full support (with all edges in the automata), for each $w_n$ in the sequence of words $\{w_n = a^n\}_{n \in \mathbb{N}}$,  $b(w_n) = n$, which is increasing. However, $\lim_{n \to \infty} \Pr(w_n) = 1$ since all runs of any word are accepting.

\begin{figure}[!htbp]
\centering
\begin{tikzpicture}[node distance=160pt]
\tikzstyle{state}=[draw,shape=circle,minimum size=20pt]
\tikzset{invisible/.style={minimum width=0mm,inner sep=0mm,outer sep=0mm}}

\node[state,initial,accepting,initial text=] (p) at (0,0) {$q_0$};
\node[state,accepting] (qf) at (3,0) {$q_f$};

\path
(p) edge[bend left, ->] node[above] {$a$} (qf)
(p) edge[loop above] node {$a$} (p)
(qf) edge[loop above] node {$a$} (qf)
(qf) edge[bend left,->] node[below] {$a$} (p);

\end{tikzpicture}
\caption{An automaton that is not \finamb.}
\label{fig:infAmbi}
\end{figure}

\section{Automata over Infinite Words}
\label{app:discussioninfinite}
In this section we consider how our results extend to $\omega$-regular languages, several of our results transfer elegantly.  The undecidability of $\lambda$-resolvability holds even for weak $\omega$-automata. For finitely-ambiguous case, we can adjust our definition of bad words to \emph{bad prefixes}: where the accepting and rejecting runs in a bad word end in final and non-final states for bad words, for bad prefixes there needs to be some infinite word that is accepted from all states the prefixes of accepting runs end in but from none of the states the prefixes of non-accepting runs end in. Checking if such a word exists can be done in $\PSPACE$ for nondeterministic parity automata (\npa). It follows that the positive resolvability problem for \finamb \npa is \PSPACE-complete. The $\lambda$-resolvability problem is also decidable for \finamb \npa.

A {\em nondeterministic parity automaton} (\npa) $\A$ is
a finite alphabet $\Sigma$,
a finite set of states $Q$,
an initial state $q_0$,
a set of transitions $\Delta\subseteq Q\times \Sigma \times Q$,
a priority function $\prio : Q \mapsto [d]$ for some $d \in \nats$.
An \npa accepts a set of infinite words in $\infwords$. 
A run $\pi$ of $\mach{A}$ on an infinite word $w = \sigma_1\sigma_2 \cdots \in \infwords$ is an infinite sequence of transitions $\tau_1\tau_2\dots$, where each $\tau_i=(p_i,\sigma_i,q_i)$ for $i\in \positives$.
A run $\pi$ naturally generates an infinite trace $\rho(\pi)$ over $[d]$, where the $i$-th element in this sequence is $\rho(p_i)$. Let $\mathit{inf}_{\rho}(\pi)$ be the numbers that occur infinitely often in $\rho(\pi)$.
The run $\pi$ is accepting if $p_1=q_0$ and the maximum element in $\mathit{inf}_{\rho}(\pi)$ is even. 
A \buchi automaton is an \npa where $\prio$ takes values in $\{1,2\}$. A \buchi automaton can be described by providing the set of states with priority $2$ as the accepting set of states. 
A \buchi automata is weak iff every SCC contains either only accepting states or only rejecting states.
Let $\accsub{\mach{A}}{w}$ denote the set of all accepting runs of $\mach{A}$ on $w$.
The notions of \kamb{k} and \finamb naturally extends to \npa based on the cardinality of $\accsub{\mach{A}}{w}$. %

A word $w \in \infwords$ induces a probability measure $\mathit{Prob}^{\mach{P}}_w$ over the measurable subsets of $\Trans^{\omega}$, defined via the $\sigma$-algebra generated by basic cylinder sets in the standard way. The acceptance probability of $w$ in $\mach{P}$ is then given by:
\[
\mach{P}(w) = \mathit{Prob}^{\mach{P}}_w(\acc{w}).
\]

Definitions of positive resolvability and $\lambda$-resolvability extends naturally to \npa.

\subsection{Undecidability of $\lambda$-Resolvability}
We have shown that the $\lambda$-resolvability problem for automata over finite words (i.e., \nfa) is undecidable. In this subsection, we extend this result to automata over infinite words by padding the end of the finite word with $\sharp^\omega$.

Given an \nfa $\mach{A} = (\Sigma, Q, q_0, F, \Delta)$, define the  \buchi automaton $\mach{A}_\omega = (\Sigma', Q', q_0, F', \Delta')$ as follows:
\begin{itemize}
    \item $\Sigma' = \Sigma \cup \{\sharp\}$, where $\sharp \notin \Sigma$.
    \item $Q' = Q \cup \{q'_f\}$, where $q'_f \notin Q$.
    \item $F' = \{q'_f\}$.
    \item $\Delta' = \Delta \cup \{(p, \sharp, q'_f) \mid p \in F \cup F'\}$.
\end{itemize}

By construction, $\mach{A}_\omega$ has a unique accepting state $q'_f$, which is also a sink state, making $\mach{A}_\omega$ a weak automaton. Suppose $\mach{P}$ is a \pfa based on $\mach{A}$. We construct the corresponding probabilistic automaton $\mach{P}_\omega$ based on $\mach{A}_\omega$ by adding transitions of the form $(p, \sharp, q'_f, 1)$ to $\mach{P}_\omega$ for all $p \in F \cup F'$.
This construction ensures that $\mach{P}(w) = \mach{P}_{\omega}(w \cdot \sharp^\omega)$ for every word $w \in \Sigma^*$. Consequently, for every $\lambda \in (0,1)$, $\mach{A}$ is \lmr{} if and only if $\mach{A}_\omega$ is \lmr{}.
Together with the undecidability of the $\lambda$-resolvability problem for \nfa (Theorem~\ref{thm:undecidable}), we obtain the following:

\begin{thm}
The $\lambda$-resolvability problem for weak nondeterministic \buchi automata is undecidable.
\end{thm}

\subsection{Decidability for Finitely-ambiguous \npa}
On the other hand, our algorithm for \fnfa also extends to \finamb \npa using similar analysis. %
Since the number of accepting runs of every word is bounded by some $k$, \cref{lem:bad-edges-increase} holds for \finamb \npa as well.

\begin{lem}
Let $S$ be a support for a \finamb{} \npa $\A$ with $\Ll(\mach{A}) = \Ll(\mach{A}_S)$. 
Then $S$ is a bad support if and only if %
    for every $b \in \nats$, there is at least one word $w \in \Ll(\A_S)$ such that $w$ has at least $b$ nondeterministic transitions from $S$ in each of its accepting runs.
\end{lem}

 Hence it is enough to track the behaviour of the \npa up to a finite prefix. Towards this goal, the notion of bad words can be adapted to \npa using \emph{bad prefixes}. 
Let $\A^q$ denote the $\omega$-automaton $\A$ with $q$ as its initial state.

\begin{defi}(Bad prefix)\label{def:bad-prefix}
Let $\A$ be a \finamb \npa and $S \subseteq \Trans$ be a support. Then a word $u \in \Sigma^*$ %
is a bad prefix for $S$ if %
for some $\ell$, there exists %
a decomposition $u = x_0y_1x_1\dots y_\ell$ with words
$x_0,\dots,x_{\ell-1} \in \Sigma^*$, $y_1,\dots,y_\ell \in \Sigma^+$,  sets of states $ Q_1, \dots, Q_{\ell}, R_1, \dots, R_{\ell} \incl Q$, states $q_1,\dots,q_\ell$ with $q_i \in Q_i$ for each $i \in [\ell]$ and an $\omega$-word $v \in \Sigma^{\omega}$ such that $uv \in \Ll(\A_S)$ and
\begin{enumerate}
    \item For each $i \in [\ell]$, all accepting runs of $uv$ from $q_0$ end in $Q_i$ after reading $x_0y_1\dots x_{i-1}$, i.e., $q \in \delta_{\A_S}(q_0, x_0y_1\dots x_{i-1})$ and $y_i\dots y_{\ell}v \in \Ll(\A_S^q)$  $\iff$ $q \in Q_i$. %
    \item An accepting run of $uv$ must be in some $q_i$ for some $i \in [\ell]$ after reading $x_0y_1x_1\dots y_i$.

    \item For each $i \in [\ell]$ and each $q \in Q_i$, every accepting run of $uv$ that reaches $q$ after reading the prefix $x_0y_1 \dots x_{i-1}$ also returns to $q$ after subsequently reading $y_i$. 
    Moreover, %
    when $q = q_i$, this run of $y_i$ from $q_i$ to $q_i$ contains at least one transition from $S$ which is nondeterministic in $\A_S$.

    \item For each $i \in [\ell]$, $R_i$ contains all states that are reached from $q_0$ after reading $x_0y_1\dots x_{i-1}$, but has no continuation to an accepting run for the remaining suffix $y_ix_i\dots y_{\ell}v$,
    i,e., $q \in \delta_{\A_S}(q_0, x_0y_1\dots x_{i-1})$ and $y_i\dots y_{\ell}v \notin \Ll(\A_S^{q}) $  $\iff$ $q \in R_i$.
    Moreover, $R_i$ is also exactly the set of states that are reached from $q_0$ after reading $x_0y_1\dots x_{i-1}y_i$, with no continuation to an accepting run for the remaining suffix $x_i\dots y_{\ell}v$,
    i.e., $q \in \delta_{\A_S}(q_0, x_0y_1\dots x_{i-1}y_i)$ and $x_i\dots y_{\ell}v \notin \Ll(\A_S^{q}) $   $\iff$ $q \in R_i$.
    
\end{enumerate}  
\end{defi}

\begin{lem}[Bad support-bad prefix]
\label{lem:badSuppNPA}
Let $\A$ be a \finamb \npa and $S$ be a support of $\A$. Then $S$ is a bad support for $\A$ iff $\Ll(\A_S) \subsetneq \Ll(A)$ or there exists a bad prefix for $S$.      
\end{lem}

Hence, our \PSPACE\ algorithm for deciding the positive resolvability of \fnfa can be extended to \finamb \npa by adding an \emph{accepting mode}. 
This mode verifies whether there exists an $\omega$-word $v$ in the set $\bigcap_{q \in A} \Ll(\A^q_S) \setminus \bigcup_{q \in R} \Ll(\A^q_S)$. 
In the branching mode, a transition to the accepting mode is always possible when $G = \emptyset$. Once the system enters the accepting mode, it remains there.

To check whether or not $\bigcap_{q \in A} \Ll(\A^q_S) \setminus \bigcup_{q \in R} \Ll(\A^q_S)$ is empty, we check if $\bigcap_{q \in A} \Ll(\A^q_S) \subseteq \bigcup_{q \in R} \Ll(\A^q_S)$ holds.
To do this, we can simply translate the individual $\A^q_S$ for all $q \in A$ to NBAs $\mathcal N_q$ and construct an NBA $\mathcal N$ recognising $\bigcup_{q \in R} \Ll(\A^q_S)$; the blow-up for this is small.

We can now check whether or not $\bigcap_{q \in A} \Ll(\mathcal N_q) \subseteq  \Ll(\mathcal N)$ holds.

For this, we complement $\mathcal N$ using \cite{DBLP:conf/stacs/Schewe09} to the complement NBA $\mathcal C$, but note that this complement -- including the states and transitions -- can be represented symbolically by $\mathcal N$ itself -- it only uses subsets of the states in $\mathcal N$ (the reachable states and a subset thereof used for a breakpoint construction) and level rankings, which are functions from the reachable states of $\mathcal N$ to natural numbers up to $2n$, where $n$ is the number of states of $\mathcal N$ \cite{DBLP:journals/tocl/KupfermanV01,DBLP:conf/stacs/Schewe09}.

We can now check whether or not $\bigcap_{q \in A} \Ll(\mathcal N_q) \cap  \Ll(\mathcal C) = \emptyset$ holds.

To do this, we can search in the exponentially larger, but polynomially represented, product space of $\bigotimes_{q \in A} {\mathcal N_q} \times  \mathcal C$ (the product automaton of ${\mathcal N_q}$ for all $q \in A$ and $\mathcal C$) to check if we can reach a product state, which can loop to itself while seeing an accepting state of each of the automata.
This can be checked in \NL\ in the explicit size of the product automaton and thus in \PSPACE\ in its representation.

\begin{restatable}{thm}{thmNPAPSPACE}
The positive resolvability problem for \finamb \npa is \PSPACE-complete.
\end{restatable}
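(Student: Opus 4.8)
The plan is to establish both a \PSPACE\ upper bound and a matching lower bound. For the upper bound, I would lift the algorithm of \cref{thm:SR-fnfa-pspapce} to the infinite-word setting using \cref{lem:badSuppNPA}. As in the finite-word case, the procedure iterates over all supports $S\incl\Trans$ with $\Ll(\A_S)=\Ll(\A)$; language equivalence of two parity automata can be checked in \PSPACE. For a fixed such $S$, we nondeterministically guess a bad prefix $u=x_0y_1x_1\dots y_\ell$ letter by letter, maintaining exactly the same polynomial-size abstraction as in \cref{thm:SR-fnfa-pspapce} (the sets $R,A,G$, the loop-mode data $L,q',R_0$, the boolean flag, and the bijection $f$). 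The only new ingredient is an \emph{accepting mode}: once $G=\emptyset$ in branching mode, we may transition (permanently) into a check that $\bigcap_{q\in A}\Ll(\A^q_S)\setminus\bigcup_{q\in R}\Ll(\A^q_S)\neq\emptyset$, which witnesses the existence of the infinite tail $v$ required by \cref{def:bad-prefix}. This nonemptiness test is itself performed symbolically: translate each $\A^q_S$ ($q\in A$) to an NBA $\mathcal N_q$ and $\bigcup_{q\in R}\Ll(\A^q_S)$ to an NBA $\mathcal N$ with only polynomial blow-up; complement $\mathcal N$ to $\mathcal C$ using the construction of Schewe~\cite{DBLP:conf/stacs/Schewe09}, whose states (reachable subsets plus level rankings bounded by $2n$) are representable in polynomial space~\cite{DBLP:journals/tocl/KupfermanV01}; then search the polynomially-represented product $\bigotimes_{q\in A}\mathcal N_q\times\mathcal C$ for a reachable state admitting a self-loop that visits an accepting state of every component, which is an \NL\ reachability question in the explicit (exponential) size of the product and hence a \PSPACE\ question in its representation. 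Since all of this runs in \NPSPACE, Savitch's theorem~\cite{Savitch1970} gives membership in \PSPACE.

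Correctness reduces to showing the guessed run exists in the graph iff a bad prefix exists, which mirrors the finite-word argument: the abstraction faithfully tracks which states lie on accepting vs.\ rejecting run-prefixes, the loop-mode bijection enforces that each $y_i$ is a genuine self-loop in the run automaton carrying a nondeterministic transition for run index $i$, and the condition $G=\emptyset$ on entering the accepting mode guarantees every accepting run has already been ``charged'' to some $y_i$ (satisfying condition~2 of \cref{def:bad-prefix}); finally the accepting-mode nonemptiness check supplies the common infinite suffix $v$ with $uv\in\Ll(\A_S)$ and the required separation of $A$ from $R$. Combined with \cref{lem:badSuppNPA} (a support is bad iff $\Ll(\A_S)\subsetneq\Ll(\A)$ or a bad prefix exists) and the fact that $\A$ is positively resolvable iff some support is not bad, this yields the \PSPACE\ decision procedure.

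For the lower bound, I would reuse essentially the hardness construction of \cref{thm:SR-fixed-k}: the finite-word $k$-ambiguous NFA built there from $k$ DFAs can be turned into a weak (hence parity) automaton by the same $\sharp^\omega$-padding trick used for \cref{thm:undecidable}'s infinite-word variant (append a fresh sink accepting state reachable on a fresh letter from every final state, and read $\sharp^\omega$ there). Universality of the union of the $k$ DFAs is \PSPACE-complete (complement of DFA intersection emptiness~\cite{Kozen1977}), and the padded automaton is positively resolvable iff the union is universal, by the same diverging-probability argument on the auxiliary $t_2,t_3$ gadget. This gives \PSPACE-hardness even for \finamb\ \npa, matching the upper bound.

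The main obstacle I anticipate is the accepting-mode emptiness check: one must verify that complementation à la Schewe can be carried out \emph{symbolically}, i.e.\ that the states and transitions of $\mathcal C$ are describable by polynomial-size objects over $\mathcal N$ so that the product search stays within polynomial space, and that the Büchi acceptance condition on the product (``simultaneously see an accepting state of every $\mathcal N_q$ and of $\mathcal C$ infinitely often'') is correctly encoded. Everything else is a routine adaptation of the finite-word machinery, with \cref{lem:badSuppNPA} doing the conceptual heavy lifting in place of \cref{lem:bad-support-bad-word}.
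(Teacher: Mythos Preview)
Your proposal is correct and follows essentially the same approach as the paper: the upper bound extends the \PSPACE\ algorithm of \cref{thm:SR-fnfa-pspapce} with an accepting mode that checks nonemptiness of $\bigcap_{q\in A}\Ll(\A^q_S)\setminus\bigcup_{q\in R}\Ll(\A^q_S)$ via symbolic Schewe complementation and product-automaton search, and the lower bound is inherited from \cref{thm:SR-fixed-k} via the $\sharp^\omega$-padding construction. The obstacle you flag---that Schewe's complement must be representable symbolically in polynomial space---is exactly what the paper addresses, noting that states of $\mathcal C$ are given by subsets of states of $\mathcal N$ together with level rankings bounded by $2n$.
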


Let the degree of ambiguity of a \finamb{} \npa $\A$ be $k$, which can be decided \cite[Proposition~21]{RabinovichT21}.
Note that in \cite{RabinovichT21}, the notion of finite ambiguity differs from ours: an automaton is called finitely ambiguous if $|\acc{w}|$ is finite for every word $w$ in the language. 
What we refer to as \finamb{} is instead termed \emph{boundedly ambiguous} in their terminology.

Let $S$ be a support of $\A$, we can construct a run automaton for $\A_S$, similar to the construction for an \nfa{} on a support. 
As in the \nfa{} case, by analysing the run automaton, we can build a finite system of inequalities and decide the $\lambda$-resolvability problem for \finamb{} \npa{}.

\begin{thm}
The $\lambda$-resolvability problem for \finamb \npa is decidable. 
\end{thm}

\section{Conclusion}
We have introduced the notion of $\lambda$-resolvability as a new measure for the degree of nondeterminism. We have shown that checking this property is undecidable for \nfa, and provided further decidability and complexity results for important special cases, in particular finitely-ambiguous automata. 

Our work suggests several directions for further investigation, particularly regarding whether there exist broader classes of \nfa for which positive or $\lambda$-resolvability are decidable. Additional work includes closing the remaining complexity gaps and determining whether checking positive resolvability for general \nfa is decidable. It is also natural to ask how these results extend to more expressive language classes, such as (visible) counter or pushdown languages.

\section*{Acknowledgment}
  This work has been supported by the European Union’s Horizon Europe
(HORIZON) programme under the Marie Skłodowska-Curie grant agreement No 101208673.
We acknowledge the support of the EPSRC through the projects EP/X042596/1 and EP/X03688X/1.
We would also like to thank the anonymous reviewers for their helpful comments and suggestions.

\bibliographystyle{alphaurl}%
\bibliography{refs}

\appendix

\section{Proof of Lemma~\ref{lemma:generaltheory}}
\lemmastructurelimit*
\begin{proof}[Proof of \cref{lemma:generaltheory}]
\hfill 

Let $T_q$ be the period of the state $q$, with $T_q = gcd\{t \mid \text{length $t$ path from $q$ to $q$ in $P$}\}$, if there is no cycle on state $q$ then let $T_q = 1$. Let $ T= \operatorname{lcm}\{T_q \mid 1\le q\le d \}$ called the global period. 

We consider $T$ copies of the Markov chain with matrix $P^T$, and each of the initial distributions $I,IP,IP^1,\dots,IP^{T-1}$, where $I \in\{0,1\}^d$ with a single state $q$ with $I(q) = 1$. It suffices to show the condition for one such initial distribution so, let $B = P^T$ be a Markov chain with initial distribution $J$. When $J = IP^k$, the value $JB^\ell(q')$ represents the probability of moving from the initial state $q$ to $q'$ in a path of length $k+T\ell$ in $P$ from $I$, and is non-zero if and only if there is at least one path of length $k+T\ell$ in $P$ from $q$ to $q'$. 

Let us reorder $B = P^T = \begin{pmatrix} Q & R_1& \dots & R_m \\ 0 & S_1\\&&\ddots \\  & & & S_m \end{pmatrix}$, where $Q$ is the sub-matrix of the transient states and $S_1,\dots S_m$ are sub-matrices of independent recurrent SCCs and $R_1,\dots,R_m$ describe the transition between the transient and recurrent states. By abuse of notation, we say state $q \in Q$ or $q\in S_i$ if $q$ is a dimension for which $B(q,q)$ is in $Q$ or $S_i$ respectively.

Each $S_i$ is itself an ergodic Markov chain, and thus has a unique stationary distribution $\tau_{i}\in[0,1]^{|S_i|}$ such that $\tau_{i}S_i = \tau_{i}$. 
Let $J_i\in[0,1]^{|S_i|}$ be the restriction of $J$ to the states corresponding to $S_i$, and let $\theta$ be the probability mass associated with this component in $J, i.e.$ such that $\sum_{q\in S_i} J_{i}(q) = \theta$. Then we have $\lim_{n\to\infty} J_{i} (S_i)^n \to \theta \tau_i$. 

For transient states, the limit distribution is zero, as all probability will eventually leak into the bottom SCCs, that is, we have that $\tau(q') = 0$ for $q' \in Q$.

However, the limit distribution in $S_i$ is formed by a linear combination of $\tau_{i}$ and the probability mass that eventually reaches $\tau_{i}$.

The absorption probabilities, $C_i\in{Q\times |S_i|}$ is the unique distribution that satisfies $C_i = C_i Q + R_i$. Intuitively, $C_i(q,q')$ is the probability mass initially at state $q$ eventually leaves the transitive states into state $q'$. Essentially,  $C_i(q,q') = \sum_{\ell=0}^\infty P^\ell(q,q')$, thus if there is some path of length $\ell$ then $C_i(q,q')$ is non-zero.

Thus the probability mass eventually reaching bottom SCC $S_i$ is formed of the initial transient probability that moves to it, and the probability mass that is already there:
\[ \theta_i = \sum_{q\in Q, q'\in S_i}J(q)C_i(q,q') + \sum_{q\in S_i}J(q).
\]

Thus $\tau(q')= \theta_i \tau_{i}(q')$ when $q'\in S_i$.

Either $\theta_i = 0$, if there is no path from $q$ to $q'$ with $J(q) > 0$ and $J(q')$ in $q'\in S_i$ and there is no initial mass in $J$. Otherwise $\theta_i \ne 0$, and since $\tau_i(q) > 0$ for all $q\in S_i$, we have $\tau(q) \ne 0$ when $\theta_i \ne 0$.
\end{proof} %

\end{document}